\renewcommand{\mathbf}[1]{{\bm{#1}}}
\def\Bbb{\mathbb} 
\def\Z{{\Bbb Z}} 
\def\Zp{{\Bbb Z}_{\geq 0}}
\newcommand{\tvomega}{\tilde{\boldsymbol{\omega}}}
\newcommand{\code}[1]{\mathsf{#1}} 
\newcommand{\matr}[1]{\mathbf{#1}} 
\newcommand{\vect}[1]{\mathbf{#1}} 
\newcommand{\set}[1]{\mathcal{#1}} 
\newcommand{\GF}[1]{\mathbb{F}_{#1}} 
\newcommand{\defeq}{\triangleq}
\newcommand{\tr}{\mathsf{T}}
\noindent \emph{Proof:}}{\hfill$\square$}
 \newtheorem{theorem}{Theorem}
 \newtheorem{lemma}[theorem]{Lemma} 
  \newtheorem{definition}[theorem]{Definition} 
  \newtheorem{example}[theorem]{Example} 
  \newtheorem{remark}[theorem]{Remark} 
\def\squarebox#1{\hbox to #1{\hfill\vbox to #1{\vfill}}}
\newcommand{\qed}{\hspace*{\fill}%
  \vbox{\hrule\hbox{\vrule\squarebox{.667em}\vrule}\hrule}\smallskip}
\newcommand{\defend}{\qed} 
\newcommand{\exend}{\qed} 
\newcommand{\remend}{\qed}
\newcommand{\vomega}{\boldsymbol{\omega}}
\newcommand{\codeCconv}{\code{C}_{\mathrm{conv}}} 
\newcommand{\codeCQC}[1]{\code{C}_{\mathrm{QC}}^{(r)}}
\newcommand{\matrHconv}{\matr{H}_{\mathrm{conv}}}
\newcommand{\matrHQC}[1]{\matr{H}_{\mathrm{QC}}^{(#1)}}
 \newcommand{\Ftwo}{\mathbb{F}_2} 
\newcommand{\Ftwoxmodr}{\Ftwo[X] / \langle X^r{-}1 \rangle}
\newcommand{\shortFtwoxmodr}{\Ftwo^{\langle r \rangle}[X]}
\newcommand{\codeCbar}{\overline{\code{C}}}
\newcommand{\codeCbarconv}{\overline{\code{C}}_{\mathrm{conv}}}
\newcommand{\codeCbarQC}[1]{\overline{\code{C}}_{\mathrm{QC}}^{(#1)}}
\newcommand{\matrHbar}{\overline{\matr{H}}}
\newcommand{\matrHbarconv}{\overline{\matr{H}}_{\mathrm{conv}}}
\newcommand{\matrHbarQC}[1]{\overline{\matr{H}}_{\mathrm{QC}}^{(#1)}}
\renewcommand{\leq}{\leqslant} 
\renewcommand{\geq}{\geqslant} 
\newcommand{\ms}{m_{\mathrm{s}}}
\newcommand{\Ts}{T_{\mathrm{s}}}
\newcommand{\nus}{\nu_{\mathrm{s}}}
\newcommand{\matrA}{\matr{A}}
\newcommand{\matrB}{\matr{B}}
\newcommand{\omatrB}{\overline{\matr{B}}}
\newcommand{\matrH}{\matr{H}}
\newcommand{\tmatrH}{\matr{\tilde H}} 
\newcommand{\matrI}{\matr{I}}
\newcommand{\matrM}{\matr{M}}
\newcommand{\matrP}{\matr{P}}
\newcommand{\matrT}{\matr{T}}
\newcommand{\matrzero}{\matr{0}}
\newcommand{\vv}{\vect{v}}
\newcommand{\ovv}{\overline{\vect{v}}}
\newcommand{\setI}{\set{I}}
\newcommand{\setL}{\set{L}}
\newcommand{\setM}{\set{M}}
\newcommand{\graph}[1]{\mathsf{#1}}
\newcommand{\graphT}{\graph{T}}
\newcommand{\tgraph}[1]{\mathsf{\tilde #1}}
\newcommand{\del}{\partial}
\newcommand{\GCCone}{\textbf{GCC1}}
\newcommand{\GCCtwo}{\textbf{GCC2}}
\newcommand{\mA}{m_{\matr{A}}}
\newcommand{\nA}{n_{\matr{A}}}
\newcommand{\R}{\mathbb{R}}
\newcounter{mytempeqcounter}
\newcommand{\bigformulatop}[2]{%
  \begin{figure*}[!t]
    \normalsize
    \setcounter{mytempeqcounter}{\value{equation}}
    \setcounter{equation}{#1}
    #2

    \setcounter{equation}{\value{mytempeqcounter}}
    \hrulefill
    \vspace*{4pt}
  \end{figure*}
}
\begin{document}

\title{Deriving Good LDPC Convolutional Codes from LDPC Block Codes}

\author{Ali E. Pusane,~\IEEEmembership{Member,~IEEE,} 
        Roxana Smarandache,~\IEEEmembership{Member,~IEEE,} \\
        Pascal O.~Vontobel,~\IEEEmembership{Member,~IEEE,} and 
        Daniel J.~Costello, Jr.,~\IEEEmembership{Life Fellow,~IEEE}%
        \thanks{Manuscript received April 25, 2010,
                revised August 17, 2010, and November 4, 2010.
          The first and fourth authors were partially
          supported by NSF Grants CCR-0205310 and CCF-0830650, and by NASA
          Grant NNX09AI66G. Additionally, the first author was supported by
          a Graduate Fellowship from the Center for Applied Mathematics,
          University of Notre Dame. The second author was supported by 
          NSF Grants DMS-0708033 and CCF-0830608, and partially supported by 
          NSF Grant CCR-0205310. Parts of the material in this paper were
          presented at the IEEE International Symposium on Information
          Theory, Nice, France, June 2007.}
        \thanks{A.~E.~Pusane was with the Department of Electrical
          Engineering, University of Notre Dame, Notre Dame, IN 46556, USA.
          He is now with the Department of Electrical and
          Electronics Engineering, Bogazici University, Bebek, Istanbul 34342,
          Turkey (email: ali.pusane@boun.edu.tr).}%
        \thanks{R.~Smarandache is with the Department of Mathematics and
          Statistics, San Diego State University, San Diego, CA 92182, USA
          (e-mail: rsmarand@sciences.sdsu.edu).}%
        \thanks{P.~O.~Vontobel is with Hewlett-Packard Laboratories, 1501 Page
          Mill Road, Palo Alto, CA 94304, USA
          (e-mail: pascal.vontobel@ieee.org).} %
        \thanks{D.~J.~Costello, Jr.\ is with the Department of Electrical
          Engineering, University of Notre Dame, Notre Dame, IN 46556, USA
          (e-mail: costello.2@nd.edu).}%
      }

\markboth{Accepted for IEEE Transactions on Information Theory}%
         {Pusane, Smarandache, Vontobel, Costello}

\maketitle

\begin{abstract}
  Low-density parity-check (LDPC) convolutional co\-des are capable of achieving
  excellent performance with low encoding and decoding complexity. In this
  paper we discuss several graph-cover-based methods for deriving families of
  time-invariant and time-varying LDPC convolutional codes from LDPC block
  codes and show how earlier proposed LDPC convolutional code constructions
  can be presented within this framework.

  Some of the constructed convolutional codes significantly outperform the
  underlying LDPC block codes. We investigate some possible reasons for this
  ``convolutional gain,'' and we also discuss the --- mostly moderate ---
  decoder cost increase that is incurred by going from LDPC block to LDPC
  convolutional codes.
\end{abstract}

\begin{IEEEkeywords}
  Block codes, convolutional codes, low-density parity-check (LDPC) codes,
  message-passing iterative decoding, pseudo-codewords, pseudo-weights,
  quasi-cyclic codes, unwrapping, wrapping.
\end{IEEEkeywords}

\section{Introduction}

\IEEEPARstart{I}{n the last} fifteen years, the area of channel coding has
been revolutionized by the practical realization of capacity-approaching
coding schemes, initiated by the invention of turbo codes and their associated
decoding algorithms in 1993 \cite{Berrou:Glavieux:Thitimajshima:93:ICC}. A few
years after the invention of the turbo coding schemes, researchers became
aware that Gallager's low-density parity-check (LDPC) block codes and
message-passing iterative decoding, first introduced
in~\cite{Gallager:62:IRE}, were also capable of capacity-approaching
performance. The analysis and design of these coding schemes quickly attracted
considerable attention in the literature, beginning with the work of
Wiberg~\cite{Wiberg:96:diss}, MacKay and Neal \cite{Mackay:Neal:96:EL}, and
many others. An irregular version of LDPC codes was first introduced by Luby
\textit{et al.} in \cite{Luby:Mitzenmacher:Shokrollahi:Spielman:Stemann:97:1,
  Luby:Mitzenmacher:Shokrollahi:Spielman:01:IT}, and analytical tools were
presented in~\cite{Chung:Richardson:Urbanke:01:IT,Richardson:Urbanke:01:IT} to
obtain performance limits for graph-based message-passing iterative decoding
algorithms, such as those suggested by Tanner~\cite{Tanner:81:IT}. For many
classes of channels, these tools have been successfully employed to design
families of irregular LDPC codes that perform very well near
capacity~\cite{Chung:Forney:Richardson:Urbanke:01:COMML,
  Richardson:Shokrollahi:Urbanke:01:IT}. Moreover, for the binary erasure
channel these tools have enabled the design of families of irregular LDPC
codes that are not only capacity-approaching but in fact capacity-achieving
(see~\cite{Oswald:Shokrollahi:02:1} and references therein).

The convolutional counterparts of LDPC block codes are LDPC convolutional
codes. Analogous to LDPC block codes, LDPC convolutional codes are defined by
sparse parity-check matrices, which allow them to be decoded using iterative
message-passing algorithms. Recent studies have shown that LDPC convolutional
codes are suitable for practical implementation in a number of different
communication scenarios, including continuous transmission and block
transmission in frames of arbitrary
size~\cite{Bates:Elliot:Swamy:06:TCAS,Bates:Chen:Dong:05:Pacific,
  Bates:Gunthorpe:Pusane:Chen:Zigangirov:Costello:05:NASA}.

Two major methods have been proposed in the literature for the construction of
LDPC convolutional codes, two methods that in fact started the field of LDPC
convolutional codes. The first method was proposed by
Tanner~\cite{Tanner:81:Patent} (see also \cite{Tanner:87:TechnicalReport,
  Tanner:Sridhara:Sridharan:Fuja:Costello:04:IT}) and exploits similarities
between quasi-cyclic block codes and time-invariant convolutional codes. The
second method was presented by Jim\'enez-Feltstr\" om and Zigangirov
in~\cite{JimenezFeltstrom:Zigangirov:99:IT} and relies on a matrix-based
unwrapping procedure to obtain the parity-check matrix of a periodically
time-varying convolutional code from the parity-check matrix of a block code.

\bigformulatop{0}
{
  \begin{align}
    \matrHbarconv
      &= \begin{bmatrix}
        \matrH_0(0) \hfill & & & & & & & \\
        \matrH_1(1) \hfill & \matrH_0(1) \hfill & & & & & & \\
        \vdots & \vdots & \ddots & & & & & \\
        \matrH_{\ms}(\ms) \hfill & \matrH_{\ms-1}(\ms)\hfill & \cdots & 
        \matrH_0(\ms) \hfill & & & & \\
        & \matrH_{\ms}(\ms\!+\!1) \hfill & \matrH_{\ms-1}(\ms\!+\!1) \hfill & 
        \cdots & \matrH_0(\ms\!+\!1) \hfill & & & \\
        & \hskip2cm\ddots & \hskip2cm\ddots & & \hskip2cm\ddots & & & \\
        & & \matrH_{\ms}(t) \hfill & \matrH_{\ms-1}(t) \hfill & 
        \cdots & \matrH_0(t)\hfill & & \\
        & & \ddots & \ddots & & & \ddots
      \end{bmatrix}.
      \label{eq:PCC}
  \end{align}
}

\setcounter{equation}{1}

The aims of this paper are threefold. First, we show that these two LDPC
convolutional code construction methods, once suitably generalized, are in
fact tightly connected. We establish this connection with the help of
so-called graph covers.\footnote{Note that graph covers have been used in two
  different ways in the LDPC code literature. On the one hand, starting with
  the work of Tanner~\cite{Tanner:99:2}, they have been used to construct
  Tanner graphs~\cite{Tanner:81:IT} of LDPC codes, and therefore parity-check
  matrices of LDPC codes. Codes constructed in this way are nowadays often
  called proto-graph-based codes, following the influential work of
  Thorpe~\cite{Thorpe:03:Report}, who formalized this code construction
  approach. On the other hand, starting with the work of Koetter and
  Vontobel~\cite{Koetter:Vontobel:03:ISTC, Vontobel:Koetter:05:IT:subm},
  finite graph covers have been used to analyze the behavior of LDPC codes
  under message-passing iterative decoding. In this paper, we will use graph
  covers in the first way, with the exception of some comments on
  pseudo-codewords.}  A second aim is to discuss a variety of LDPC
convolutional code constructions. Although the underlying principles are
mathematically quite straightforward, it is important to understand how they
can be applied to obtain convolutional codes with good performance and
attractive encoder and decoder architectures. A third aim is to make progress
towards a better understanding of where the observed ``convolutional gain''
comes from, and what its costs are in terms of decoder complexity.

The paper is structured as follows. After some notational remarks in
Section~\ref{sec:notation:1:basic}, we discuss the basics of LDPC
convolutional codes in Section~\ref{sec:ldpc:convolutional:codes:1}. In
particular, in that section we give a first exposition of the LDPC
convolutional code construction methods due to Tanner and due to
Jim\'enez-Feltstr\" om and Zigangirov. In Section~\ref{sec:graph:covers:1} we
discuss two types of graph-cover code constructions and show how they can be
used to connect the code construction methods due to Tanner and due to
Jim\'enez-Feltstr\" om and Zigangirov. Based on these insights,
Section~\ref{sec:variations:unwrapping:1} presents a variety of LDPC
convolutional code constructions (along with simulation results), and in
Section~\ref{sec:other:LDPC:code:constructions:1} we mention some similarities
and differences of these constructions compared to other recent code
constructions in the literature. Afterwards, in Section~\ref{sec:analysis} we
analyze some aspects of the constructed LDPC convolutional codes and discuss
some possible reasons for the ``convolutional gain,'' before we conclude the
paper in Section~\ref{sec:discussion}.

\subsection{Notation}
\label{sec:notation:1:basic}

We use the following sets, rings, and fields: $\Z$ is the ring of integers;
$\Zp$ is the set of non-negative integers; $\Ftwo$ is the field of size two;
$\Ftwo[X]$ is the ring of polynomials with coefficients in $\Ftwo$ and
indeterminate $X$; and $\Ftwoxmodr$ is the ring of polynomials in $\Ftwo[X]$
modulo $X^r - 1$, where $r$ is a positive integer. We also use the notational
short-hand $\shortFtwoxmodr$ for $\Ftwoxmodr$.

By $\Ftwo^n$ and $\shortFtwoxmodr^n$,
we mean, respectively, a row vector over $\Ftwo$ of length $n$ and a row
vector over $\shortFtwoxmodr$ of length $n$.
In the following, if $\matrM$ is some matrix, then $[\matrM]_{j,i}$ denotes
the entry in the $j$-th row and $i$-th column of $\matrM$. Note that we use
the convention that indices of vector entries start at $0$ (and not at $1$),
with a similar convention for row and column indices of matrix entries. (This
comment applies also to semi-infinite matrices, which are defined such that
the row and column index sets equal $\Zp$.) The only exception to this
convention are bi-infinite matrices, where the row and column index sets equal
$\Z$.  Finally, $\matrM_1 \otimes \matrM_2$ will denote the Kronecker product
of the matrices $\matrM_1$ and $\matrM_2$.

\section{LDPC Convolutional Codes}
\label{sec:ldpc:convolutional:codes:1}

This section defines LDPC convolutional codes and discusses why they are
interesting from an implementation perspective. Afterwards, we review two
popular methods of obtaining LDPC convolutional codes by unwrapping block
codes. Later in this paper, namely in Section~\ref{sec:graph:covers:1}, we
will use graph covers to show how these two methods are connected, and in
Section~\ref{sec:variations:unwrapping:1} we will see how these two methods
can be implemented and combined to obtain LDPC convolutional codes with very
good performance.

\subsection{Definition of LDPC Convolutional Codes}
\label{sec:definition:ldpc:conv:codes:1}

A semi-infinite binary parity-check matrix as in~\eqref{eq:PCC} at the top of
this page defines a convolutional code $\codeCbarconv$ as follows. Namely, it
is the set of semi-infinite sequences given by
\begin{align*}
  \codeCbarconv
    &= \Bigl\{
         \ovv \in \GF{2}^{\infty}
         \Bigm| 
         \matrHbarconv \cdot \ovv^\tr = \vect{0}^\tr
       \Bigr\},
\end{align*}
where $(\,\cdot\,)^\tr$ denotes the transpose of a vector or of a matrix.

We comment on several important aspects and properties of the code
$\codeCbarconv$ and its parity-check matrix $\matrHbarconv$.
\begin{itemize}

\item If the submatrices $\matr{H}_{i}(t)$, $i = 0, 1, \cdots, \ms$, $t \in
  \Zp$, have size $(c-b) \times c$ with $b < c$, then $\codeCbarconv$ is said
  to have (design) rate $R = b/c$.

\item The parameter $\ms$ that appears in~\eqref{eq:PCC} is called the
  syndrome former memory. It is an important parameter of $\codeCbarconv$
  because the maximal number of non-zero submatrices per block row of
  $\matrHbarconv$ is upper bounded by $\ms + 1$.

\item The quantity $\nus = (\ms + 1) \cdot c$ is called the constraint length
  of $\codeCbarconv$. It measures the maximal width (in symbols) of the
  non-zero area of $\matrHbarconv$.\footnote{Strictly speaking, the above
    formula for $\nus$ gives only an upper bound on the maximal width (in
    symbols) of the non-zero area of $\matrHbarconv$, but this upper bound
    will be good enough for our purposes.}

\item We do not require that for a given $i = 0, 1, \ldots, \ms$ the
  submatrices $\{ \matrH_i(t) \}_{t \in \Zp}$ are independent of $t$, and so
  $\codeCbarconv$ is in general a \emph{time-varying} convolutional code.

\item If there is a positive integer $\Ts$ such that $\matr{H}_i(t) =
  \matr{H}_i(t + \Ts)$ for all $i = 0, 1, \ldots, \ms$ and all $t\in \Zp$,
  then $\Ts$ is called the period of $\matrHbarconv$, and $\codeCbarconv$ is
  \emph{periodically time-varying}.

\item If the period $\Ts$ equals $1$, then $\matrHbarconv$ is called
  \emph{time-invariant}, and the parity-check matrix can be simply written as
  \begin{align}
    \hskip-0.2cm
    \matrHbarconv
      &= \left[
         \begin{array}{@{\;}c@{\;\;}c@{\;\;}c@{\;\;}c@{\;\;}c@{\;}c@{\;}}
           \matrH_0 \hfill & & & & & \\
           \matrH_1\hfill & \matrH_0 \hfill & & & & \\
           \vdots & \vdots & \ddots & & & \\
           \matrH_{\ms} \hfill & \matrH_{\ms-1} \hfill & \ldots & 
           \matrH_0 \hfill & & \\
           & \matrH_{\ms} \hfill & \matrH_{\ms-1} \hfill & \ldots &
           \matrH_0 \hfill & \\
           & \ddots & \ddots & & & \ddots
         \end{array}
         \right].
      \label{eq:TIPCC}
  \end{align}

\item If the number of ones in each row and column of $\matrHbarconv$ is small
  compared to the constraint length $\nus$, then $\codeCbarconv$ is an LDPC
  convolutional code.

\item An LDPC convolutional code $\codeCbarconv$ is called
  $(m_{\textrm{s}},J,K)$-regular if, starting from the zeroth column,
  $\matrHbarconv$ has $J$ ones in each column, and, starting from the
  $(m_{\mathrm{s}} + 1) \cdot (c-b)$-th row, $\matrHbarconv$ has $K$ ones in
  each row. If, however, there are no integers $\ms$, $J$, and $K$ such that
  $\codeCbarconv$ is $(m_{\textrm{s}},J,K)$-regular, then $\codeCbarconv$ is
  called irregular.

\end{itemize}

Of course, there is some ambiguity in the above definition. Namely, a
periodically time-varying LDPC convolutional code with parameters $\Ts$, $b$,
and $c$ can also be considered to be a periodically time-varying LDPC
convolutional code with parameters $\Ts' = \Ts/\ell$, $b' = \ell \cdot b$, $c'
= \ell \cdot c$, and $R' = b'/c' = b/c$ for any integer $\ell$ that divides
$\Ts$. In particular, for $\ell = \Ts$ we consider the code to be a
time-invariant LDPC convolutional code with parameters $b' = \Ts \cdot b$ and
$c' = \Ts \cdot c$.

\subsection{Implementation Aspects of LDPC Convolutional Codes}
\label{sec:implement}

\begin{figure*}
  \begin{center}
    \epsfig{file=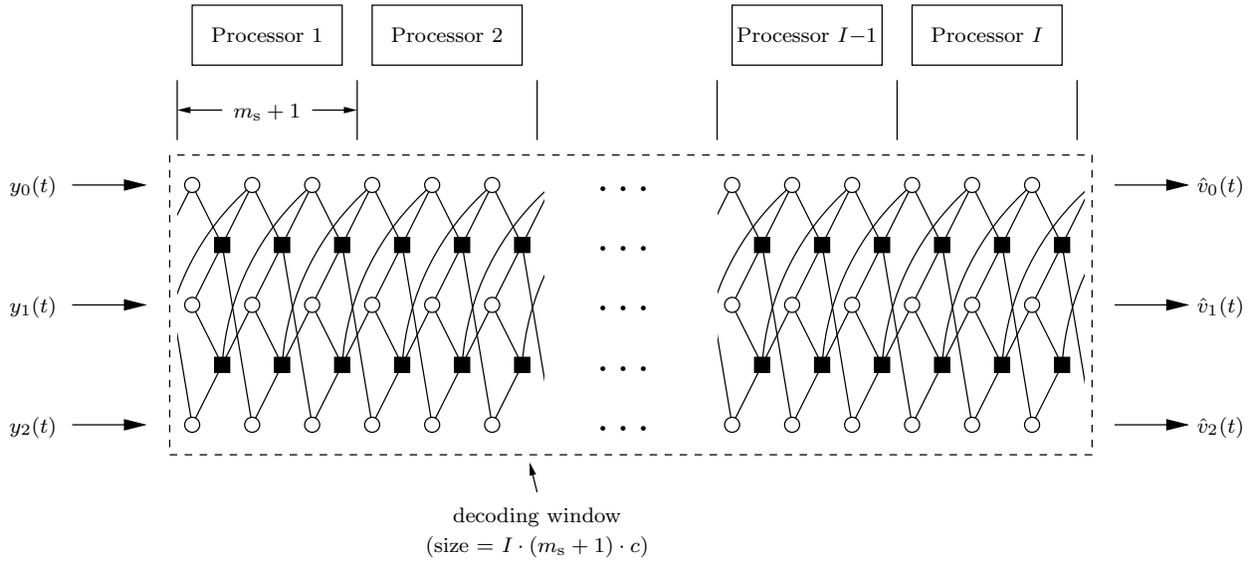, width=0.9\linewidth}
  \end{center}
  \caption{Tanner graph of a rate-$1/3$ convolutional code and an illustration
    of pipeline decoding. Here, $y_0(t)$, $y_1(t)$, $y_2(t)$ denote the stream
    of channel output symbols, and $\hat v_0(t)$, $\hat v_1(t)$, $\hat v_2(t)$
    denote the stream of decoder code bit decisions.}
  \label{fig:decoder}
\end{figure*}

An advantage of LDPC convolutional codes compared to their block code
counterparts is the so-called ``fast encoding'' property. As a result of the
diagonal shape of their parity-check matrices, many LDPC convolutional codes
enjoy simple shift register based encoders. Even randomly constructed LDPC
convolutional codes can be formed in such a way as to achieve this feature
without any loss in performance (see, e.g.,
\cite{JimenezFeltstrom:Zigangirov:99:IT,Pusane:Zigangirov:Costello:06:ICC}).
On the other hand, in order to have a simple encoding procedure for LDPC block
codes, either the block code must have some sort of structure
\cite{Zongwang:Lei:Lingqi:Lin:Fong:06:TCOM} or the parity-check matrix must be
changed to a more easily ``encodable''
form~\cite{Richardson:Urbanke:01:IT:encoding}.

The difficulty in constructing and decoding LDPC convolutional codes is
dealing with the unbounded size of the parity-check matrix. This is overcome
at the code construction step by considering only periodically time-varying or
time-invariant codes. The code construction problem is therefore reduced to
designing just one period of the parity-check matrix. For decoding, the most
obvious approach is to terminate the encoder and to employ message-passing
iterative decoding based on the complete Tanner graph representation of the
parity-check matrix of the code. Although this would be straightforward to
implement using a standard LDPC block code decoder, it would be wasteful of
resources, since the resulting (very large) block decoder would not be taking
advantage of two important aspects of the convolutional structure: namely,
that decoding can be done continuously without waiting for an entire
terminated block to be received and that the distance between two variable
nodes that are connected to the same check node is limited by the size of the
syndrome former memory.

In order to take advantage of the convolutional nature of the parity-check
matrix, a continuous sliding window message-passing iterative decoder that
operates on a window of size $I\cdot \nus$ variable nodes, where $I$ is the
number of decoding iterations to be performed, can be implemented, similar to
a Viterbi decoder with finite path memory~\cite{Lin:Costello:04:Book}. This
window size is chosen since, in a single iteration, messages from variable (or
check) nodes can be passed across a span of only one constraint length. Thus,
in $I$ iterations, messages can propagate only over a window of size $I$
constraint length. (See also the recent paper by Papaleo \textit{et
  al.}~\cite{Papaleo:Iyengar:Siegel:Wolf:Corazza:10:ITW}, which investigates
further reducing the window size for codes operating on a binary erasure
channel (BEC).) Another simplification is achieved by exploiting the fact that
a single decoding iteration of two variable nodes that are at least $\ms + 1$
time units apart can be performed independently, since the corresponding bits
cannot participate in the same parity-check equation.  This allows the
parallelization of the $I$ iterations by employing $I$ {\em independent
  identical processors} working on different regions of the parity-check
matrix simultaneously, resulting in the parallel pipeline decoding
architecture introduced in~\cite{JimenezFeltstrom:Zigangirov:99:IT}. The
pipeline decoder outputs a continuous stream of decoded data after an initial
decoding delay of $I \cdot \nus$ received symbols. The operation of this
decoder on the Tanner graph of a simple time-invariant rate-$1/3$
convolutional code with $\ms = 2$ and $\nus = 9$ is illustrated in
Figure~\ref{fig:decoder}.\footnote{For LDPC convolutional codes the parameter
  $\nus$ is usually much larger than typical values of $\nus$ for
  ``classical'' convolutional codes. Therefore the value $\nus = 9$ of the
  convolutional code shown in Figure~\ref{fig:decoder} is not typical for the
  codes considered in this paper.}

Although the pipeline decoder is capable of fully parallelizing the iterations
by using $I$ independent identical processors, employing a large number of
hardware processors might not be desirable in some applications. In such
cases, fewer processors (even one processor) can be scheduled to perform
subsets of iterations, resulting in a serial looping architecture
\cite{Bates:Chen:Gunthorpe:Pusane:Zigangirov:Costello:08:TCAS} with reduced
throughput. This ability to balance the processor load and decoding speed
dynamically is especially desirable where very large LDPC convolutional codes
must be decoded with limited available on-chip memory. Further discussion on
the implementation aspects of the pipeline decoder can be found in
\cite{Pusane:Jimenez:Sridharan:Lentmaier:Zigangirov:Costello:08:TCOM}.

\subsection{Unwrapping Techniques due to Tanner 
                    and due to Jim\'enez-Feltstr\"om and Zigangirov (JFZ)}
\label{sec:unwrapping:1}

In this subsection we discuss two approaches for deriving convolutional codes
from block codes, in particular for deriving LDPC convolutional codes from
LDPC block codes. The first technique will be the unwrapping due to Tanner and
the second will be the unwrapping due to Jim\'enez-Feltstr\"om and Zigangirov
(JFZ). In Section~\ref{sec:graph:covers:1} we will see, with the help of graph
covers, how these two -- seemingly different -- unwrapping techniques are
connected with each other.

The term {\em unwrapping}, in particular unwrapping a quasi-cyclic block code
to obtain a time-invariant convolutional code, was first introduced in a paper
by Tanner~\cite{Tanner:87:TechnicalReport} (see
also~\cite{Tanner:81:Patent}). That paper describes a link between
quasi-cyclic block codes and time-invariant convolutional codes and shows that
the free distance of the unwrapped convolutional code cannot be smaller than
the minimum distance of the underlying quasi-cyclic code. This idea was later
extended in~\cite{Levy:Costello:93:BookChapter,
  Esmaeili:Gulliver:Secord:Mahmoud:98:IT}.
  
Consider the quasi-cyclic block code $\codeCQC{r}$ defined by the polynomial
parity-check matrix $\matrHQC{r}(X)$ of size $m \times n$, i.e.,
\begin{align*}
  \codeCQC{r}
    &= \Bigl\{
         \vv(X) \in \shortFtwoxmodr^n
         \Bigm| 
         \matrHQC{r}(X) \cdot \vv(X)^\tr = \vect{0}^\tr
       \Bigr\}.
\end{align*}
Here the polynomial operations are performed modulo $X^r - 1$. The Tanner
unwrapping technique is simply based on dropping these modulo
computations. More precisely, with a quasi-cyclic block code $\codeCQC{r}$ we
associate the convolutional code
\begin{align*}
  \codeCconv
    &= \Bigl\{
         \vv(D) \in \GF{2}[D]^n
         \Bigm|
         \matrHconv(D) \cdot \vv(D)^\tr = \vect{0}^\tr
       \Bigr\}
\end{align*}
with polynomial parity-check matrix
\begin{align}
  \matrHconv(D)
    &\defeq
       \left.
         \matrHQC{r}(X)
       \right|_{X = D}.
         \label{eq:from:X:to:D:1}
\end{align}
Here the change of indeterminate from $X$ to $D$ indicates the lack of the
modulo $D^r-1$ operations. (Note that in~\eqref{eq:from:X:to:D:1} we assume
that the exponents appearing in the polynomials in $\matrHQC{r}(X)$ are
between $0$ and $r-1$ inclusive.)

It can easily be seen that any codeword $\vv(D)$ in $\codeCconv$ maps to a
codeword $\vv(X)$ in $\codeCQC{r}$ through
\begin{align*}
  \vect{v}(X)
    &\defeq
       \Big.
         \vect{v}(D) \ \mathrm{mod} \ (D^r-1)
       \Big|_{D = X},
\end{align*}
a process which was described in~\cite{Tanner:87:TechnicalReport} as the
wrapping around of a codeword in the convolutional code into a codeword in the
quasi-cyclic code. The inverse process was described as unwrapping.

Having introduced the unwrapping technique due to Tanner, we move on to
discuss the unwrapping technique due to
JFZ~\cite{JimenezFeltstrom:Zigangirov:99:IT}, which is another way to unwrap a
block code to obtain a convolutional code. The basic idea is best explained
with the help of an example.

\begin{figure} 
  \begin{center}
    \epsfig{file=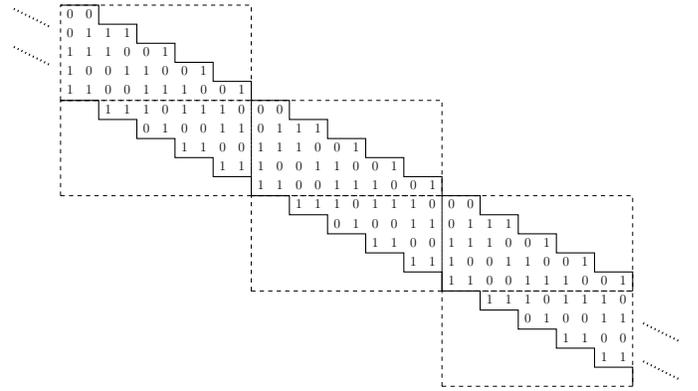, width=1.0\columnwidth}
  \end{center}
  \caption{Deriving a rate $R = 1/2$ periodically time-varying convolutional
    code with $b = 1$, $c = 2$, $\ms = 4$, $\nus = 10$, and $\Ts = 5$ from a
    rate-$1/2$ block code of length $10$.}
  \label{fig:jjj:1} 
\end{figure}

\begin{example}
  \label{example:unwrap:1}
  
  Consider the parity-check matrix
  \begin{center}
    \epsfig{file=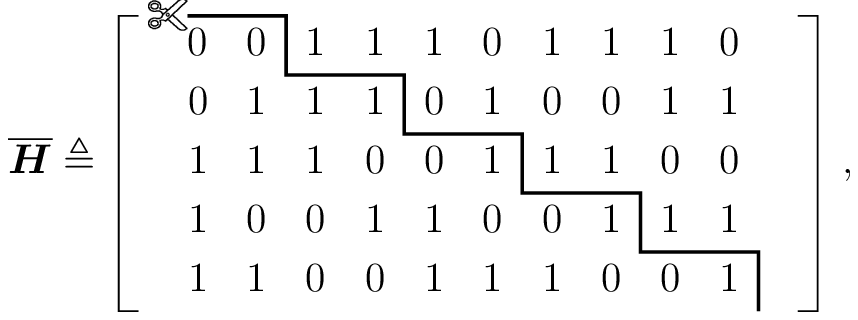, scale=0.85}
  \end{center}
  with size $m \times n = 5 \times 10$, of a rate-$1/2$ block code. As
  indicated above, we can take a pair of scissors and ``cut'' the parity-check
  matrix into two pieces, whereby the cutting pattern is such that we
  repeatedly move $c = 2$ units to the right and then $c-b = 1$ unit
  down. Having applied this ``diagonal cut,'' we repeatedly copy and paste the
  two parts to obtain the bi-infinite matrix shown in
  Figure~\ref{fig:jjj:1}. This new matrix can be seen as the parity-check
  matrix of (in general) a periodically time-varying convolutional code (here
  the period is $\Ts = 5$). It is worth observing that this new matrix has the
  same row and column weights as the matrix that we started with.\footnote{In
    practice, the codewords start at some time, so the convolutional
    parity-check matrix has effectively the semi-infinite form
    of~\eqref{eq:PCC}, and the row weights of the first $\nus - 1$ rows are
    reduced.} \exend
\end{example}

This example can be formalized easily. Namely, starting with an $m \times n$
parity-check matrix $\matrHbar$ of some block code, let $\eta \defeq \gcd(m,
n)$. Then the ``diagonal cut'' is performed by alternately moving $c = n /
\eta$ units to the right and then $c - b \defeq m/\eta$ units down (i.e., $b =
\bigl( (n-m)/\eta \bigr)$. The resulting convolutional code has rate $R=b/c$,
syndrome former memory $\ms = \eta - 1$, constraint length $\nus = (\ms+1)
\cdot c = \eta \cdot c = n$, and period $\Ts = \ms + 1 = \eta$.

Analogous to the comment at the end of
Section~\ref{sec:definition:ldpc:conv:codes:1}, it is also possible to cut the
matrix $\matrHbar$ in larger step sizes, e.g., moving $c' = \ell \cdot c$
units to the right and $c' - b' = \ell \cdot (c - b)$ units down, for any
integer $\ell$ that divides $\Ts = \eta$, thereby obtaining a periodically
time-varying convolutional code with rate $R' = b' / c' = b / c$, syndrome
former memory $\ms' = (\eta/\ell) - 1$, constraint length $\nus' = (\ms' + 1)
\cdot c' = \eta \cdot c = n$, and period $\Ts' = \ms' + 1 = \eta / \ell$. (See
also the discussion in
Section~\ref{sec:time:varying:ldpc:convolutional:codes:1}.)

In the rest of this paper, the term ``JFZ unwrapping technique'' will also
stand for the following generalization of the above procedure. Namely,
starting with a length-$n$ block code $\codeCbar$ defined by some size-$m
\times n$ parity-check matrix $\matrHbar$, i.e.,
\begin{align*}
  \codeCbar
    &= \Bigl\{
         \ovv \in \GF{2}^n
         \Bigm| 
         \matrHbar \cdot \ovv^\tr = \vect{0}^\tr
       \Bigr\},
\end{align*}
we write $\matrHbar$ as the sum $\matrHbar = \sum_{\ell \in \setL}
\matrH_{\ell} \ (\text{in $\Z$})$ of a collection of matrices $\{
\matrH_{\ell} \}_{\ell \in \setL}$. The convolutional code $\codeCbarconv$ is
then defined to be
\begin{align}
  \codeCbarconv
    &\defeq
       \Bigl\{
         \ovv \in \GF{2}^{\infty}
         \Bigm| 
         \matrHbarconv \cdot \ovv^\tr = \vect{0}^\tr
       \Bigr\},
         \label{eq:code:C:conv:JFZ:unwrapping:1}
\end{align}
where
\begin{align*} 
  \matrHbarconv
    &\defeq
      \begin{bmatrix}
         \matr{H}_0 \hfill & & & & & \\
         \matr{H}_1 \hfill & \matr{H}_0 \hfill & & & & \\
         \vdots & \vdots & \ddots & & & \\
         \matr{H}_{|\setL|-1} \hfill & \matr{H}_{|\setL|-2} \hfill & 
         \ldots & \matr{H}_0 \hfill & & \\
         & \matr{H}_{|\setL|-1} \hfill & \matr{H}_{|\setL|-2} \hfill & 
         \ldots & \matr{H}_{0} \hfill & \\
         & & \ddots & \ddots & \ddots & \ddots 
       \end{bmatrix}.
\end{align*}
Referring to the notation introduced in
Section~\ref{sec:definition:ldpc:conv:codes:1}, the matrix $\matrHbarconv$ is
the parity-check matrix of a time-invariant convolutional code. However,
depending on the decomposition of $\matrHbar$ and the internal structure of
the terms in that decomposition, the matrix $\matrHbarconv$ can also be (and
very often is) viewed as the parity-check matrix of a time-varying
convolutional code with non-trivial period $\Ts$.

In order to illustrate the generalization of the JFZ unwrapping technique that
we have introduced in the last paragraph, observe that decomposing $\matrHbar$
from Example~\ref{example:unwrap:1} as $\matrHbar = \matrH_0 + \matrH_1 \
(\text{in $\Z$})$ with
\begin{center}
  \epsfig{file=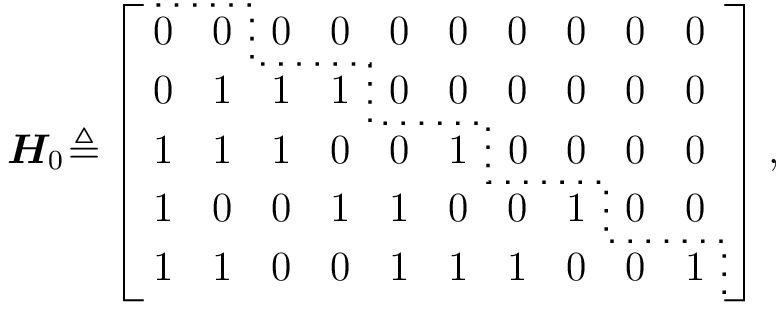, scale=0.85}
\end{center}
\begin{center}
  \epsfig{file=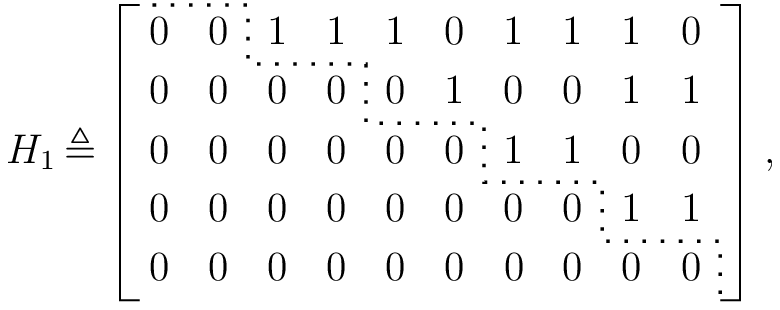, scale=0.85}
\end{center}
yields a convolutional code with parity-check matrix $\matrHbarconv$ whose
bi-infinite version equals the matrix shown in Figure~\ref{fig:jjj:1}.

\section{Tanner Graphs from Graph Covers}
\label{sec:graph:covers:1}

Having formally introduced LDPC convolutional codes in the previous section,
we now turn our attention to the main tool of this paper, namely graph
covers.

\begin{definition}[see, e.g., \cite{Stark:Terras:96:1}]
  \label{def:graph:cover:1}

  A {\em cover} of a graph $\graph{G}$ with vertex set $\set{W}$ and edge set
  $\set{E}$ is a graph $\tgraph{G}$ with vertex set $\set{\tilde W}$ and edge
  set $\set{\tilde E}$, along with a surjection $\pi: \set{\tilde W} \to
  \set{W}$ which is a graph homomorphism (i.e., $\pi$ takes adjacent vertices
  of $\tgraph{G}$ to adjacent vertices of $\graph{G}$) such that for each
  vertex $w \in \set{W}$ and each $\tilde w \in \pi^{-1}(w)$, the neighborhood
  $\del(\tilde w)$ of $\tilde w$ is mapped bijectively to $\del(w)$. A cover
  is called an {\em $M$-cover}, where $M$ is a positive integer, if $\bigl|
  \pi^{-1}(w) \bigr| = M$ for every vertex $w$ in $\set{W}$.\footnote{The
    number $M$ is also known as the degree of the cover. (Not to be confused
    with the degree of a vertex.)} \defend
\end{definition}

These graph covers will be used for the construction of new Tanner
graphs from old Tanner graphs, in particular for the construction of Tanner
graphs that represent LDPC convolutional codes.

More specifically, this section starts by discussing two simple methods to
specify a graph cover, which will be called graph-cover construction~1
(\GCCone) and graph-cover construction~2 (\GCCtwo). Although they yield
isomorphic Tanner graphs, and therefore equivalent codes, it is convenient to
have both methods at hand.\footnote{For a formal definition of code
  equivalence, see for example~\cite{MacWilliams:Sloane:77}.} As we will see,
interesting classes of Tanner graphs can be obtained by repeatedly applying
these graph-cover constructions, by mixing them, and by suitably shortening
the resulting codes. Moreover, these two graph-cover constructions will allow
us to exhibit a connection between the Tanner and the JFZ unwrapping
techniques.

\subsection{Graph-Cover Constructions}

Let $\matrA$ be an $\mA \times \nA$ matrix over $\Zp$. With such a matrix we
can associate a Tanner graph $\graphT(\matrA)$, where we draw $\nA$ variable
nodes, $\mA$ check nodes, and where there are $[\matrA]_{j,i}$ edges from the
$i$-th variable node to the $j$-th check node.\footnote{Note that we use a
  generalized notion of Tanner graphs, where parallel edges are allowed and
  are reflected by corresponding integer entries in the associated matrix.}
Given the role that the matrix $\matrA$ will play subsequently, we
follow~\cite{Thorpe:03:Report} and call the matrix $\matrA$ a proto-matrix and
the corresponding graph $\graphT(\matrA)$ a proto-graph.

The next definition introduces \GCCone\ and \GCCtwo, two ways to specify graph
covers that will be used throughout the rest of the paper.\footnote{We leave
  it as an exercise for the reader to show that the graphs constructed in
  \GCCone\ and \GCCtwo\ are indeed two instances of the graph cover definition
  in Definition~\ref{def:graph:cover:1}.}

\begin{definition}
  \label{def:graph:cover:construction:1}

  For some positive integers $\mA$ and $\nA$, let $\matrA \in \Zp^{\mA \times
    \nA}$ be a proto-matrix. We also introduce the following objects:
  \begin{itemize}
   
  \item For some finite set $\setL$, let $\{ \matrA_{\ell} \}_{\ell \in
      \setL}$ be a collection of matrices such that $\matrA_{\ell} \in
    \Zp^{\mA \times \nA}$, $\ell \in \setL$, and such that $\matrA =
    \sum_{\ell \in \setL} \matr{A}_{\ell} \text{ (in
      $\Z$)}$.

  \item For some positive integer $r$, let $\{ \matrP_{\ell} \}_{\ell \in
      \setL}$ be a collection of size-$r {\times} r$ permutation matrices.
    I.e., for every $\ell \in \setL$, the matrix $\matrP_{\ell}$ is such that
    it contains one ``$1$'' per column, one ``$1$'' per row, and ``$0$''s
    otherwise.

  \end{itemize}
  Based on the collection of matrices $\{ \matrA_{\ell} \}_{\ell \in \setL}$
  and the collection of matrices $\{ \matrP_{\ell} \}_{\ell \in \setL}$, there
  are two common ways of defining a graph cover of the Tanner graph
  $\graphT(\matrA)$. (In the following expressions, $\matrI$ is the identity
  matrix of size $r \times r$.)
  \begin{itemize}
  
  \item \textbf{Graph-cover construction 1 (\GCCone).} Consider the
    intermediary matrix
    \begin{align*}
      \matrB'
        &\defeq
           \matrA
           \otimes
           \matrI
         = \sum_{\ell \in \setL}
             \left(
               \matrA_{\ell}
               \otimes
               \matrI
             \right),
    \end{align*}
    whose Tanner graph $\graphT(\matrB')$ consists of $r$ disconnected copies
    of $\graphT(\matrA)$. This is an $r$-fold cover of $\graphT(\matrA)$,
    albeit a rather trivial one. In order to obtain an interesting $r$-fold
    graph cover of $\matrA$, for each $\ell \in \setL$, we replace
    $\matrA_{\ell} \otimes \matrI$ by $\matrA_{\ell} \otimes \matrP_{\ell}$,
    i.e., we define
    \begin{align*}
      \matrB
        &\defeq
           \sum_{\ell \in \setL}
             \left(
               \matrA_{\ell}
               \otimes
               \matrP_{\ell}
             \right).
    \end{align*}

  \item \textbf{Graph-cover construction 2 (\GCCtwo)} Consider the
    intermediary matrix
    \begin{align*}
      \omatrB'
        &\defeq
           \matrI
           \otimes
           \matrA
         = \sum_{\ell \in \setL}
             \left(
               \matrI
               \otimes
               \matrA_{\ell}
             \right),
    \end{align*}
    whose Tanner graph $\graphT(\omatrB')$ consists of $r$ disconnected copies
    of $\graphT(\matrA)$. This is an $r$-fold cover of $\graphT(\matrA)$,
    albeit a rather trivial one. In order to obtain an interesting $r$-fold
    graph cover of $\matrA$, for each $\ell \in \setL$, we replace $\matrI
    \otimes \matrA_{\ell} $ by $\matrP_{\ell} \otimes \matrA_{\ell}$, i.e., we
    define
    \begin{align*}
      \omatrB
        &\defeq
           \sum_{\ell \in \setL}
             \left(
               \matrP_{\ell}
               \otimes
               \matrA_{\ell}
             \right).
    \end{align*}

  \end{itemize}
  If all the matrices $\{ \matrP_{\ell} \}_{\ell \in \setL}$ are circulant
  matrices, then the graph covers $\graphT(\matrB)$ and $\graphT(\omatrB)$
  will be called cyclic covers of $\graphT(\matrA)$. \defend
\end{definition}

One can verify that the two graph-cover constructions in
Definition~\ref{def:graph:cover:construction:1} are such that the matrix
$\matrB$, after a suitable reordering of the rows and columns, equals the
matrix $\omatrB$.\footnote{Indeed, a possible approach to show this is to use
  the fact that $\matrA_{\ell} \otimes \matrP_{\ell}$ and $\matrP_{\ell}
  \otimes \matrA_{\ell}$ are permutation equivalent, i.e., there is a pair of
  permutation matrices $(\matr{Q}, \matr{Q}')$ such that $\matrA_{\ell}
  \otimes \matrP_{\ell} = \matr{Q} \cdot (\matrP_{\ell} \otimes \matrA_{\ell})
  \cdot \matr{Q}'$. Of course, for this to work, the pair $(\matr{Q},
  \matr{Q}')$ must be independent of $\ell \in \setL$, i.e., dependent only on
  the size of the matrices $\{ \matrA_{\ell} \}_{\ell \in \setL}$ and $\{
  \matrP_{\ell} \}_{\ell \in \setL}$. Such a $(\matr{Q}, \matr{Q}')$ pair can
  easily be found.}  This implies that $\graphT(\matrB)$ and
$\graphT(\omatrB)$ are isomorphic graphs; nevertheless, it is helpful to
define both types of constructions.

\subsection{Graph-Cover Construction Special Cases}

The following examples will help us to better understand how \GCCone\ and
\GCCtwo\ can be used to obtain interesting classes of Tanner graphs, and, in
particular, how the resulting graph-cover constructions can be visualized
graphically. Although these examples are very concrete, they are written such
that they can be easily generalized.

\begin{example}[Cyclic cover]
  \label{ex:graph:cover:construction:1}

  Consider the proto-matrix
  \begin{align}
    \matr{A}
      \defeq
        \begin{bmatrix}
          1 & 1 & 1 \\
          1 & 1 & 1
        \end{bmatrix}
          \label{eq:ex:graph:cover:constructions:1:A:1}
  \end{align}
  with $\mA = 2$ and $\nA = 3$, and whose Tanner graph $\graphT(\matrA)$ is
  shown in Figure~\ref{fig:graph:cover:constructions:1}(a). Let $\setL \defeq
  \{ 0, 1 \} \times \{ 0, 1, 2\}$, and let the collection of matrices $\{
  \matrA_{\ell} \}_{\ell \in \setL}$ be given by $\{ \matrA_{j,i} \}_{j,i}$,
  where for each $j = 0, \ldots, \mA-1$ and each $i = 0, \ldots, \nA-1$ the
  matrix $\matrA_{j,i} \in \Zp^{\mA \times \nA}$ is defined as follows
  \begin{align*}
    [\matrA_{j,i}]_{j',i'}
      &\defeq
         \begin{cases}
           [\matrA]_{j',i'} & \text{if $(j',i') = (j,i)$} \\
           0                & \text{otherwise}
         \end{cases}.
  \end{align*}
  Moreover, let $r \defeq 7$, and let the collection of matrices $\{
  \matrP_{\ell} \}_{\ell \in \setL}$ be given by $\{ \matrP_{j,i} \}_{j,i}$,
  where $\matrP_{0,0} \defeq \matrI_1$, $\matrP_{0,1} \defeq \matrI_2$,
  $\matrP_{0,2} \defeq \matrI_4$, $\matrP_{1,0} \defeq \matrI_6$,
  $\matrP_{1,1} \defeq \matrI_5$, $\matrP_{1,2} \defeq \matrI_3$, and where
  $\matrI_s$ is an $s$ times left-shifted identity matrix of size $r \times
  r$.
  \begin{itemize}

  \item Using \GCCone, we obtain the matrices
    \begin{align}
      \matrB'
        &= \matrA \otimes \matrI_0 
         = \begin{bmatrix}
             \matrI_0 & \matrI_0 & \matrI_0 \\
             \matrI_0 & \matrI_0 & \matrI_0
           \end{bmatrix}, \nonumber \\[0.25cm]
      \matrB
        &= \begin{bmatrix}
             \matrI_1 & \matrI_2 & \matrI_4 \\
             \matrI_6 & \matrI_5 & \matrI_3
           \end{bmatrix},
             \label{eq:ex:graph:cover:constructions:1:B:1}
    \end{align}
    whose Tanner graphs $\graphT(\matrB')$ and $\graphT(\matrB)$, respectively,
    are shown in Figure~\ref{fig:graph:cover:constructions:1}(b).

  \item Using \GCCtwo, we obtain the matrices
    \begin{align}
      \hskip-0.20cm
      \omatrB'
        &= \matrI_0 \otimes \matrA
         = \begin{bmatrix}
             \matrA   & \matr{0} & \matr{0} & \matr{0} &
             \matr{0} & \matr{0} & \matr{0} \\
             \matr{0} & \matrA   & \matr{0} & \matr{0} &
             \matr{0} & \matr{0} & \matr{0} \\
             \matr{0} & \matr{0} & \matrA   & \matr{0} &
             \matr{0} & \matr{0} & \matr{0} \\
             \matr{0} & \matr{0} & \matr{0} & \matrA &
             \matr{0} & \matr{0} & \matr{0} \\
             \matr{0} & \matr{0} & \matr{0} & \matr{0} &
             \matrA   & \matr{0} & \matr{0} \\
             \matr{0} & \matr{0} & \matr{0} & \matr{0} &
             \matr{0} & \matrA   & \matr{0} \\
             \matr{0} & \matr{0} & \matr{0} & \matr{0} &
             \matr{0} & \matr{0} & \matrA
           \end{bmatrix}\!\!, \nonumber \\[0.25cm]
      \hskip-0.20cm
      \omatrB
        &= \left[
           \begin{array}{@{\;}c@{\;\;}c@{\;\;}c@{\;\;}c@{\;\;}c@{\;\;}c@{\;\;}c@{\;}}
             \matr{0} & \matrA_{1,0} & \matrA_{1,1} & \matrA_{0,2} &
             \matrA_{1,2} & \matrA_{0,1} & \matrA_{0,0} \\
             \matrA_{0,0} & \matr{0} & \matrA_{1,0} & \matrA_{1,1} &
             \matrA_{0,2} & \matrA_{1,2} & \matrA_{0,1} \\
             \matrA_{0,1} & \matrA_{0,0} & \matr{0} & \matrA_{1,0} &
             \matrA_{1,1} & \matrA_{0,2} & \matrA_{1,2} \\
             \matrA_{1,2} & \matrA_{0,1} & \matrA_{0,0} & \matr{0} &
             \matrA_{1,0} & \matrA_{1,1} & \matrA_{0,2} \\
             \matrA_{0,2} & \matrA_{1,2} & \matrA_{0,1} & \matrA_{0,0} &
             \matr{0} & \matrA_{1,0} & \matrA_{1,1} \\
             \matrA_{1,1} & \matrA_{0,2} & \matrA_{1,2} & \matrA_{0,1} &
             \matrA_{0,0} & \matr{0} & \matrA_{1,0} \\
             \matrA_{1,0} & \matrA_{1,1} & \matrA_{0,2} & \matrA_{1,2} &
             \matrA_{0,1} & \matrA_{0,0} & \matr{0}
           \end{array}
           \right]\!\!,
             \label{eq:ex:graph:cover:constructions:1:oB:1}
    \end{align}
    whose Tanner graphs $\graphT(\omatrB')$ and $\graphT(\omatrB)$,
    respectively, are shown in
    Figure~\ref{fig:graph:cover:constructions:1}(c). Note that all the block
    rows and all the block columns sum (in $\Z$) to $\matrA$. (This
    observation holds in general, not just for this example.) \exend

  \end{itemize}
\end{example}

\begin{figure}
  \begin{center}
    \subfigure[Proto-graph $\graphT(\matrA)$.]
    {
       \epsfig{file=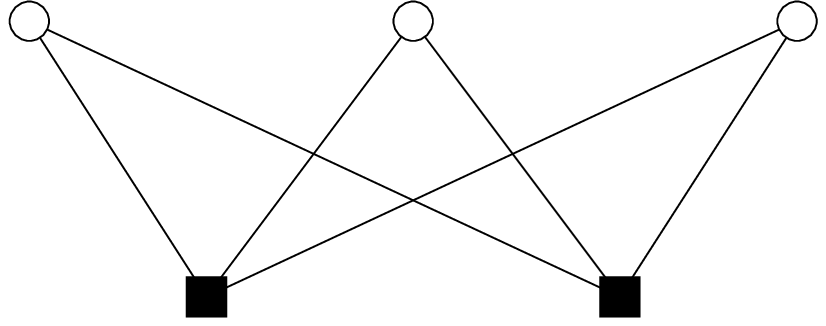, scale=0.7}
    }

    \bigskip

    \subfigure[\GCCone\ based on $\graphT(\matrA)$.
               Top: $\graphT(\matrB')$.
               Bottom: $\graphT(\matrB)$.]
    {
      \epsfig{file=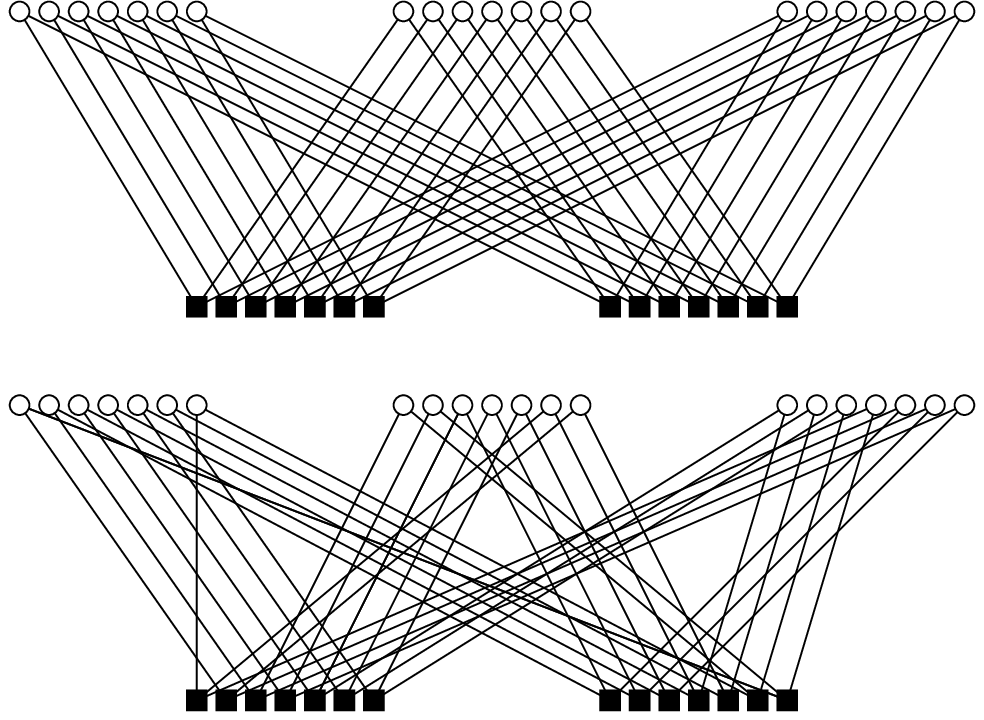, scale=0.7}
    }

    \bigskip

    \subfigure[\GCCtwo\ based on $\graphT(\matrA)$.
               Top: $\graphT(\omatrB')$.
               Bottom: $\graphT(\omatrB)$.]
    {
      \epsfig{file=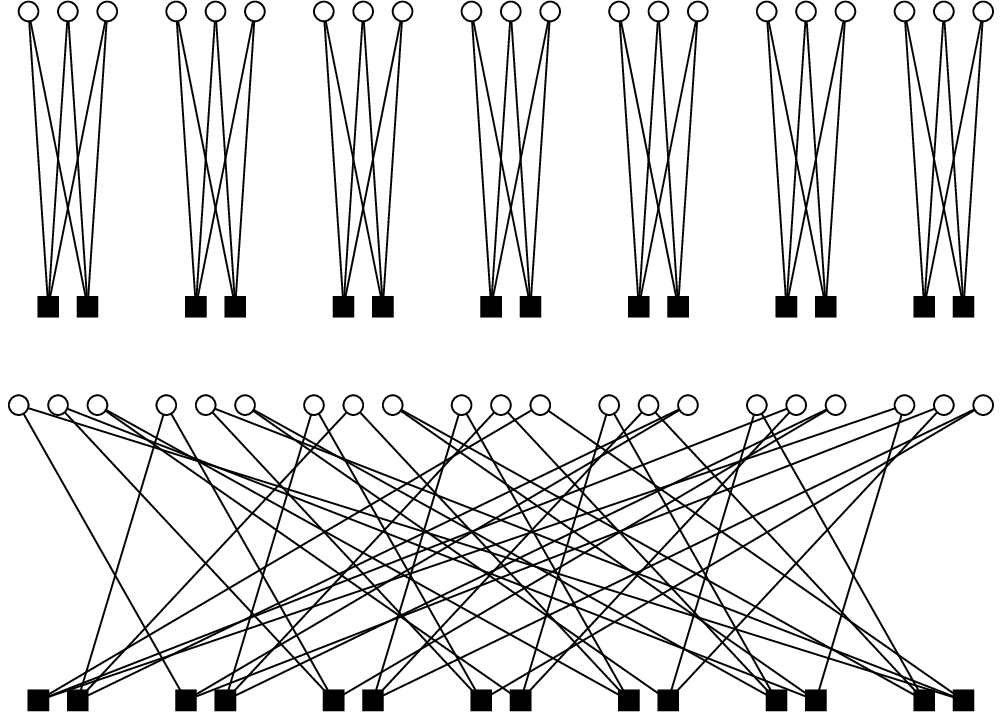, scale=0.7} 
    }
  \end{center}
  \caption{Proto-graph and graph-covers for the graph-cover constructions
    discussed in Example~\ref{ex:graph:cover:construction:1}. (Compare with
    the corresponding graphs in
    Figure~\ref{fig:graph:cover:constructions:1:variation:2}.)}
  \label{fig:graph:cover:constructions:1}
\end{figure}

\begin{figure}
  \begin{center}
    \subfigure[Proto-graph $\graphT(\matrA)$.]
    {
       \epsfig{file=fg_graph_cover_construction0_1.fig.eps, scale=0.7}
    }

    \bigskip

    \subfigure[\GCCone\ based on $\graphT(\matrA)$.
               Top: $\graphT(\matrB')$.
               Bottom: $\graphT(\matrB)$.]
    {
      \epsfig{file=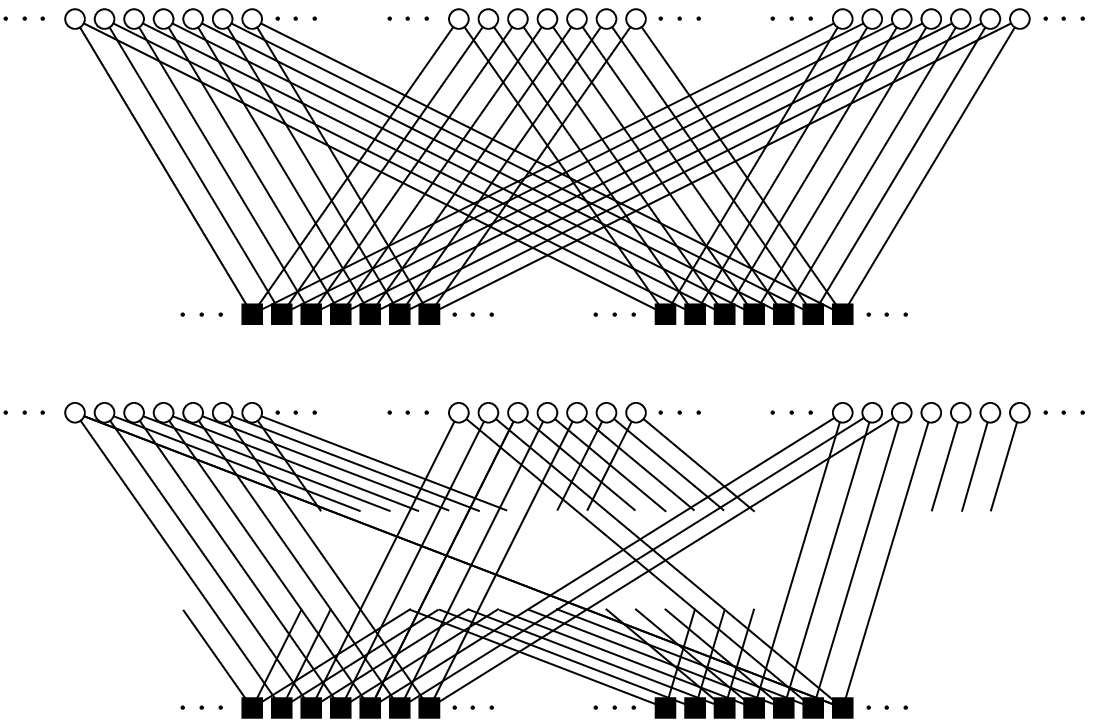, scale=0.7}
    }

    \bigskip

    \subfigure[\GCCtwo\ based on $\graphT(\matrA)$.
               Top: $\graphT(\omatrB')$.
               Bottom: $\graphT(\omatrB)$.]
    {
      \epsfig{file=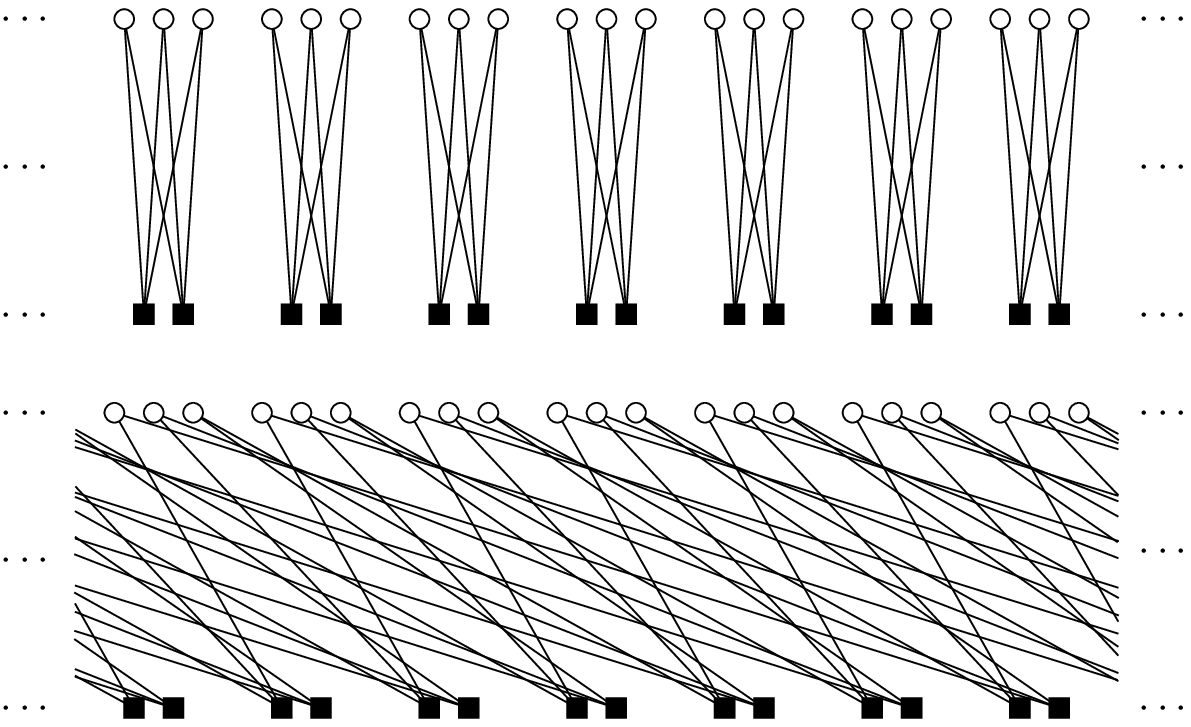, scale=0.7} 
    }
  \end{center}
  \caption{Proto-graph and graph-covers for the graph-cover constructions
    discussed in
    Example~\ref{ex:graph:cover:construction:1:infinite:1}. (Compare with the
    corresponding graphs in Figure~\ref{fig:graph:cover:constructions:1}.)}
  \label{fig:graph:cover:constructions:1:variation:2}
\end{figure}

We would like to add two comments with respect to the above example.

First, instead of defining $\matrI_s$ to be an $s$ times \emph{left}-shifted
identity matrix of size $r \times r$, we could have defined $\matrI_s$ to be
an $s$ times \emph{right}-shifted identity matrix of size $r \times
r$. Compared to the matrices and graphs described above, such a change in
definition would yield (in general) different matrices but isomorphic graphs.

Second, we note that \GCCtwo\ was termed the ``copy-and-permute'' construction
by Thorpe and his co-workers. This terminology stems from the visual
appearance of the procedure: namely, in going from
Figure~\ref{fig:graph:cover:constructions:1}(a) to
Figure~\ref{fig:graph:cover:constructions:1}(c)(top) we copy the graph several
times, and in going from Figure~\ref{fig:graph:cover:constructions:1}(c)(top)
to Figure~\ref{fig:graph:cover:constructions:1}(c)(bottom) we permute the
edges of the graph, where the permutations are done within the sets of edges
that have the same pre-image in
Figure~\ref{fig:graph:cover:constructions:1}(a).

\begin{remark}[Quasi-cyclic codes]
  \label{rem:graph:cover:construction:1:variation:1}

  Consider again the matrices that were constructed in
  Example~\ref{ex:graph:cover:construction:1}, in particular the matrix
  $\matrA$ in~\eqref{eq:ex:graph:cover:constructions:1:A:1} and its $r$-fold
  cover matrix $\matrB$
  in~\eqref{eq:ex:graph:cover:constructions:1:B:1}. Because all matrices in
  the matrix collection $\{ \matrP_{\ell} \}_{\ell \in \setL}$ are circulant,
  $\graphT(\matrB)$ represents a cyclic cover of $\graphT(\matr{A})$. Clearly,
  when seen over $\GF{2}$, the matrix $\matrHQC{r} \defeq \matrB$ is the
  parity-check matrix of a quasi-cyclic binary linear block code
  \begin{align*}
    \codeCQC{r}
      &= \Bigl\{
           \vv \in \GF{2}^{\nA \cdot r}
           \Bigm| 
           \matrHQC{r} \cdot \vv^\tr = \vect{0}^\tr
         \Bigr\}.
  \end{align*}
  Using the well-known isomorphism between the addition and multiplication of
  circulant matrices over $\GF{2}$ and the addition and multiplication of
  elements of the ring $\shortFtwoxmodr$, this code can be written
  equivalently as
  \begin{align*}
    \codeCQC{r}
      &= \Bigl\{
           \vv(X) \in \shortFtwoxmodr^{\nA}
           \Bigm| 
           \matrHQC{r}(X) \cdot \vv(X)^\tr = \vect{0}^\tr
         \Bigr\}
  \end{align*}
  with
  \begin{align*}
    \matrHQC{r}(X)
      \defeq
        \begin{bmatrix}
          X^1 & X^2 & X^4 \\
          X^6 & X^5 & X^3
        \end{bmatrix}.
  \end{align*}

  As noted above, the graphs $\graphT(\matrB)$ and $\graphT(\omatrB)$ that are
  constructed in Definition~\ref{def:graph:cover:construction:1} are
  isomorphic. Applying this observation to
  Example~\ref{ex:graph:cover:construction:1}, the matrix $\matrHbarQC{r}
  \defeq \omatrB$ with $\omatrB$
  from~\eqref{eq:ex:graph:cover:constructions:1:oB:1} is therefore the
  parity-check matrix of a binary linear block code
  \begin{align*}
    \codeCbarQC{r}
      &= \Bigl\{
           \ovv \in \GF{2}^{r \cdot \nA}
           \Bigm| 
           \matrHbarQC{r} \cdot \ovv^\tr = \vect{0}^\tr
         \Bigr\}
  \end{align*}
  that is equivalent to $\codeCQC{r}$, i.e., the codewords of $\codeCbarQC{r}$
  can be obtained from the codewords of $\codeCQC{r}$ by a suitable reordering
  of the codeword components. In terms of the matrices $\{ \matrA_{j,i}
  \}_{j,i}$, which also appear in the matrix $\omatrB$
  in~\eqref{eq:ex:graph:cover:constructions:1:oB:1}, one can verify that the
  polynomial parity-check matrix $\matrHQC{r}(X)$ can be written as
  $\matrHQC{r}(X) = \matr{0} X^0 + \matrA_{0,0} X^1 + \matrA_{0,1} X^2 +
  \matrA_{1,2} X^3 + \matrA_{0,2} X^4 + \matrA_{1,1} X^5 + \matrA_{1,0} X^6$.
  \remend
\end{remark}

\mbox{}

Besides defining finite graph covers, we can also define infinite graph
covers, as illustrated in the following examples. These infinite graph covers
will be crucial towards defining Tanner graphs of convolutional codes.

\begin{example}[Bi-infinite Toeplitz covers]
  \label{ex:graph:cover:construction:1:infinite:1}

  We continue Example~\ref{ex:graph:cover:construction:1}. However, besides
  keeping the proto-matrix $\matrA$ and the collection of matrices $\{
  \matrA_{j,i} \}_{j,i}$, we consider a different collection of matrices $\{
  \matrP_{j,i} \}_{j,i}$. Namely, we set $\matrP_{0,0} \defeq \matrT_1$,
  $\matrP_{0,1} \defeq \matrT_2$, $\matrP_{0,2} \defeq \matrT_4$,
  $\matrP_{1,0} \defeq \matrT_6$, $\matrP_{1,1} \defeq \matrT_5$,
  $\matrP_{1,2} \defeq \matrT_3$. Here $\matrT_s$ is a bi-infinite Toeplitz
  matrix with zeros everywhere except for ones in the $s$-th diagonal below
  the main diagonal, i.e., $[\matrT_s]_{j,i} = 1$ if $j = i + s$ and
  $[\matrT_s]_{j,i} = 0$ otherwise. E.g.,
  \begin{align*}
    \matrT_1
      &= \begin{bmatrix}
           \ddots & \ddots & \ddots & \ddots \\
           \ddots & \underline{0} & 0 & 0 & 0 & \\
           \ddots & 1 & \underline{0} & 0 & 0 & \ddots \\
           \ddots & 0 & 1 & \underline{0} & 0 & \ddots \\
                  & 0 & 0 & 1 & \underline{0} & \ddots \\
                  &   & \ddots & \ddots & \ddots & \ddots
         \end{bmatrix},
  \end{align*}
  where for clarity we have underlined the entries of the main diagonal.
  \begin{itemize}

  \item Using \GCCone, we obtain the matrices $\matrB' = \matrA \otimes
    \matrT_0$ and
    \begin{align}
      \matrB
        &= \begin{bmatrix}
             \matrT_1 & \matrT_2 & \matrT_4 \\
             \matrT_6 & \matrT_5 & \matrT_3
           \end{bmatrix}.
             \label{eq:ex:graph:cover:constructions:1:variation:2:B:1}
    \end{align}
    The Tanner graph $\graphT(\matrB')$, which is depicted in
    Figure~\ref{fig:graph:cover:constructions:1:variation:2}(b)(top), is
    similar to the corresponding Tanner graph in
    Figure~\ref{fig:graph:cover:constructions:1}(b)(top), but with
    bi-infinitely many independent components. Analogously, the Tanner graph
    $\graphT(\matrB)$, which is depicted in
    Figure~\ref{fig:graph:cover:constructions:1:variation:2}(b)(bottom), is
    similar to the Tanner graph shown in
    Figure~\ref{fig:graph:cover:constructions:1}(b)(bottom), but instead of
    cyclically wrapped edge connections, the edge connections are infinitely
    continued on both sides.
  
  \item Using \GCCtwo, we obtain the matrices $\omatrB' = \matrT_0 \otimes
    \matrA$ and
    \begin{align}
      \!\!\!\!\!
      \omatrB
        &= \left[\!\!
           \begin{array}{c@{\,\,\,\,\,}c@{\,}c@{\,}c@{\,}c@{\,}c@{\,}c@{\,}c@{\,}c@{\,\,\,\,\,}c}
             \ddots & \ddots & \ddots & \ddots  & 
              \ddots  & 
              \ddots & \ddots  & \ddots
             &  \\
             \ddots & \matrzero & \matrzero & \matrzero & \matrzero  & 
              \matrzero  &
              \matrzero & \matrzero & \matrzero \\
             \ddots & \matrA_{0,0} & \matrzero & \matrzero & \matrzero  & 
              \matrzero  &
              \matrzero & \matrzero & \matrzero & \ddots \\
             \ddots & \matrA_{0,1} & \matrA_{0,0} & \matrzero & \matrzero  & 
              \matrzero  &
              \matrzero & \matrzero & \matrzero & \ddots \\
             \ddots & \matrA_{1,2} & \matrA_{0,1} & \matrA_{0,0} & \matrzero 
              & 
              \matrzero  &
              \matrzero & \matrzero & \matrzero & \ddots \\
             \ddots & \matrA_{0,2} & \matrA_{1,2} & \matrA_{0,1}  & 
              \matrA_{0,0}  & 
              \matrzero  &
              \matrzero & \matrzero & \matrzero & \ddots \\
             \ddots & \matrA_{1,1} & \matrA_{0,2} & \matrA_{1,2}  & 
              \matrA_{0,1}  & 
              \matrA_{0,0}  &
              \matrzero & \matrzero & \matrzero & \ddots \\
             \ddots & \matrA_{1,0} & \matrA_{1,1} & \matrA_{0,2}  & 
              \matrA_{1,2}  &
              \matrA_{0,1}  &
              \matrA_{0,0} & \matrzero & \matrzero & \ddots \\
             &  \matrzero & \matrA_{1,0} & \matrA_{1,1} & \matrA_{0,2}  &
              \matrA_{1,2}  &
              \matrA_{0,1} & \matrA_{0,0} & \matrzero  & 
              \ddots \\
             & &  \ddots & \ddots & \ddots & \ddots  & 
              \ddots  &
              \ddots & \ddots & \ddots
           \end{array}\!\!
           \right]\!.
             \label{eq:ex:graph:cover:constructions:1:variation:2:oB:1}
    \end{align}
    The Tanner graph $\graphT(\omatrB')$, which is depicted in
    Figure~\ref{fig:graph:cover:constructions:1:variation:2}(c)(top), is
    similar to the corresponding Tanner graph in
    Figure~\ref{fig:graph:cover:constructions:1}(c)(top), but with
    bi-infinitely many independent components. Analogously, the Tanner graph
    $\graphT(\omatrB)$, which is depicted in
    Figure~\ref{fig:graph:cover:constructions:1:variation:2}(c)(bottom), is
    similar to the Tanner graph shown in
    Figure~\ref{fig:graph:cover:constructions:1}(c)(bottom), but instead of
    cyclically wrapped edge connections, the edge connections are infinitely
    continued on both sides. \exend

  \end{itemize}
\end{example}

Although it is tempting to replace in
Example~\ref{ex:graph:cover:construction:1:infinite:1} the bi-infinite
Toeplitz matrices $\matrT_s$ (whose row and column index sets equal $\Z$) by
semi-infinite Toeplitz matrices (whose row and column index sets equal $\Zp$),
note that the resulting Tanner graphs $\graphT(\matrB)$ and $\graphT(\omatrB)$
would then in general not be graph covers of $\graphT(\matrA)$. This follows
from the fact that semi-infinite Toeplitz matrices are not permutation
matrices (except for $\matrT_0$), and so some vertex degrees of
$\graphT(\matrB)$ and $\graphT(\omatrB)$ would not equal the corresponding
vertex degrees in $\graphT(\matrA)$.\footnote{As will be clear from the
  discussion later on, in this paper we take an approach where in a first step
  we construct bi-infinite Tanner graphs that are ``proper'' graph covers and
  where in a second step we obtain semi-infinite Tanner graphs by applying a
  ``shortening'' procedure to these bi-infinite Tanner graphs. Alternatively,
  one could also choose an approach based on ``improper'' graph covers. Both
  approaches have their advantages and disadvantages; we preferred to take the
  first approach.}

\begin{remark}
  \label{rem:graph:cover:connections:1}

  It turns out that the Tanner graphs in
  Figure~\ref{fig:graph:cover:constructions:1:variation:2} are infinite graph
  covers of the Tanner graphs in
  Figure~\ref{fig:graph:cover:constructions:1}. More precisely, the Tanner
  graphs $\graphT(\matrB')$, $\graphT(\matrB)$, $\graphT(\omatrB')$,
  $\graphT(\omatrB)$ in
  Figure~\ref{fig:graph:cover:constructions:1:variation:2} are graph covers of
  the corresponding Tanner graphs $\graphT(\matrB')$, $\graphT(\matrB)$,
  $\graphT(\omatrB')$, $\graphT(\omatrB)$ in
  Figure~\ref{fig:graph:cover:constructions:1}. For the Tanner graphs
  $\graphT(\matrB')$ in Figures~\ref{fig:graph:cover:constructions:1}(b)(top)
  and~\ref{fig:graph:cover:constructions:1:variation:2}(b)(top) and the Tanner
  graphs $\graphT(\omatrB')$ in
  Figures~\ref{fig:graph:cover:constructions:1}(c)(top)
  and~\ref{fig:graph:cover:constructions:1:variation:2}(c)(top), this statement
  is easily verified by inspection.

  To verify that the Tanner graph $\graphT(\omatrB)$ in
  Figure~\ref{fig:graph:cover:constructions:1:variation:2}(c)(bottom) is a
  graph cover of $\graphT(\omatrB)$ in
  Figure~\ref{fig:graph:cover:constructions:1}(c)(bottom), we apply \GCCtwo\
  with proto-matrix $\matrA$, with resulting matrix $\omatrB$, with the set
  $\setL$, with the collection of matrices $\{ \matrA_{\ell} \}_{\ell \in
    \setL}$, and with the collection of permutation matrices $\{ \matrP_{\ell}
  \}_{\ell \in \setL}$ as follows. Namely, we let the proto-matrix $\matrA$ be
  the matrix from~\eqref{eq:ex:graph:cover:constructions:1:oB:1} (there
  denoted by $\omatrB$), we let the resulting matrix $\omatrB$ be the matrix
  in~\eqref{eq:ex:graph:cover:constructions:1:variation:2:oB:1} (there denoted
  by $\omatrB$), we define $\setL \defeq \{ 0, 1 \}$, we select
  \begin{align}
    \matrA_0
      &= \left[
         \begin{array}{@{\;}c@{\;\;}c@{\;\;}c@{\;\;}c@{\;\;}c@{\;\;}c@{\;\;}c@{\;}}
           \matr{0} & \matr{0} & \matr{0} & \matr{0} &
           \matr{0} & \matr{0} & \matr{0} \\
           \matrA_{0,0} & \matr{0} & \matr{0} & \matr{0} & 
           \matr{0} & \matr{0} & \matr{0} \\
           \matrA_{0,1} & \matrA_{0,0} & \matr{0} & \matr{0} &
           \matr{0} & \matr{0} & \matr{0} \\
           \matrA_{1,2} & \matrA_{0,1} & \matrA_{0,0} & \matr{0} &
           \matr{0} & \matr{0} & \matr{0} \\
           \matrA_{0,2} & \matrA_{1,2} & \matrA_{0,1} & \matrA_{0,0} &
           \matr{0} & \matr{0} & \matr{0} \\
           \matrA_{1,1} & \matrA_{0,2} & \matrA_{1,2} & \matrA_{0,1} &
           \matrA_{0,0} & \matr{0} & \matr{0} \\
           \matrA_{1,0} & \matrA_{1,1} & \matrA_{0,2} & \matrA_{1,2} &
           \matrA_{0,1} & \matrA_{0,0} & \matr{0}
         \end{array}
         \right],
           \label{eq:rem:graph:cover:connections:1:A:0} \\[0.25cm]
    \matrA_1
      &= \left[
         \begin{array}{@{\;}c@{\;\;}c@{\;\;}c@{\;\;}c@{\;\;}c@{\;\;}c@{\;\;}c@{\;}}
           \matr{0} & \matrA_{1,0} & \matrA_{1,1} & \matrA_{0,2} &
           \matrA_{1,2} & \matrA_{0,1} & \matrA_{0,0} \\
           \matr{0} & \matr{0} & \matrA_{1,0} & \matrA_{1,1} &
           \matrA_{0,2} & \matrA_{1,2} & \matrA_{0,1} \\
           \matr{0} & \matr{0} & \matr{0} & \matrA_{1,0} &
           \matrA_{1,1} & \matrA_{0,2} & \matrA_{1,2} \\
           \matr{0} & \matr{0} & \matr{0} & \matr{0} &
           \matrA_{1,0} & \matrA_{1,1} & \matrA_{0,2} \\
           \matr{0} & \matr{0} & \matr{0} & \matr{0} &
           \matr{0} & \matrA_{1,0} & \matrA_{1,1} \\
           \matr{0} & \matr{0} & \matr{0} & \matr{0} &
           \matr{0} & \matr{0} & \matrA_{1,0} \\
           \matr{0} & \matr{0} & \matr{0} & \matr{0} &
           \matr{0} & \matr{0} & \matr{0}
         \end{array}
         \right],
           \label{eq:rem:graph:cover:connections:1:A:1}
  \end{align}
  and we select $\matrP_0 = \matrT_0$ and $\matrP_1 = \matrT_1$, where
  $\matrT_s$ was defined in
  Example~\ref{ex:graph:cover:construction:1:infinite:1}. Clearly, $\matrA =
  \matrA_0 + \matrA_1 \ \text{(in $\Z$)}$.\footnote{Note that a non-zero block
    diagonal of $\matrA$ would be put in $\matrA_0$.} With this we have
  \begin{align*}
    \omatrB
      &= \matrP_0 \otimes \matrA_0
         +
         \matrP_1 \otimes \matrA_1
       = \left[\!\!
         \begin{array}{c@{\;\;\;\;}c@{\;\;}c@{\;\;}c@{\;\;}c@{\;\;}c}
           \ddots & \ddots & \ddots & \ddots  & 
            &  \\
           \ddots & \matrA_0 & \matr{0} & \matr{0}  & 
            \matr{0}  & \\
           \ddots & \matrA_1 & \matrA_0 & \matr{0}  & 
            \matr{0} &
            \ddots \\
           \ddots & \matr{0} & \matrA_1 & \matrA_0 
            & 
            \matr{0}  &
            \ddots \\
            & \matr{0} & \matr{0}  & 
            \matrA_1  & 
            \matrA_0  &
            \ddots \\
           & &  \ddots & \ddots & 
            \ddots  &
            \ddots 
         \end{array}\!\!
         \right] \!\! ,
  \end{align*}
  and one can verify that this matrix equals the matrix
  in~\eqref{eq:ex:graph:cover:constructions:1:variation:2:oB:1} (there denoted
  by $\omatrB$), which means that $\graphT(\omatrB)$ is indeed an infinite
  cover of $\graphT(\matrA)$. We remark that, interestingly, in this process
  we have shown how a certain \GCCtwo\ graph cover of a proto-matrix can be
  written as a \GCCtwo\ graph cover of a certain \GCCtwo\ graph cover of that
  proto-matrix.

  Finally, a similar argument shows that the Tanner graph $\graphT(\matrB)$ in
  Figure~\ref{fig:graph:cover:constructions:1:variation:2}(b)(top) is a
  graph cover of the Tanner graph in
  Figure~\ref{fig:graph:cover:constructions:1}(b)(top), also denoted by
  $\graphT(\matrB)$. \remend
\end{remark}

\begin{figure*}
  \begin{center}
    \begin{minipage}[c]{\linewidth}
      \subfigure[First decomposition of the matrix $\matrA$
                 into the matrices $\matrA_0$ and $\matrA_1$.]
      {
        \begin{minipage}[c]{0.48\linewidth}
          \begin{center}
            \epsfig{file=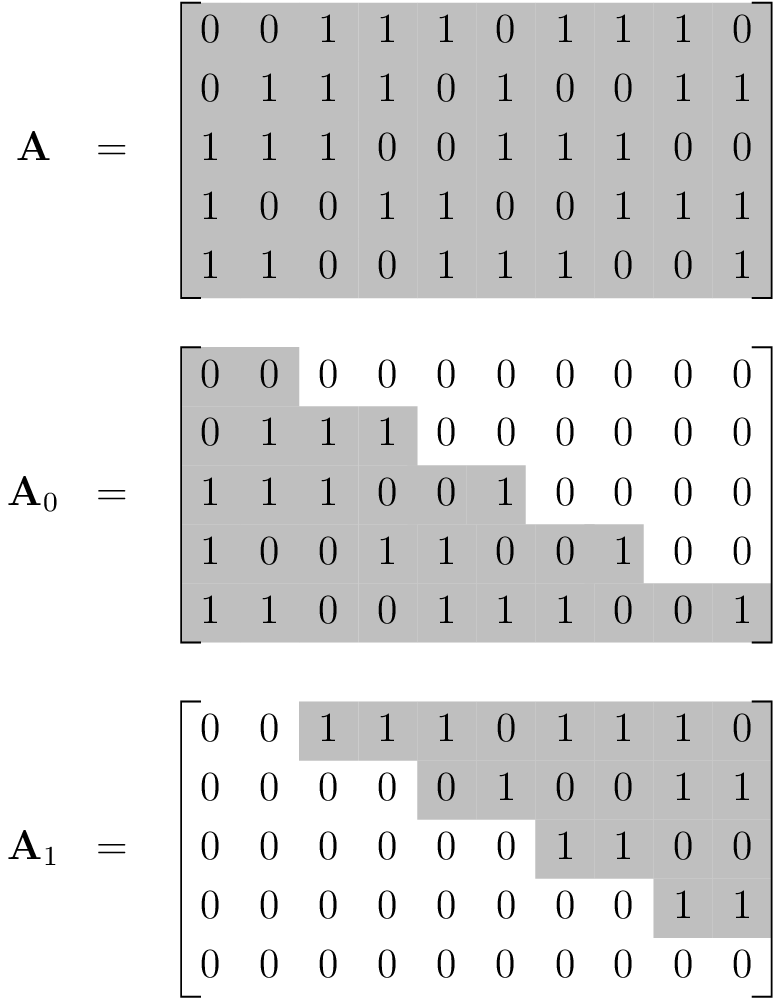,
                    width=0.60\linewidth}

            \mbox{}
          \end{center}
        \end{minipage}
      }
      \begin{minipage}[c]{0.02\linewidth}
      \end{minipage}
      \subfigure[Second decomposition of the matrix $\matrA$
                 into the matrices $\matrA_0$ and $\matrA_1$.]
      {
        \begin{minipage}[c]{0.48\linewidth}
          \begin{center}
            \epsfig{file=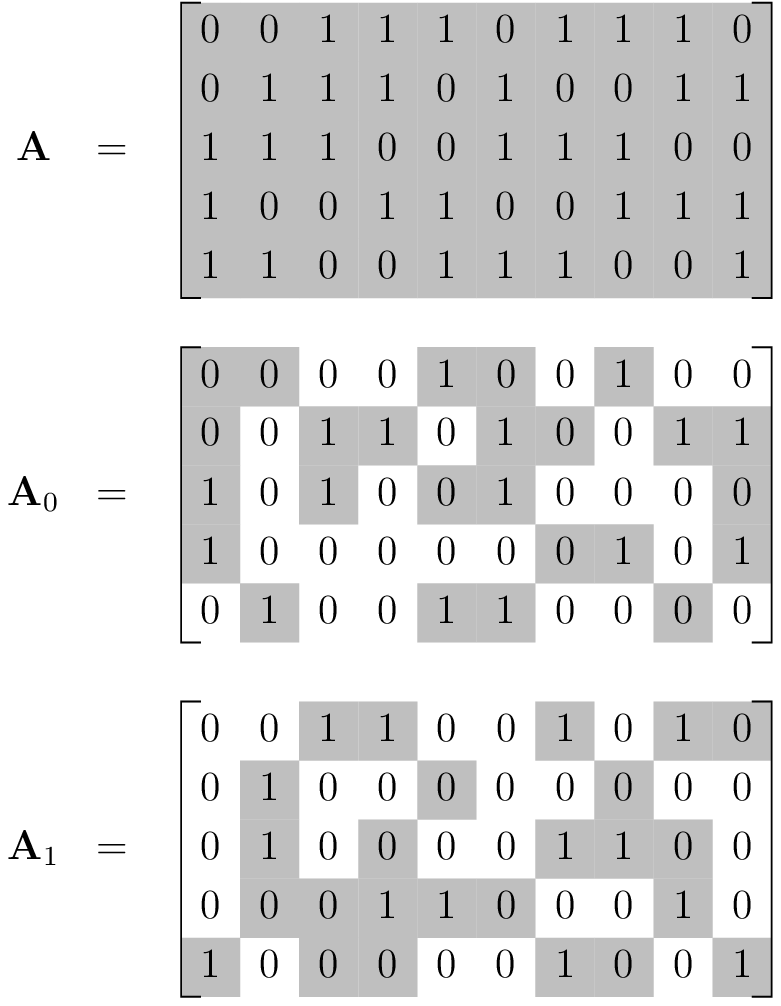,
                    width=0.60\linewidth}

            \mbox{}
          \end{center}
        \end{minipage}
      }

      \subfigure[Part of the matrix $\omatrB$ based on the 
                 first decomposition of $\matrA$.]
      {
        \begin{minipage}[c]{0.48\linewidth}
          \begin{center}
            \epsfig{file=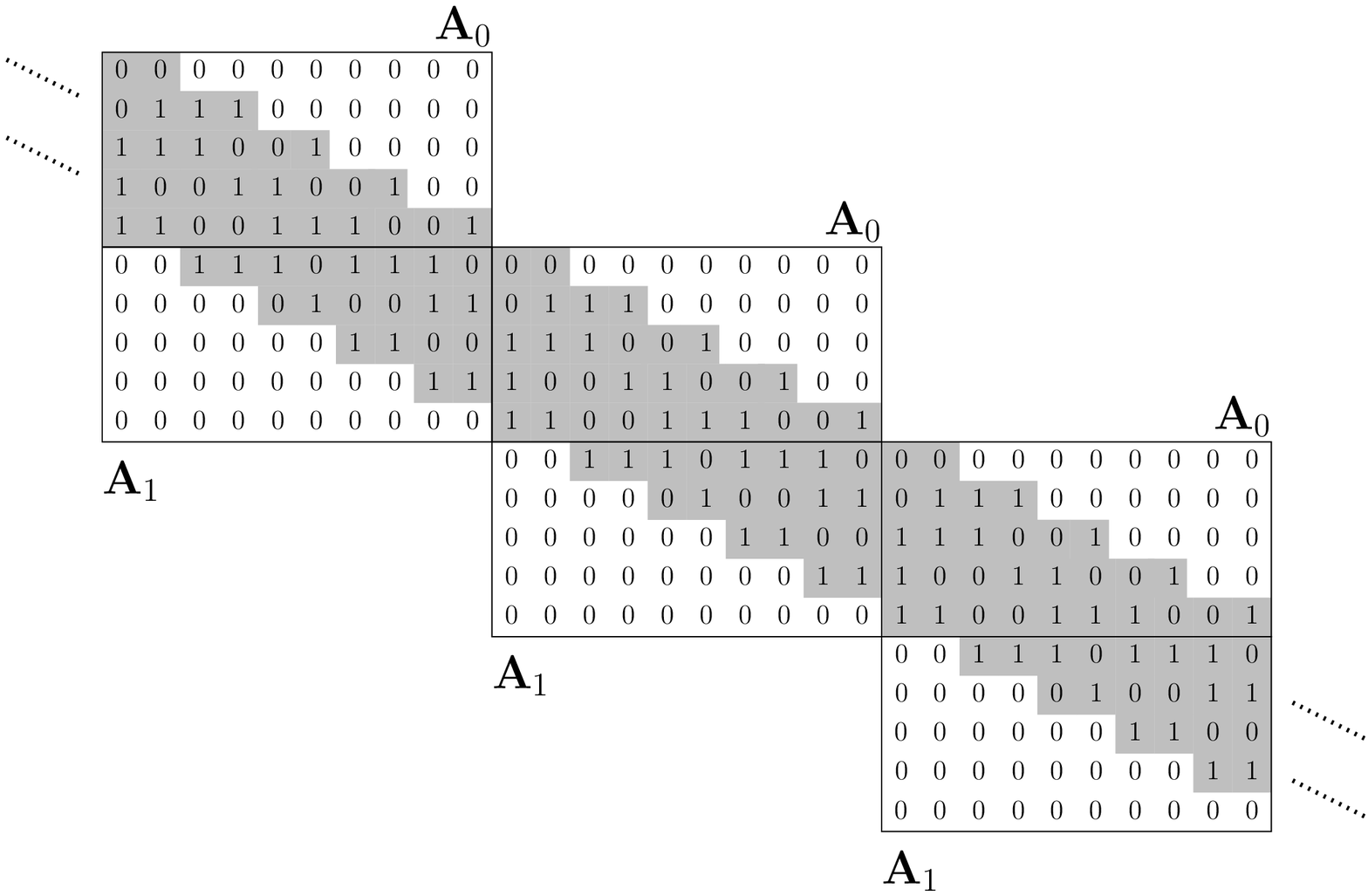,
                    width=\linewidth}

            \mbox{}
          \end{center}
        \end{minipage}
      }
      \begin{minipage}[c]{0.02\linewidth}
      \end{minipage}
      \subfigure[Part of the matrix $\omatrB$ based on the 
                 second decomposition of $\matrA$.]
      {
        \begin{minipage}[c]{0.48\linewidth}
          \begin{center}
            \epsfig{file=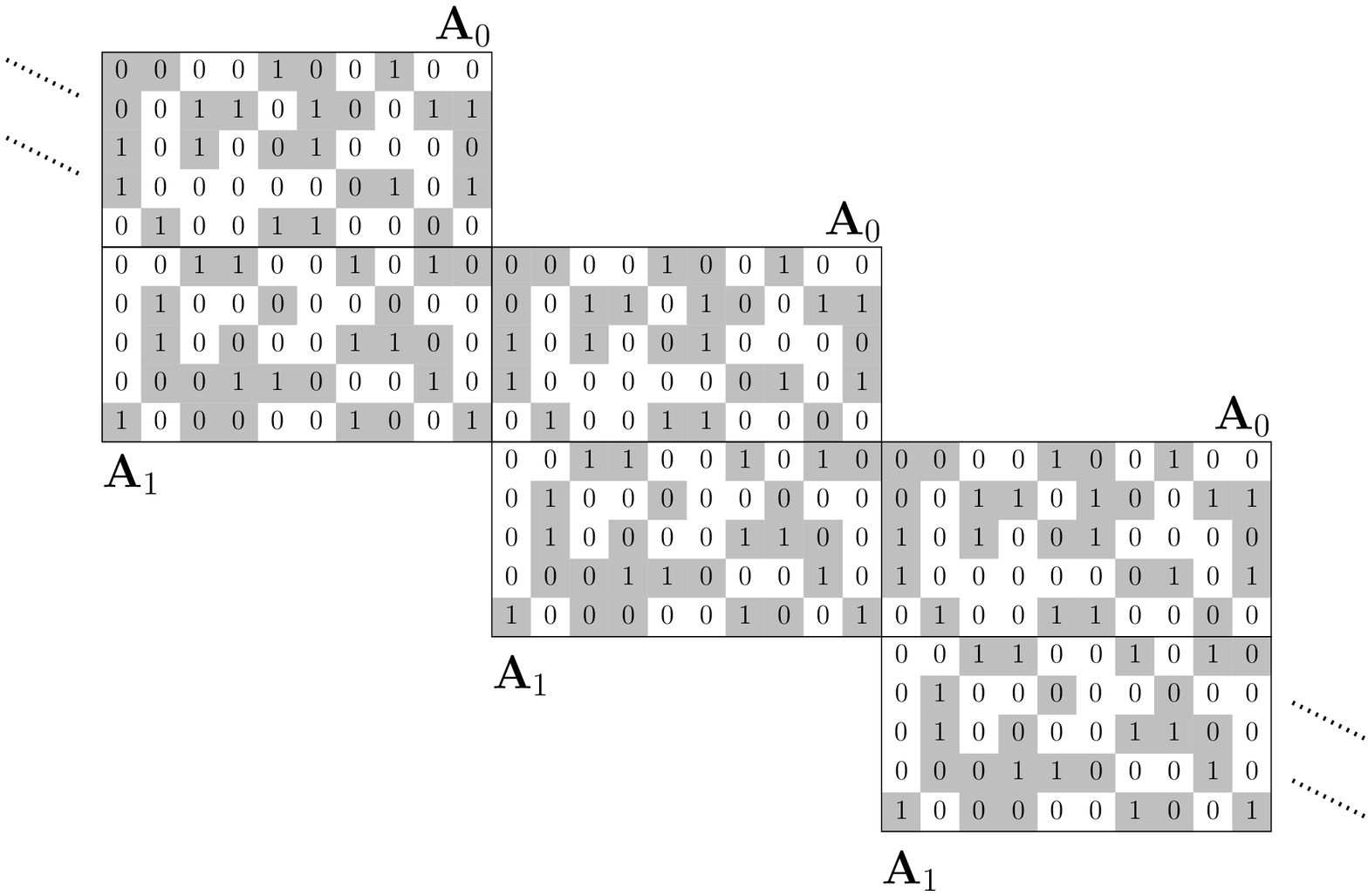,
                    width=\linewidth}

            \mbox{}
          \end{center}
        \end{minipage}
      }
    \end{minipage}
  \end{center}
  \caption{Matrices appearing in
    Example~\ref{ex:graph:cover:construction:2}. (See main text for details.)}
  \label{fig:graph:cover:constructions:2}
\end{figure*}

There are many other ways of writing a proto-matrix $\matrA$ as a sum of a
collection of matrices $\{ \matrA_{\ell} \}_{\ell \in \setL}$. The next
example discusses two such possibilities.

\begin{example}
  \label{ex:graph:cover:construction:2}

  Consider the proto-matrix
  \begin{align*}
    \matr{A}
      \defeq
        \begin{bmatrix}
          0 & 0 & 1 & 1 & 1 & 0 & 1 & 1 & 1 & 0 \\
          0 & 1 & 1 & 1 & 0 & 1 & 0 & 0 & 1 & 1 \\
          1 & 1 & 1 & 0 & 0 & 1 & 1 & 1 & 0 & 0 \\
          1 & 0 & 0 & 1 & 1 & 0 & 0 & 1 & 1 & 1 \\
          1 & 1 & 0 & 0 & 1 & 1 & 1 & 0 & 0 & 1
        \end{bmatrix}
  \end{align*}
  that is shown in Figure~\ref{fig:graph:cover:constructions:2}(a), and that
  also appeared in Example~\ref{example:unwrap:1}. Its Tanner graph
  $\graphT(\matrA)$ is $(3,6)$-regular, i.e., all variable nodes have degree
  $3$ and all check nodes have degree $6$. Let $\setL = \{ 0, 1 \}$, and
  consider the collection of matrices $\{ \matrP_{\ell} \}_{\ell \in \setL}$
  with $\matrP_0 = \matrT_0$ and $\matrP_1 = \matrT_1$, where the matrices
  $\matrT_0$ and $\matrT_1$ are defined as in
  Example~\ref{ex:graph:cover:construction:1:infinite:1}. In the following, we
  look at two different choices of the collection of matrices $\{
  \matrA_{\ell} \}_{\ell \in \setL}$.
  \begin{itemize}
 
  \item Figure~\ref{fig:graph:cover:constructions:2}(c) shows a typical part
    of the matrix $\omatrB$ that is obtained when \GCCtwo\ is used to construct
    a graph cover of $\matrA$ with the collection of matrices $\{
    \matrA_{\ell} \}_{\ell \in \setL}$ defined as shown in
    Figure~\ref{fig:graph:cover:constructions:2}(a).

  \item Figure~\ref{fig:graph:cover:constructions:2}(d) shows a typical part
    of the matrix $\omatrB$ when \GCCtwo\ is used to construct a graph cover
    of $\matrA$ with the collection of matrices $\{ \matrA_{\ell} \}_{\ell \in
      \setL}$ defined as shown in
    Figure~\ref{fig:graph:cover:constructions:2}(b).

  \end{itemize}
  Overall, because of the choice of the collection $\{ \matrP_{\ell} \}_{\ell
    \in \setL}$, the support of both matrices $\omatrB$ possesses a banded
  diagonal structure. Moreover, the different choices of the collection $\{
  \matrA_{\ell} \}_{\ell \in \setL}$ leads to a somewhat narrower banded
  diagonal structure in the first case compared to the second case. \exend
\end{example}

The next example makes a crucial observation; namely, it shows that the above
graph-cover constructions can be applied repeatedly to obtain additional
interesting classes of Tanner graphs.

\begin{example}[Iterated Graph-Cover Construction]
  \label{ex:graph:cover:construction:3}

  Starting with the proto-matrix $\matr{A}$ from
  Example~\ref{ex:graph:cover:construction:1}, we consider two iterated
  graph-cover constructions. In the first case, we apply \GCCone\ and then
  \GCCtwo, and in the second case we apply \GCCtwo\ twice.
  \begin{itemize}

  \item Consider the matrix $\matrB$ obtained from the matrix $\matrA$ using
    \GCCone, like in Example~\ref{ex:graph:cover:construction:1}. The
    resulting matrix $\matrB$ is shown
    in~\eqref{eq:ex:graph:cover:constructions:1:B:1} and will be called
    $\matrA^{(1)}$ in this example, since it is considered to be a
    proto-matrix by itself,
    cf.~Figure~\ref{fig:graph:cover:construction:3}(a). Based on the ``cutting
    line'' shown in Figure~\ref{fig:graph:cover:construction:3}(a), we define
    the matrices $\matrA^{(1)}_0$ and $\matrA^{(1)}_1$ as follows: the
    non-zero part of $\matrA^{(1)}_0$ equals the non-zero part of the lower
    triangular part of $\matrA^{(1)}$ and the non-zero part of
    $\matrA^{(1)}_1$ equals the non-zero part of the upper triangular part of
    $\matrA^{(1)}$. (Clearly, $\matrA^{(1)} = \matr{A}^{(1)}_0 +
    \matr{A}^{(1)}_1$.) Applying the procedure from
    Example~\ref{ex:graph:cover:construction:2},
    Figure~\ref{fig:graph:cover:construction:3}(c) shows a typical part of the
    matrix $\omatrB^{(1)}$ that is obtained when \GCCtwo\ is used to construct
    a graph cover of $\matrA^{(1)}$.

  \item Consider the graph-cover $\omatrB$ obtained from $\matrA$ using
    \GCCtwo, like in Example~\ref{ex:graph:cover:construction:1}. The
    resulting matrix $\omatrB$ is shown
    in~\eqref{eq:ex:graph:cover:constructions:1:oB:1} and will be called
    $\matrA^{(2)}$ in this example, since it is considered to be a
    proto-matrix by itself,
    cf.~Figure~\ref{fig:graph:cover:construction:3}(b). Based on the ``cutting
    line'' shown in Figure~\ref{fig:graph:cover:construction:3}(b), we define
    the matrices $\matrA^{(2)}_0$ and $\matrA^{(2)}_1$ as follows: the
    non-zero part of $\matrA^{(2)}_0$ equals the non-zero part of the lower
    triangular part of $\matrA^{(2)}$ and the non-zero part of
    $\matrA^{(2)}_1$ equals the non-zero part of the upper triangular part of
    $\matrA^{(2)}$. (Clearly, $\matrA^{(2)} = \matr{A}^{(2)}_0 +
    \matr{A}^{(2)}_1$.)  Applying the procedure from
    Example~\ref{ex:graph:cover:construction:2},
    Figure~\ref{fig:graph:cover:construction:3}(d) shows a typical part of the
    matrix $\omatrB^{(2)}$ that is obtained when \GCCtwo\ is used to construct
    a graph cover of $\matrA^{(2)}$.

  \end{itemize}

  We observe a large difference in the positions of the non-zero entries in
  $\omatrB^{(1)}$ and $\omatrB^{(2)}$.
  \begin{itemize}

  \item In the first case, the two graph-cover constructions are
    ``incompatible'' and the positions of the non-zero entries in
    $\omatrB^{(1)}$ follow a ``non-simple'' or ``pseudo-random'' pattern. As
    we will see in Example~\ref{ex:iterated:gcc:LDPC:code:construction:1}
    with the help of simulation results, such Tanner graphs can lead to
    time-varying LDPC convolutional codes with very good performance.

  \item In the second case, the two graph-cover constructions are
    ``compatible'' in the sense that $\omatrB^{(2)}$ can be obtained from the
    proto-matrix $\matrA$ by applying \GCCtwo\ with suitable matrix
    collections $\{ \matrA_{\ell} \}_{\ell \in \setL}$ and $\{ \matrP_{\ell}
    \}_{\ell \in \setL}$. As such, the positions of the non-zero entries of
    $\omatrB^{(2)}$ follow a relatively ``simple'' or ``non-random'' pattern,
    which leads to a time-invariant LDPC convolutional code. \exend

  \end{itemize}

\end{example}

\begin{figure}
  \begin{center}
    \subfigure[Matrix $\matrA^{(1)}$]
    {
      \includegraphics[width=0.46\columnwidth,]
                      {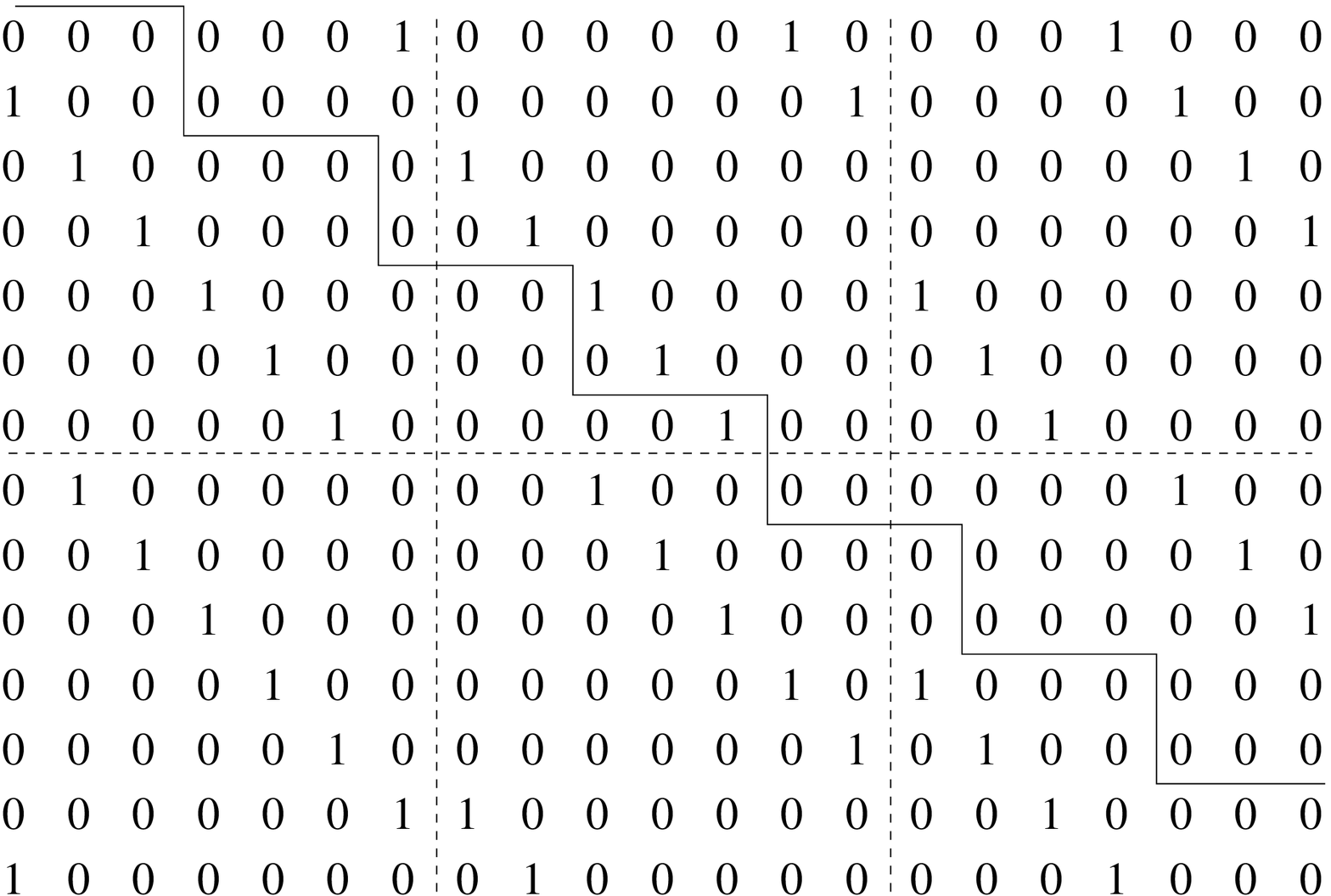}
 
    }
    \subfigure[Matrix $\matrA^{(2)}$]
    {
      \includegraphics[width=0.46\columnwidth]
                      {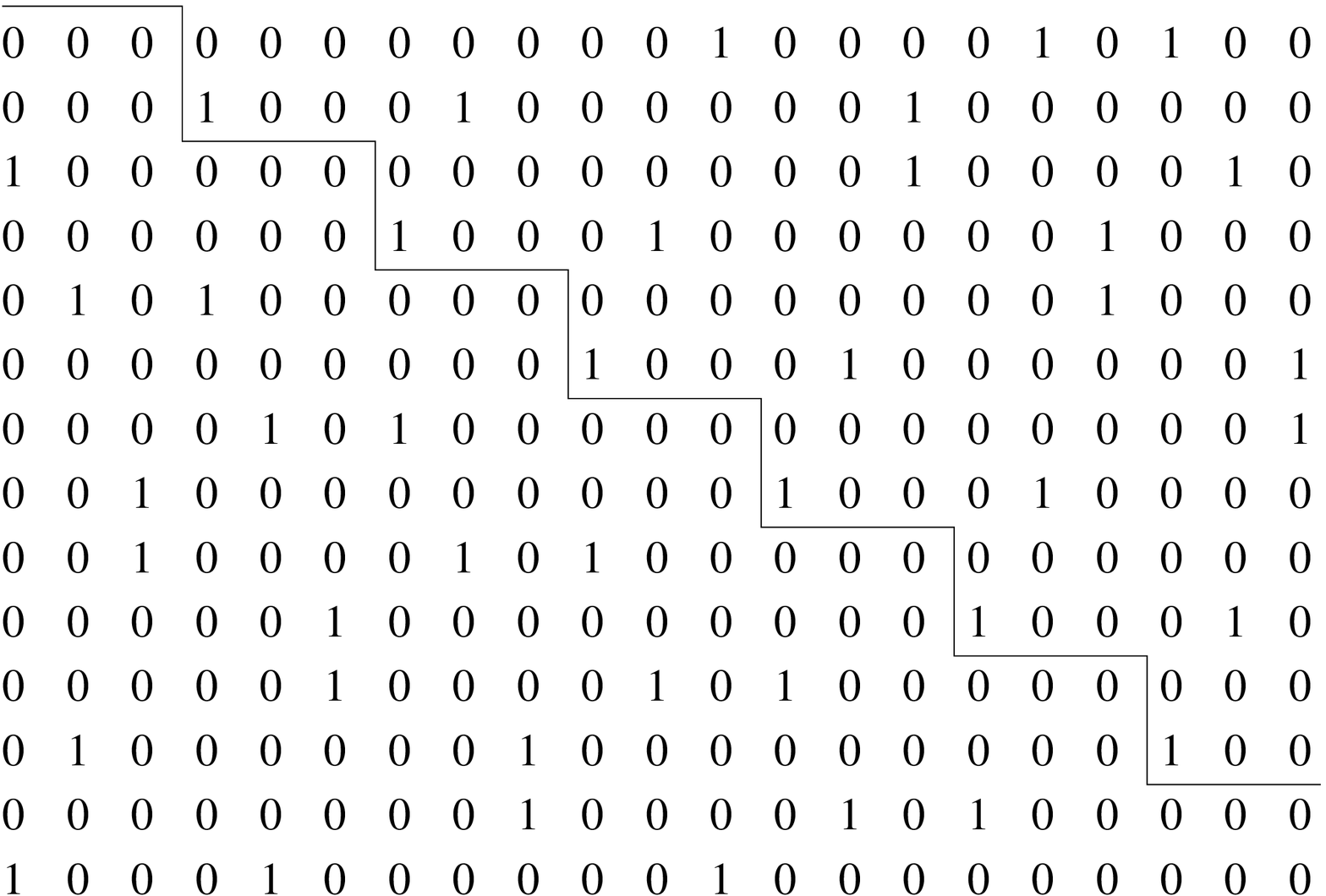}
    }

    \subfigure[Part of matrix $\omatrB^{(1)}$]
    {
      \includegraphics[width=0.46\columnwidth,]
                      {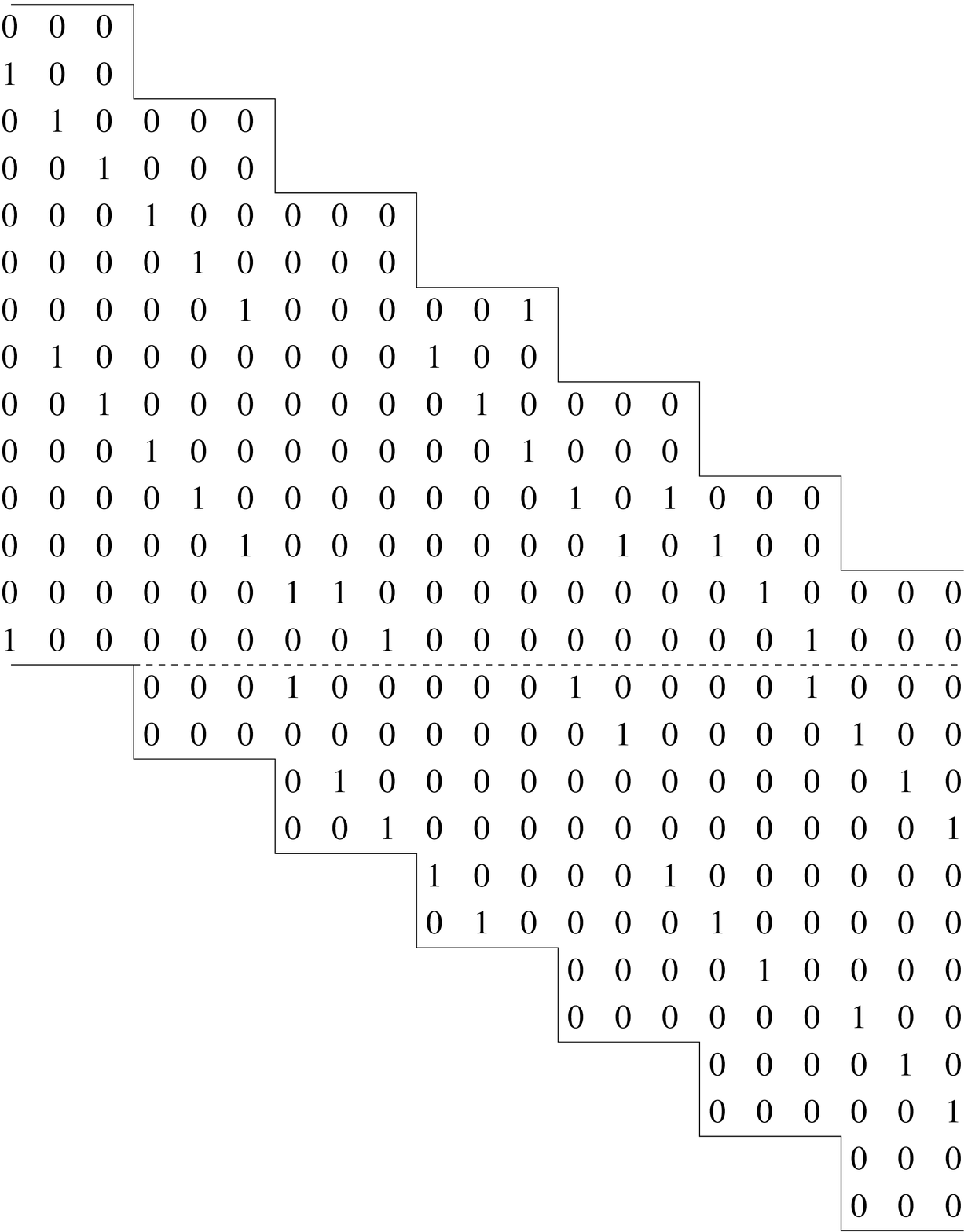} 
    }
    \subfigure[Part of matrix $\omatrB^{(2)}$]
    {
      \includegraphics[width=0.46\columnwidth,]
                      {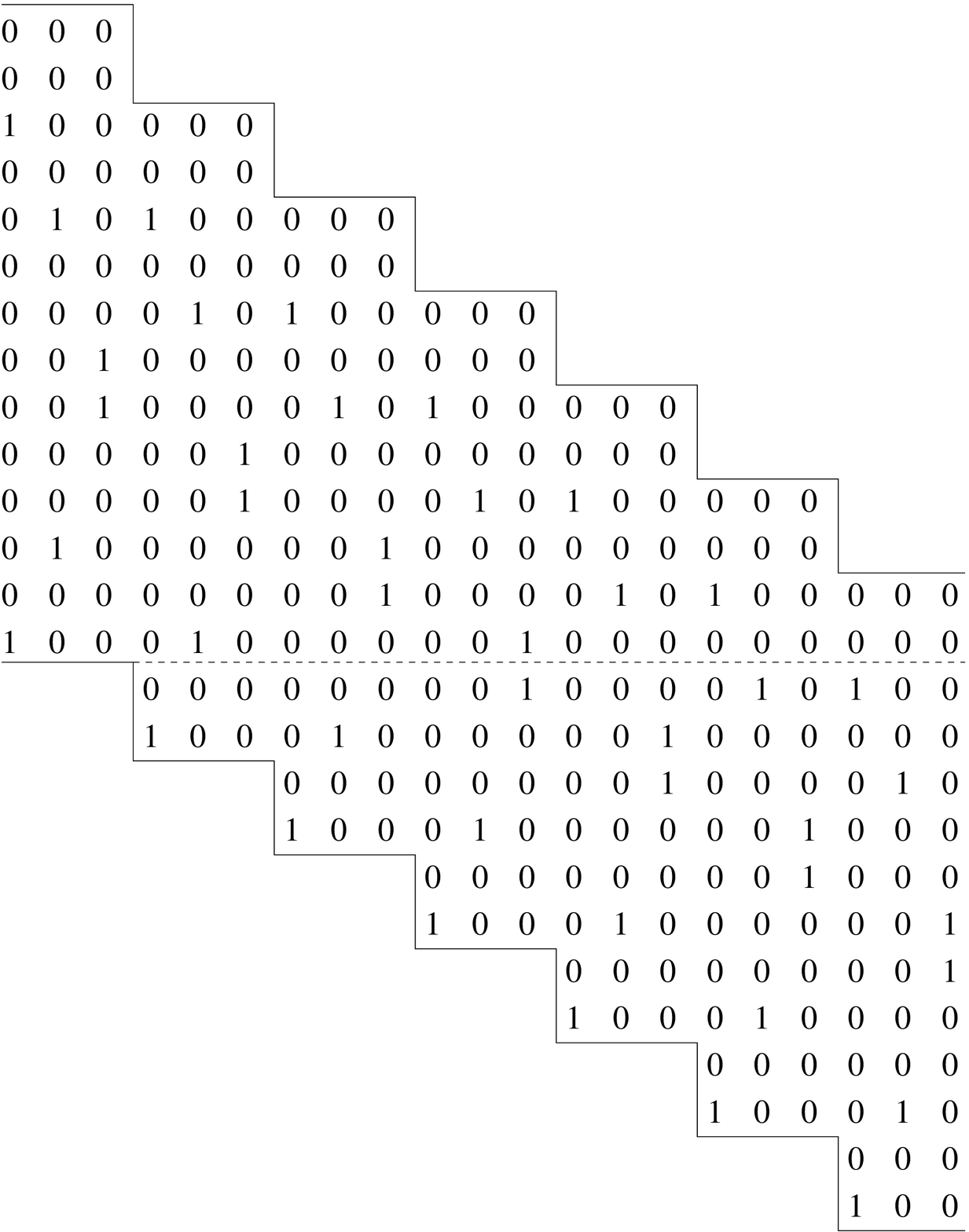}
    }
  \end{center}
  \caption{Matrices appearing in
    Example~\ref{ex:graph:cover:construction:3}. (See main text for details.)}
  \label{fig:graph:cover:construction:3}
\end{figure}

The above procedure of obtaining two matrices that add up to a matrix is
called ``cutting a matrix''. Actually, we will also use this term if there is
no simple cutting line, as in the above examples, and also if the matrix is
written as the sum of more than two matrices
(cf.~Example~\ref{example:unwrap:1} and the paragraphs after it).

\subsection{Revisiting the Tanner and the JFZ Unwrapping Techniques}
\label{sec:unwrapping:1:rev}

In Section~\ref{sec:unwrapping:1} we introduced two techniques, termed the
Tanner and the JFZ unwrapping techniques, to derive convolutional codes from
block codes. In this subsection we revisit these unwrapping techniques. In
particular, we show how they can be cast in terms of graph covers and how the
two unwrapping techniques are connected.

Because of the importance of the coding-theoretic notion of
shortening~\cite{MacWilliams:Sloane:77} for this subsection, we briefly
revisit this concept. Let $\matrH$ be a parity-check matrix that defines some
length-$n$ binary code $\code{C}$. We say that the length-$(n\!-\!1)$ code
$\code{C}'$ is obtained by shortening $\code{C}$ at position $i$ if
\begin{align*}
  \code{C}'\!\!
    &= \Bigm\{\!\!
         (v_0, \ldots, v_{i-1}, v_{i+1}, \ldots, v_{n-1}) \in \GF{2}^{n-1}
       \!\!\Bigm|\!
         \vv \in \code{C}, v_i = 0
       \!\!\Bigm\}\!.
\end{align*}
In terms of parity-check matrices, a possible parity-check matrix $\matrH'$ of
$\code{C}'$ is obtained by deleting the $i$-th column of $\matrH$. In terms of
Tanner graphs, this means that the Tanner graph $\graphT(\matrH')$ is obtained
from $\graphT(\matrH)$ by removing the $i$-th variable node, along with its
incident edges. In the following, we will also use the term ``shortening'' to
denote this graph modification procedure.

Now, to explain the Tanner unwrapping technique in terms of graph covers,
consider the quasi-cyclic block code $\codeCQC{r}$ defined by the
polynomial parity-check matrix $\matrHQC{r}(X)$ of size $\mA \times \nA$,
i.e.,
\begin{align*}
  \codeCQC{r}
    &= \Bigl\{
         \vv(X) \in \shortFtwoxmodr^{\nA}
         \Bigm| 
         \matrHQC{r}(X) \cdot \vv(X)^\tr = \vect{0}^\tr
       \Bigr\},
\end{align*}
where the polynomial operations are performed modulo $X^r - 1$ (see also
Remark~\ref{rem:graph:cover:construction:1:variation:1}). As already mentioned
in Section~\ref{sec:unwrapping:1}, the Tanner unwrapping technique is simply
based on dropping these modulo computations. More precisely, with a
quasi-cyclic block code $\codeCQC{r}$, we associate the convolutional code
\begin{align*}
  \codeCconv
    &= \Bigl\{
         \vv(D) \in \GF{2}[D]^{\nA}
         \Bigm|
         \matrHconv(D) \cdot \vv(D)^\tr = \vect{0}^\tr
       \Bigr\}
\end{align*}
with polynomial parity-check matrix
\begin{align*}
  \matrHconv(D)
    &\defeq
       \left.
         \matrHQC{r}(X)
       \right|_{X = D}.
         %\label{eq:from:X:to:D:1:rev}
\end{align*}
Again, the change of indeterminate from $X$ to $D$ indicates the lack of
modulo $D^r-1$ operations.

In the following we will give, with the help of an example, two
interpretations of the Tanner unwrapping technique in terms of graph covers.

\begin{example}
  \label{ex:Tanner:unwrapping:1:rev:1}

  Unwrapping the quasi-cyclic block code $\codeCQC{r}$ that was considered in
  Remark~\ref{rem:graph:cover:construction:1:variation:1}, we obtain a
  rate-$1/3$ time-invariant convolutional code
  \begin{align*}
    \codeCconv
      &= \Bigl\{
           \vv(D) \in \GF{2}[D]^{\nA}
           \Bigm| 
           \matrHconv(D) \cdot \vv(D)^\tr = \vect{0}^\tr
         \Bigr\}
  \end{align*}
  with polynomial parity-check matrix
  \begin{align*}
    \matrHconv(D)
      \defeq
        \begin{bmatrix}
          D^1 & D^2 & D^4 \\
          D^6 & D^5 & D^3
        \end{bmatrix}.
  \end{align*}
  Consider now the infinite graph covers that were constructed in
  Example~\ref{ex:graph:cover:construction:1:infinite:1} using \GCCone, in
  particular $\graphT(\matrB)$. Let $\code{C}\big( \graphT(\matrB) \big)$ be
  the set of codewords defined by the Tanner graph $\graphT(\matrB)$. Then the
  convolutional code $\codeCconv$ is a shortened version of $\code{C}\big(
  \graphT(\matrB) \big)$ where all codeword bits corresponding to negative
  time indices have been shortened. Therefore, the Tanner graph of
  $\codeCconv$ is given by the Tanner graph in
  Figure~\ref{fig:graph:cover:constructions:1:variation:2}(b)(bottom), where
  all bit nodes with negative time indices, along with their incident edges,
  are removed. Clearly, this bit-node and edge removal process implies
  decreasing the degrees of some check nodes. In fact, some check nodes become
  obsolete, because their degree is decreased to zero. \exend
\end{example}

Therefore, one interpretation of the Tanner unwrapping technique in terms of
graph covers is that the Tanner graph of the convolutional code is obtained by
taking a suitable graph cover of the same proto-graph that was used to
construct the quasi-cyclic LDPC code, along with some suitable shortening.

\begin{example}
  \label{ex:Tanner:unwrapping:1:rev:2}

  We continue Remark~\ref{rem:graph:cover:construction:1:variation:1} and
  Example~\ref{ex:Tanner:unwrapping:1:rev:1}. Clearly, in the same way as the
  block code $\codeCbarQC{r}$ is equivalent to the block code $\codeCQC{r}$,
  we can define a code $\codeCbarconv$ (with parity-check matrix
  $\matrHbarconv$) that is equivalent to $\codeCconv$. The observations in
  Remark~\ref{rem:graph:cover:connections:1} and
  Example~\ref{ex:Tanner:unwrapping:1:rev:1} can then be used to show that the
  Tanner graph of $\matrHbarconv$ equals a graph cover of the Tanner graph
  $\matrHbarQC{r}$, along with some suitable shortening. \exend
\end{example}

Therefore, the second interpretation of the Tanner unwrapping in terms of graph
covers is that the Tanner graph of the convolutional code is obtained by
taking a suitable graph cover of the Tanner graph of the quasi-cyclic code,
along with some suitable shortening.

Now turning our attention to the JFZ unwrapping technique, recall from
Section~\ref{sec:unwrapping:1} that this method is based on writing a
parity-check matrix $\matrHbar$ of some block code $\codeCbar$ as the sum
$\matrHbar = \sum_{\ell \in \setL} \matrH_{\ell} \ (\text{in $\Z$})$ of a
collection of matrices $\{ \matrH_{\ell} \}_{\ell \in \setL}$. The
convolutional code is then defined to be
\begin{align}
  \codeCbarconv
    &\defeq
       \Bigl\{
         \ovv \in \GF{2}^{\infty}
         \Bigm| 
         \matrHbarconv \cdot \ovv^\tr = \vect{0}^\tr
       \Bigr\},
         \label{eq:code:C:conv:JFZ:unwrapping:1:rev}
\end{align}
where
\begin{align} 
  \matrHbarconv
    &\defeq
      \left[
      \begin{array}{@{}c@{\;}c@{\;\;}c@{\;\;}c@{\;\;}c@{\;\;}c@{}}
         \matr{H}_0 \hfill & & & & & \\
         \matr{H}_1 \hfill & \matr{H}_0 \hfill & & & & \\
         \vdots & \vdots & \ddots & & & \\
         \matr{H}_{|\setL|-1} \hfill & \matr{H}_{|\setL|-2} \hfill & 
         \ldots & \matr{H}_0 \hfill & & \\
         & \matr{H}_{|\setL|-1} \hfill & \matr{H}_{|\setL|-2} \hfill & 
         \ldots & \matr{H}_{0} \hfill & \\
         & & \ddots & \ddots & \ddots & \ddots 
      \end{array}
      \right].
         \label{eq:code:C:conv:JFZ:unwrapping:1:rev:pcm:1}
\end{align}

With the help of an example, we now explain how the JFZ unwrapping technique
can be cast in terms of graph-covers.

\begin{example}
  \label{ex:JFZ:unwrapping:2:rev}

  Consider the infinite graph covers that were constructed using \GCCtwo\ in
  Example~\ref{ex:graph:cover:construction:1:infinite:1}, in particular
  $\graphT(\omatrB)$. Let $\code{C}\big( \graphT(\omatrB) \big)$ be the set of
  valid assignments to the Tanner graph $\graphT(\omatrB)$. Moreover, let
  $\matrHbar \defeq \matrH_0 + \matrH_1 + \cdots + \matrH_6 \defeq \matr{0} +
  \matrA_{0,0} + \matrA_{0,1} + \matrA_{1,2} + \matrA_{0,2} + \matrA_{1,1} +
  \matrA_{1,0}$, and let $\codeCbarconv$ be defined as
  in~\eqref{eq:code:C:conv:JFZ:unwrapping:1:rev}. Then the code
  $\codeCbarconv$ is a shortened version of $\code{C}\big( \graphT(\omatrB)
  \big)$, where all codeword bits corresponding to negative time indices have
  been shortened. Therefore, the Tanner graph of $\codeCbarconv$ is given by
  the Tanner graph in
  Figure~\ref{fig:graph:cover:constructions:1:variation:2}(c)(bottom), where
  all the bit nodes with negative time indices are shortened. \exend
\end{example}

In order to connect the unwrapping techniques due to Tanner and due to JFZ, we
show now, with the help of an example, that in fact the unwrapping technique
due to Tanner can be seen as a \emph{special case} of the unwrapping technique
due to JFZ.\footnote{We leave it as an exercise for the reader to show the
  validity of this connection beyond this specific example.}

\begin{example}
  Consider the quasi-cyclic block code defined by the parity-check matrix
  $\matrHbarQC{r} \defeq \omatrB$, where $\omatrB$ was defined
  in~\eqref{eq:ex:graph:cover:constructions:1:oB:1}. Applying the JFZ
  unwrapping technique with the matrix decomposition $\matrHbarQC{r} =
  \matrA_0 + \matrA_1 \ \text{(in $\Z$)}$, with $\matrA_0$ defined
  in~\eqref{eq:rem:graph:cover:connections:1:A:0} and $\matrA_1$ defined
  in~\eqref{eq:rem:graph:cover:connections:1:A:1}, $\matrHbarconv$ turns out
  to equal a submatrix of $\omatrB$
  in~\eqref{eq:ex:graph:cover:constructions:1:variation:2:oB:1}, namely the
  submatrix of $\omatrB$ where the row and column index set are equal to
  $\Zp$. However, the code defined by $\matrHbarconv$ is equivalent to the
  code defined by the Tanner unwrapping technique applied to the quasi-cyclic
  code defined by $\matrHbarQC{r}$. \exend
\end{example}

Therefore, the unwrapping technique due to JFZ is more general. In fact,
whereas the Tanner unwrapping technique leads to \emph{time-invariant}
convolutional codes, the unwrapping technique due to JFZ can, depending on the
parity-check matrix decomposition and the internal structure of the terms in
the decomposition, lead to \emph{time-varying} convolutional codes with
non-trivial period.\footnote{Of course, if the underlying quasi-cyclic block
  code is suitably chosen, then also the Tanner unwrapping technique can yield a
  time-varying convolutional code; however, we do not consider this option
  here.}

Despite the fact that the unwrapping technique due to Tanner is a special case
of the unwrapping technique due to JFZ, it is nevertheless helpful to have
both unwrapping techniques at hand, because sometimes one framework can be
more convenient than the other. We will use both perspectives in the next
section.

We conclude this section with the following remarks. 
\begin{itemize}

\item Although most of the examples in this section have regular bit node
  degree~$2$ and regular check node degree~$3$, there is nothing special about
  this choice of bit and check node degrees; any other choice would work
  equally well.

\item Although all polynomial parity-check matrices that appear in this
  section contain only monomials, this is not required, i.e., the developments
  in this section work equally well for polynomial parity-check matrices
  containing the zero polynomial, monomials, binomials, trinomials, and so on.

\item It can easily be verified that if the matrix $\matrA$ in
  Definition~\ref{def:graph:cover:construction:1} contains only zeros and
  ones, then the graph covers constructed in \GCCone\ and \GCCtwo\ never have
  parallel edges. In particular, if $\matrA$ is the parity-check matrix of a
  block code (like in most examples in this paper), then the constructed graph
  covers never have parallel edges.

  However, if $\matrA$ contains entries that are larger than one, then there
  is the potential for the constructed graph covers to have parallel edges; if
  parallel edges really appear depends then critically on the choice of the
  decomposition $\matrA = \sum_{\ell \in \setL} \matr{A}_{\ell} \text{ (in
    $\Z$)}$ and the choice of the permutation matrices $\{ \matrP_{\ell}
  \}_{\ell \in \setL}$. An example of such a case is the Tanner graph
  construction in
  Section~\ref{sec:ldpc:code:construction:Lentmaier:Kudekar:1}, where $\matrA
  \defeq \bigl[ \begin{smallmatrix} 3 & 3 \end{smallmatrix} \bigr]$ and where
  $\{ \matrA_{\ell} \}_{\ell \in \setL}$ and $\{ \matrP_{\ell} \}_{\ell \in
    \setL}$ are chosen such that parallel edges are avoided in the constructed
  graph cover.

  We note that in the case of iterated graph-cover constructions it can make
  sense to have parallel edges in the intermediate graph covers. However, in
  the last graph-cover construction stage, parallel edges are usually avoided,
  because parallel edges in Tanner graphs typically lead to a weakening of the
  code and/or of the message-passing iterative decoder.

\end{itemize}

\section{Graph-Cover Based Constructions of \\
                LDPC Convolutional Codes}
\label{sec:variations:unwrapping:1}

Although the graph-cover constructions and unwrapping techniques that were
discussed in Sections~\ref{sec:ldpc:convolutional:codes:1}
and~\ref{sec:graph:covers:1} are mathematically quite straightforward, it is
important to understand how they can be applied to obtain LDPC convolutional
codes with good performance and attractive encoder and decoder
architectures. To that end, this section explores a variety of code design
options and comments on some practical issues. It also proposes a new
``random'' unwrapping technique which leads to convolutional codes whose
performance compares favorably to other codes with the same parameters. Of
course, other variations than the ones presented here are possible, in
particular, by suitably combining some of the example constructions.

The simulation results for the codes in this section plot the decoded bit
error rate (BER) versus the signal-to-noise ratio (SNR) $E_{\mathrm{b}}/N_0$
and were obtained by assuming BPSK modulation and an additive white Gaussian
noise channel (AWGNC). All decoders were based on the sum-product
algorithm~\cite{Kschischang:Frey:Loeliger:01} and were allowed a maximum of
100 iterations, with the block code decoders employing a syndrome-check based
stopping rule. For comparing the performance of unwrapped convolutional codes
with their underlying block codes we will use the following metric.

\begin{definition}
  For a convolutional code constructed from an underlying block code, we
  define its ``convolutional gain'' to be the difference in SNR required to
  achieve a particular BER with the convolutional code compared to achieving
  the same BER with the block code. \defend
\end{definition}

The rest of this section is structured as follows. First we discuss the
construction of some \emph{time-invariant} LDPC convolutional codes based on
the Tanner unwrapping technique. In this context we make a simple observation
about how the syndrome former memory can sometimes be reduced without changing
the convolutional code. Secondly, we present a construction of
\emph{time-varying} LDPC convolutional codes based on iterated graph-cover
constructions. An important sub-topic here will be an investigation of the
influence of the ``diagonal cut'' (which is used to define a graph cover) on
the decoding performance.

\subsection{Construction of Time-Invariant LDPC Convolutional Codes 
                    Based on the Tanner Unwrapping Technique}
\label{sec:varying:circulant:size:1}

In this section we revisit a class of quasi-cyclic LDPC codes and their
associated convolutional codes that was studied
in~\cite{Smarandache:Pusane:Vontobel:Costello:09:IT}. As we will see, they are
instances of the quasi-cyclic code construction in
Example~\ref{ex:graph:cover:construction:1} and
Remark~\ref{rem:graph:cover:construction:1:variation:1}, and the corresponding
convolutional code construction based on Tanner's unwrapping technique in
Example~\ref{ex:Tanner:unwrapping:1:rev:1}.

\begin{figure} 
 \centering \psfig{figure=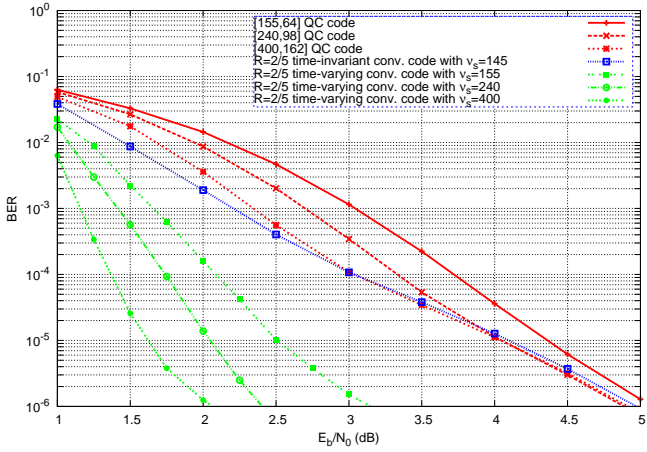, width=\columnwidth} 
 \caption{Performance of three (3,5)-regular quasi-cyclic LDPC block codes and
   their associated time-invariant and time-varying LDPC convolutional
   codes. (Note that the small gaps that appear between the second, third, and
   fourth curves for high signal-to-noise ratios are caused by a slight
   difference in code rates due to the existence of redundant rows in the
   block code parity-check matrices.)}
  \label{fig:sims} 
\end{figure}

\begin{example}
  \label{ex:Tanner:QC:code:1}

  Consider the regular proto-matrix
  \begin{align}
    \matrA
      \defeq
        \begin{bmatrix}
          1 & 1 & 1 & 1 & 1 \\
          1 & 1 & 1 & 1 & 1 \\
          1 & 1 & 1 & 1 & 1
        \end{bmatrix}
  \end{align}
  with $\mA = 3$ and $\nA = 5$. We apply \GCCone, as in
  Example~\ref{ex:graph:cover:construction:1} and
  Remark~\ref{rem:graph:cover:construction:1:variation:1}, with an interesting
  choice of permutation matrices first suggested by Tanner
  \cite{Tanner:00:ISIT} that yields the parity-check matrix
  \begin{align}
    \matrHQC{r}
      &\defeq
         \begin{bmatrix}
           \matrI_1 \hfill & \matrI_2 \hfill  & \matrI_4 \hfill &
           \matrI_8 \hfill & \matrI_{16} \hfill \\
           \matrI_5 \hfill & \matrI_{10} \hfill & \matrI_{20} \hfill &
           \matrI_9 \hfill & \matrI_{18} \hfill \\
           \matrI_{25} \hfill & \matrI_{19} \hfill & \matrI_7 \hfill &
           \matrI_{14} \hfill & \matrI_{28} \hfill
         \end{bmatrix},
           \label{eq:pcm:Tanner:QC:code:1}
  \end{align}
  where as before $\matrI_s$ is an $s$ times left-circularly shifted identity
  matrix of size $r \times r$ and $r > 28$. The corresponding polynomial
  parity-check is
  \begin{align*}
    \matrHQC{r}(X)
      &\defeq
         \begin{bmatrix}
           X^1 \hfill    & X^2 \hfill    & X^4 \hfill    &
           X^8 \hfill    & X^{16} \hfill \\
           X^5 \hfill    & X^{10} \hfill & X^{20} \hfill &
           X^9 \hfill    & X^{18} \hfill \\
           X^{25} \hfill & X^{19} \hfill & X^7 \hfill    &
           X^{14} \hfill & X^{28} \hfill
         \end{bmatrix}.
  \end{align*}
  The resulting quasi-cyclic $(3,5)$-regular LDPC block codes have block
  length $n = 5 \cdot r$. In particular, for $r = 31$, $r = 48$, and $r = 80$,
  we obtain codes of length $155$, $240$, and $400$, respectively, whose
  simulated BER performance results are shown in Figure~\ref{fig:sims}. The
  choice $r = 31$ yields the well-known length-$155$ quasi-cyclic block code
  that was first introduced by Tanner~\cite{Tanner:00:ISIT} (see also the
  discussion in~\cite{Tanner:Sridhara:Sridharan:Fuja:Costello:04:IT}).
  
  Unwrapping these codes by the Tanner unwrapping technique as in Example
  \ref{ex:Tanner:unwrapping:1:rev:1}, we obtain a rate-$2/5$ time-invariant
  convolutional code with $\nus = 145$ defined by the polynomial parity-check
  matrix
  \begin{align*}
    \matrHconv(D)
      &\defeq
         \begin{bmatrix}
           D^1 \hfill    & D^2 \hfill    & D^4 \hfill    &
           D^8 \hfill    & D^{16} \hfill \\
           D^5 \hfill    & D^{10} \hfill & D^{20} \hfill &
           D^9 \hfill    & D^{18} \hfill \\
           D^{25} \hfill & D^{19} \hfill & D^7 \hfill    &
           D^{14} \hfill & D^{28} \hfill
         \end{bmatrix}.
  \end{align*}
  Its decoding performance is also shown in Figure~\ref{fig:sims} under the
  label ``$R = 2/5$ time-invariant conv.\ code with $\nus = 145$.'' We
  conclude this example with a few remarks.
  \begin{itemize}

  \item Figure~\ref{fig:sims} shows that the convolutional code exhibits a
    ``convolutional gain'' of between $0.5\ \mathrm{dB}$ and $0.7\
    \mathrm{dB}$ compared to the $[155,64]$ quasi-cyclic LDPC block code at
    moderate BERs and that the gain remains between $0.15\ \mathrm{dB}$ and
    $0.3\ \mathrm{dB}$ at lower BERs.

  \item Note that the polynomial parity-check matrix $\matrHconv(D)$ that is
    obtained by the Tanner unwrapping technique is independent of the
    parameter $r$ of the polynomial parity-check matrix $\matrHQC{r}(X)$, as
    long as $r$ is strictly larger than the largest exponent appearing in
    $\matrHQC{r}(X)$. Moreover, for $r \to \infty$, the Tanner graph of
    $\matrHQC{r}(X)$ is closely related to the Tanner graph of
    $\matrHconv(D)$, and so it is not surprising to see that, for larger $r$,
    the decoding performance of quasi-cyclic LDPC block codes based on
    $\matrHQC{r}(X)$ tends to the decoding performance of the LDPC
    convolutional based on $\matrHconv(D)$, as illustrated by the two curves
    labeled ``$[240,98]$ QC code'' and ``$[400,162]$ QC code'' in
    Figure~\ref{fig:sims}.

  \item The permutation matrices (more precisely, the circulant matrices) that
    were used for constructing the quasi-cyclic codes in this example were
    \emph{not} chosen to optimize the Hamming distance or the pseudo-weight
    properties of the code. In particular, a different choice of circulant
    matrices may result in better high-SNR performance, i.e., in the so-called
    ``error floor'' region of the BER curve. For choices of codes with better
    Hamming distance properties, we refer the reader
    to~\cite{Hug:Bocharova:Johannesson:Kudryashov:Satyukov:10:ISIT}.
  
  \item The remaining curves in Figure~\ref{fig:sims} will be discussed in
    Example~\ref{ex:iterated:gcc:LDPC:code:construction:1}. \exend

  \end{itemize}
\end{example}

We conclude this subsection with some comments on the syndrome former memory
$\ms$ of the convolutional codes obtained by the Tanner unwrapping technique,
in particular how this syndrome former memory $\ms$ can sometimes be reduced
without changing the convolutional code.

Assume that we have obtained a polynomial parity-check matrix $\matrHconv(D)$
from $\matrHQC{r}(X)$ according to the Tanner method. Clearly, the syndrome
former memory $\ms$ is given by the largest exponent that appears in
$\matrHconv(D)$. In some instances there is a simple way of reducing $\ms$
without changing the convolutional code. Namely, if $e$ is the \emph{minimal}
exponent that appears in the polynomials of a given row of $\matrHconv(D)$,
then the polynomials in this row of $\matrHconv(D)$ can be divided by $D^e$.
We illustrate this syndrome former memory reduction for the small
convolutional code that appeared in
Example~\ref{ex:Tanner:unwrapping:1:rev:1}.

\begin{example}
  \label{ex:syndrome:former:memory:reduction:1}

  Applying the Tanner unwrapping technique to the polynomial parity-check
  matrix $\matrHQC{r}(X)$ of the quasi-cyclic LDPC code with $r = 7$ in
  Remark~\ref{rem:graph:cover:construction:1:variation:1}, we obtain
  $\matrHconv(D)$ of a rate-$1/3$ time-invariant LDPC convolutional code, as
  shown in Example~\ref{ex:Tanner:unwrapping:1:rev:1}, with syndrome former
  memory $\ms = 6$. Following the procedure discussed in the paragraph above,
  the first and second rows of $\matrHconv(D)$ can be divided by $D^1$ and
  $D^3$, respectively, to yield an equivalent convolutional code with syndrome
  former memory $\ms = 3$ and polynomial parity-check matrix
  \begin{align}
    \matrHconv(D)
      &= \begin{bmatrix}
           D^0 & D^1 & D^3 \\
           D^3 & D^2 & D^0
         \end{bmatrix}.
  \end{align}
  Figure \ref{fig:ex:syndrome:former:memory:reduction:1} shows parts of the
  corresponding scalar parity-check matrix $\matrHbarconv$ for $\ms = 3$,
  together with the original scalar parity-check matrix for $\ms = 6$, and
  illustrates the equivalence of the two matrices in the sense that only the
  ordering of the rows is different, which does not affect the corresponding
  convolutional code. In this example, the order of the even-numbered rows
  stays the same, while the odd-numbered rows are shifted by four
  positions. The equivalence of the two parity-check matrices can be seen by
  noting that the parity-check matrix, outside of the diagonal structure, is
  filled with zeros. \exend
\end{example}

\subsection{Construction of Time-Varying LDPC Convolutional Codes
                    Based on Iterated Graph-Cover Constructions}
\label{sec:time:varying:ldpc:convolutional:codes:1}

As was seen in Example~\ref{ex:graph:cover:construction:3}, interesting graph
covers can be obtained by combining \GCCone\ with \GCCtwo, or
vice-versa. Inspired by that example, this subsection considers iterated
graph-cover constructions for constructing Tanner graphs of LDPC convolutional
codes, in particular of time-varying LDPC convolutional codes.

\begin{definition}
  \label{def:iterated:gcc:LDPC:code:construction:1}

  Based on a combination of \GCCone, \GCCtwo, and the code-shortening concept
  introduced in Section~\ref{sec:unwrapping:1:rev}, we propose the following
  construction of LDPC convolutional codes.
  \begin{enumerate}

  \item We start with a proto-matrix $\matrA$ of size $\mA \times \nA$.

  \item \label{item:iterated:gcc:first:lifting:1} We apply \GCCone\ to
    $\matrA$ with finite-size permutation matrices and obtain the matrix
    $\matrA'$.

  \item \label{item:iterated:gcc:second:lifting:1} We apply \GCCtwo\ to
    $\matrA'$ with permutation matrices that are bi-infinite Toeplitz matrices
    and obtain the matrix $\matrA''$.

  \item \label{item:iterated:gcc:Tanner:unwrapping:1} Finally, looking at
    $\matrA''$ as the parity-check matrix of a bi-infinite convolutional code,
    we obtain the parity-check matrix of a convolutional code by shortening
    the code bit positions corresponding to negative time indices.

  \end{enumerate}
  Here, Steps~\ref{item:iterated:gcc:second:lifting:1}
  and~\ref{item:iterated:gcc:Tanner:unwrapping:1} can be seen as an
  application of the JFZ unwrapping method. \defend
\end{definition}

\begin{figure} 
  \begin{center}
    \includegraphics[width=0.5\textwidth]{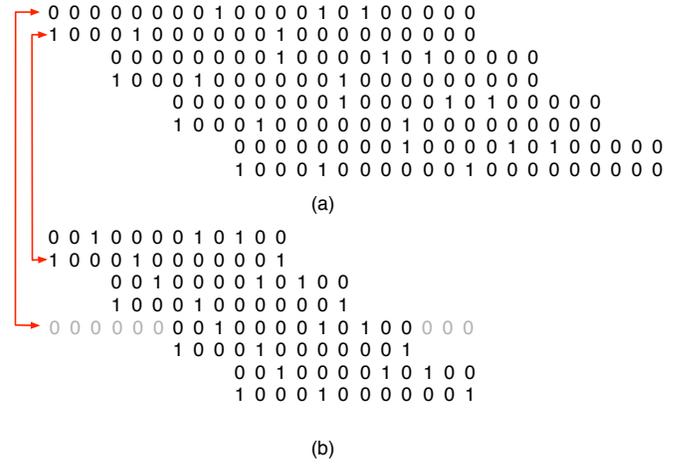}
  \end{center}
  \caption{Parts of the scalar parity-check matrices (see~\eqref{eq:TIPCC})
    corresponding to the two equivalent LDPC convolutional codes with syndrome
    former memories (a) $\ms = 6$ and (b) $\ms = 3$.}
  \label{fig:ex:syndrome:former:memory:reduction:1}
\end{figure}

The following example shows how this construction can be applied to obtain
LDPC convolutional codes with excellent performance. (In the example, where
suitable, we will refer to the analogous matrices of
Example~\ref{ex:graph:cover:construction:3} and
Figure~\ref{fig:graph:cover:construction:3} that were used to illustrate the
iterated graph-cover construction.)

\begin{example}
  \label{ex:iterated:gcc:LDPC:code:construction:1}

  Based on Definition~\ref{def:iterated:gcc:LDPC:code:construction:1}, we
  construct an LDPC convolutional code by performing the following steps.
  \begin{enumerate}

  \item We start with the same regular proto-matrix $\matrA$ as in
    Example~\ref{ex:Tanner:QC:code:1}, for which $\mA = 3$ and $\nA = 5$.

  \item We apply \GCCone\ to $\matrA$ with permutation matrices that are
    circulant matrices of size $r \times r$ and obtain the parity-check matrix
    $\matrA' = \matrHQC{r}$ shown in~\eqref{eq:pcm:Tanner:QC:code:1}, which is
    the analogue of $\matrA^{(1)}$ in
    Figure~\ref{fig:graph:cover:construction:3}(a).

  \item We apply \GCCtwo\ to $\matrA' = \matrHQC{r}$ with permutation matrices
    that are bi-infinite Toeplitz matrices and obtain a new parity-check
    matrix $\matrA''$. This is analogous to the transition of the matrix
    $\matrA^{(1)}$ in Figure~\ref{fig:graph:cover:construction:3}(a) to the
    matrix $\omatrB^{(1)}$ in
    Figure~\ref{fig:graph:cover:construction:3}(c). The ``diagonal cut'' is
    obtained by alternately moving $\nA = 5$ units to the right and then $\mA
    = 3$ units down.

  \item Finally, we obtain the desired convolutional code by shortening the
    code bit positions corresponding to negative time indices.

  \end{enumerate}
  For the choices $r = 31$, $48$, $80$, this construction results in
  rate-$2/5$ time-varying convolutional codes with syndrome former memory $\ms
  = 30$, $47$, $79$, respectively, and with constraint length $\nus = (\ms+1)
  \cdot \nA = 155$, $240$, $400$, respectively. The label ``time-varying'' is
  indeed justified because the convolutional codes constructed here can be
  expressed in the form of the parity-check matrix in~\eqref{eq:PCC} with a
  suitable choice of syndrome former memory $\ms$, non-trivial period $\Ts$,
  and submatrices $\bigl\{ \matrH_i(t) \bigr\}_i$.

  The decoding performance of these codes is shown in Figure~\ref{fig:sims},
  labeled ``R = 2/5 time-varying conv.\ code with $\nus = \ldots$''. As
  originally noted in \cite{Pusane:Smarandache:Vontobel:Costello:07:ISIT}, we
  observe that these three LDPC convolutional codes achieve significantly
  better performance at a BER of $10^{-6}$ than the other codes shown in this
  plot, namely with ``convolutional gains'' of $2.0\ \mathrm{dB}$ for the
  $\nus = 155$ convolutional code, $2.4\ \mathrm{dB}$ for the $\nus = 240$
  convolutional code, and $2.8\ \mathrm{dB}$ for the $\nus = 400$
  convolutional code, compared to the three respective underlying LDPC block
  codes.

  In order to compare these codes based on a given decoding processor
  (hardware) complexity, we consider a block code of length $n = \nus$
  (see~\cite{Costello:Pusane:Bates:Zigangirov:06:ITA}
  and~\cite{Costello:Pusane:Jones:Divsalar:07:ITA}). The above time-varying
  convolutional code for $r = 31$ has constraint length $\nus = (\ms+1) \cdot
  c = 155$, and hence approximately the same processor complexity as the
  quasi-cyclic block code of length $n = 155$ in Figure~\ref{fig:sims} and the
  time-invariant convolutional code with $\nus = 145$ in
  Figure~\ref{fig:sims}, but it achieves large gains compared to both of these
  codes. We note, in addition, that the performance of the time-varying
  convolutional code with constraint length $\nus = 400$ is quite remarkable,
  since, at a BER of $10^{-5}$, it performs within $1\ \mathrm{dB}$ of the
  iterative decoding threshold of $0.965\ \mathrm{dB}$, while having the same
  processor complexity as a block code of length only $n = 400$. In Section
  \ref{sec:cost}, we discuss some possible reasons for these ``convolutional
  gains,'' along with their associated implementation costs in terms of
  decoder memory and decoding delay. \exend
\end{example}

We make the following observations with respect to the above definition and
example.

\begin{itemize}

\item The LDPC code construction in the above example yields time-varying LDPC
  convolutional codes with syndrome former memory $\ms \leq r - 1$ and period
  $\Ts = r$. Most importantly, varying $r$ in the above construction leads to
  different LDPC convolutional codes. This is in contrast to the Tanner
  unwrapping technique discussed in
  Section~\ref{sec:varying:circulant:size:1}, where the obtained LDPC
  convolutional code is independent of the parameter $r$, as long as $r$ is
  strictly larger than the largest exponent in $\matrHQC{r}(X)$.

\item As mentioned previously in Example~\ref{ex:graph:cover:construction:3},
  the iterated graph-cover construction based on the combination of \GCCone\
  and \GCCtwo\ yields Tanner graphs that have a ``pseudo-random'' structure, a
  structure that seems to be beneficial as indicated by the above simulation
  results. (We remark that the improved performance of the time-varying LDPC
  convolutional codes obtained by unwrapping a randomly constructed LDPC block
  code was first noted by Lentmaier \textit{et al.}
  in~\cite{Lentmaier:Truhachev:Zigangirov:01:Problems}.)

\item Instead of constructing a first parity-check matrix as in
  Step~\ref{item:iterated:gcc:first:lifting:1} of
  Definition~\ref{def:iterated:gcc:LDPC:code:construction:1}, one can also
  start with any other (randomly or non-randomly constructed, regular or
  irregular) parity-check matrix, and still achieve a ``convolutional gain.''
  The next example is an illustration of this point.

\end{itemize}

\begin{figure} 
  \centering \includegraphics[width=\columnwidth]
                             {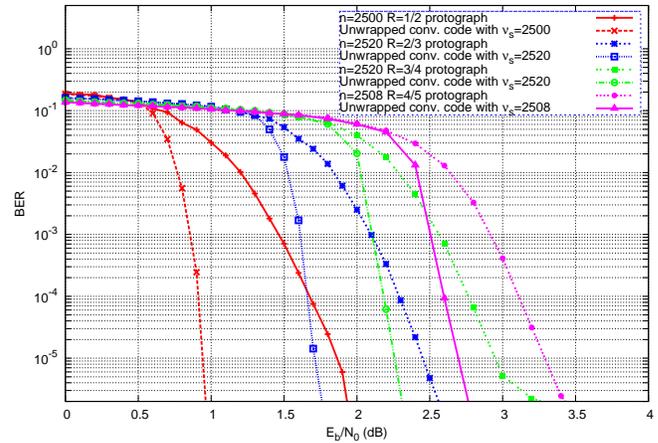} 
  \caption{Performance of a family of irregular proto-graph-based
    LDPC block codes and the associated time-varying LDPC convolutional
    codes.}\label{fig:JPLsims}
\end{figure}

\begin{example}
  \label{example:divsalar:codes:derivatives:1}

  As was done in~\cite{Costello:Pusane:Jones:Divsalar:07:ITA}, one can replace
  the parity-check matrix that was constructed in
  Step~\ref{item:iterated:gcc:first:lifting:1} of
  Definition~\ref{def:iterated:gcc:LDPC:code:construction:1} by an irregular
  LDPC block code with optimized iterative decoding thresholds. In particular,
  one can start with the parity-check matrix of the rate-$1/2$ irregular
  proto-graph-based code from~\cite{Divsalar:Jones:Dolinar:Thorpe:05:GLOBECOM}
  with an iterative decoding threshold of $0.63\ \mathrm{dB}$, and several of
  its punctured versions. Figure~\ref{fig:JPLsims} shows simulation results
  for the obtained block and convolutional codes. Each simulated block code
  had a block length of about $2500$, with code rates ranging from $1/2$ to
  $4/5$. We see that ``convolutional gains'' ranging from $0.6\ \mathrm{dB}$
  to $0.9\ \mathrm{dB}$ at a BER of $10^{-5}$ were obtained.

  Similarly, it was shown in~\cite{Pusane:Zigangirov:Costello:06:ICC} that an
  LDPC convolutional code derived from a randomly constructed rate-$1/2$
  irregular LDPC block code with block length $2400$ outperformed the
  underlying code by almost $0.8\ \mathrm{dB}$ at a BER of $10^{-5}$. The
  degree distribution of the underlying LDPC block code was fully optimized
  and had an iterative decoding threshold of $0.3104\
  \mathrm{dB}$~\cite{Richardson:Shokrollahi:Urbanke:01:IT}. \exend
\end{example}

Of course, there are other ways of applying the ``diagonal cut'' in
Step~\ref{item:iterated:gcc:second:lifting:1} of
Example~\ref{ex:iterated:gcc:LDPC:code:construction:1}, and so it is
natural to investigate the influence of different ``diagonal cuts'' on the
decoding performance. We will do this in the next few paragraphs by extending
the discussion that was presented right after Example~\ref{example:unwrap:1}.

We start by assuming that the matrix after
Step~\ref{item:iterated:gcc:first:lifting:1} of
Definition~\ref{def:iterated:gcc:LDPC:code:construction:1} has size $m \times
n$, and define $\eta \defeq \gcd(m, n)$. Then, for any positive integer $\ell$
that divides $\eta$, we can perform a ``diagonal cut'' where we alternately
move $c' = \ell \cdot (n/\eta)$ units to the right and then $c' - b' \defeq
\ell \cdot (m/\eta)$ units down (i.e., $b' = \ell \cdot \bigl( (n-m)/\eta
\bigr)$. With this, the obtained convolutional code is a periodically
time-varying LDPC convolutional code with rate $R' = b' / c' = b / c$,
syndrome former memory $\ms' = (n/c') - 1 = (\eta/\ell) - 1$, period $\Ts' =
\ms' + 1 = n / c' = \eta / \ell$, and constraint length $\nus' = c' \cdot
(\ms' + 1) = n$. (Note that the syndrome former memory $\ms'$ depends on
$\ell$, but the constraint length $\nus'$ is independent of $\ell$.)

\begin{figure} 
  \begin{center}
\psfrag{label01}{\tiny \sf Block code with n=2048}
\psfrag{label02}{\tiny \sf Conv.\ code with
b'=$\ell$=1, c'=2, m'$_{\!\textrm{\sf s}}$=1023}
\psfrag{label03}{\tiny \sf Conv.\ code with
  b'=$\ell$=2, c'=4, m'$_{\!\textrm{\sf s}}$=511}
\psfrag{label04}{\tiny \sf Conv.\ code with
  b'=$\ell$=4, c'=8, m'$_{\!\textrm{\sf s}}$=255}
\psfrag{label05}{\tiny \sf Conv.\ code with
  b'=$\ell$=8, c'=16, m'$_{\!\textrm{\sf s}}$=127}
\psfrag{label06}{\tiny \sf Conv.\ code with
  b'=$\ell$=16, c'=32, m'$_{\!\textrm{\sf s}}$=63}
\psfrag{label07}{\tiny \sf Conv.\ code with 
  b'=$\ell$=32, c'=64, m'$_{\!\textrm{\sf s}}$=31}
\psfrag{label08}{\tiny \sf Conv.\ code with 
  b'=$\ell$=64, c'=128, m'$_{\!\textrm{\sf s}}$=15}
\psfrag{label09}{\tiny \sf Conv.\ code with 
  b'=$\ell$=128, c'=256, m'$_{\!\textrm{\sf s}}$=7}
\psfrag{label10}{\tiny \sf Conv.\ code with 
  b'=$\ell$=256, c'=512, m'$_{\!\textrm{\sf s}}$=3}
\psfrag{label11}{\tiny \sf Conv.\ code with 
  b'=$\ell$=512, c'=1024, m'$_{\!\textrm{\sf s}}$=1}
\psfrag{label12}{\tiny \sf Conv.\ code with 
  b'=$\ell$=1024, c'=2048, m'$_{\!\textrm{\sf s}}$=0}
     \includegraphics[width=\columnwidth]{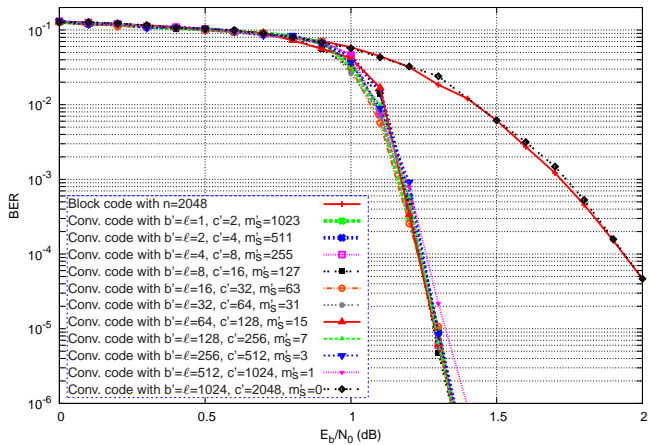}
  \end{center} 
  \caption{Performance of a family of LDPC convolutional codes obtained from a
    $(3,6)$-regular LDPC block code using different step sizes.}
  \label{fig:results400}
\end{figure}

\begin{example}
  \label{ex:diagonal:cuts:big:ldpc:block:codes:1}

  Here we simulate the performance of some LDPC convolutional codes obtained
  according to the above generalization of the ``diagonal cut.'' Namely, we
  start with a randomly-constructed $(3,6)$-regular LDPC block code based on a
  parity-check matrix of size $1024 \times 2048$. Therefore $m = 1024$, $n =
  2048$, and $\eta \defeq \gcd(m, n) = 1024$. (Note that $c' = \ell \cdot
  (n/\eta) = 2\ell$ and $b' = \ell \cdot \bigl( (n-m)/\eta \bigr) = \ell$ in
  this case.)  Figure \ref{fig:results400} shows the performance of the
  resulting family of LDPC convolutional codes, where $\ell$ varies in powers
  of $2$ from $1$ to $1024$, each with constraint length $\nus' = 2048$. We
  make the following observations. First, the case $\ell = 1024$ is not
  interesting because it results in $\ms' = 0$, i.e., it is a trivial
  concatenation of copies of the block code, and so the BER is the same as for
  the underlying block code. Secondly, for all other choices of $\ell$, the
  constructed codes perform very similarly, each exhibiting a sizable
  ``convolutional gain'' compared to the block code, although the syndrome
  former memory $\ms'$ is different in each case. \exend
\end{example}

A special case of the above code construction deserves mention. When $\eta =
1$, i.e., $m$ and $n$ are relatively prime, the only possible step size is
obtained by choosing $\ell = \eta = 1$, which results in the above-mentioned
uninteresting case of trivial concatenations of copies of the block
code. However, all-zero columns can be inserted in the parity-check matrix
such that a value of $\eta > 1$ is obtained, which allows a step size to be
chosen that results in a convolutional code with $\ms' > 0$.  The variable
nodes corresponding to the all-zero columns are not transmitted, i.e., they
are punctured, so that the rate corresponds to the size of the original
parity-check matrix.

For the ``diagonal cut'' LDPC convolutional code constructions discussed
above, the unwrapped convolutional codes have the minimum possible constraint
length $\nus'$, which is equal to the block length of the underlying block
code. Although this is a desirable property for practical implementation, we
do not need to limit ourselves to diagonal cuts in general.

Inspired by the graph-cover construction of
Figures~\ref{fig:graph:cover:constructions:2}(b)
and~\ref{fig:graph:cover:constructions:2}(d) in
Example~\ref{ex:graph:cover:construction:2}, instead of a ``diagonal cut'' we
now consider a ``random cut,'' which we define as a partition of the
parity-check matrix into two matrices that add up (over $\Z$) to the
parity-check matrix. Despite the randomness of this approach, several of the
key unwrapping properties of the ``diagonal cut'' are preserved. For example,
the computational complexity per decoded bit does not change, since the degree
distributions of the resulting codes are all equal.\footnote{This is
  guaranteed by choosing a random partition of the block code parity-check
  matrix and then using this partition to construct one period of the
  time-varying convolutional code parity-check matrix.} However, the LDPC
convolutional codes based on a ``random cut'' typically require larger
decoding processor sizes as a result of increased code constraint lengths.

\begin{figure} 
\psfrag{label01}{\tiny \sf Block code with n=2048}
\psfrag{label02}{\tiny \sf Conv.\ code with b'=1, c'=2 ($\ell$=1)}
\psfrag{label03}{\tiny \sf Conv.\ code with random partition}
  \centering \includegraphics[width=\columnwidth]{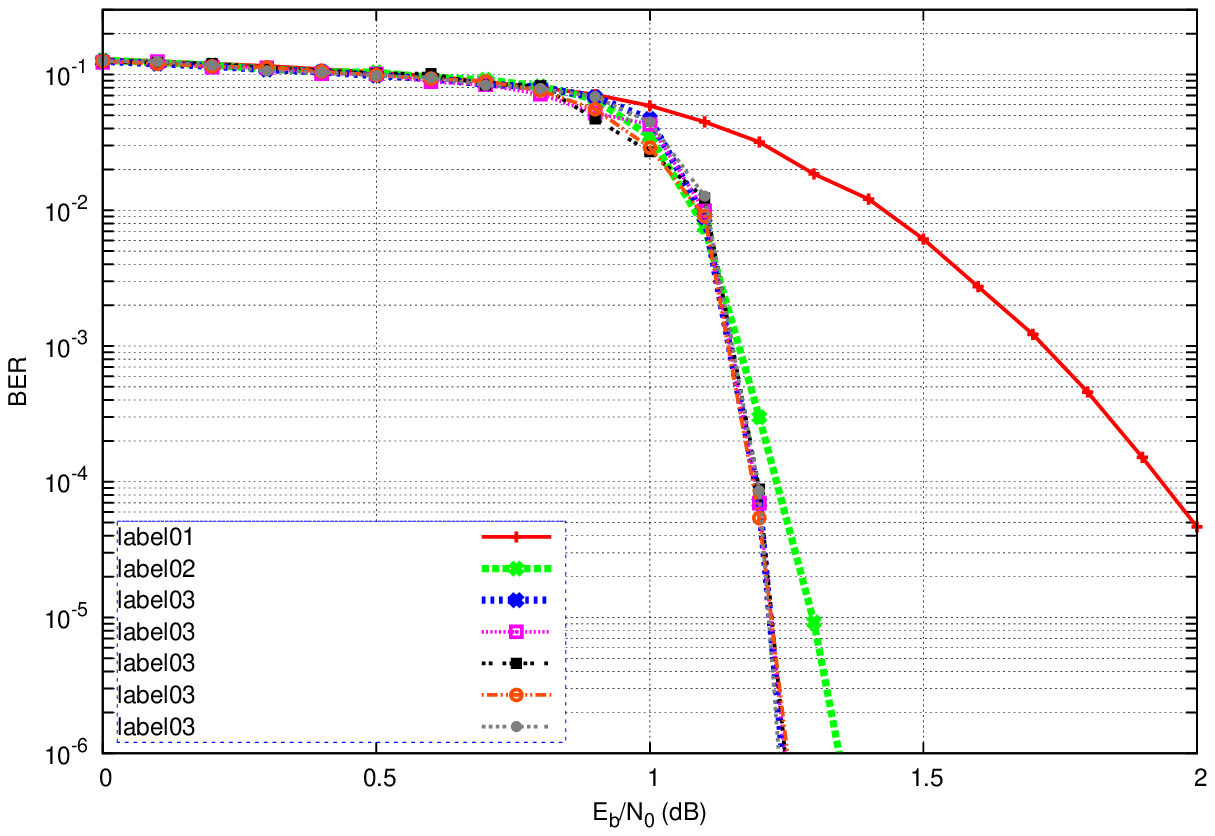} 
  \caption{Performance of ``randomly unwrapped'' LDPC convolutional codes
    obtained from a $(3,6)$-regular LDPC block code using random
    partitions.}
  \label{fig:partition}
\end{figure}

\begin{example}
  \label{ex:random:cuts:big:ldpc:block:codes:1}

  We continue Example~\ref{ex:diagonal:cuts:big:ldpc:block:codes:1}; however,
  instead of performing ``diagonal cuts,'' we perform ``random cuts.'' Figure
  \ref{fig:partition} shows the performance of five such LDPC convolutional
  codes, each with rate $1/2$ and constraint length $\nus' = 4096$, compared
  to the underlying block code and the LDPC convolutional code constructed in
  Example~\ref{ex:diagonal:cuts:big:ldpc:block:codes:1} (with parameters $\ell
  = 1$, $b' = 1$, $c' = 2$, and $\nus' = 2048$). We note that the increase in
  constraint length from $\nus' = 2048$ to $\nus' = 4096$ due to the ``random
  cut'' results in a small additional coding gain in exchange for the
  larger decoding processor size. \exend
\end{example}

Finally, we note that, for a size $m\times n$ sparse parity-check matrix
$\matr{H}$ with $p$ nonzero entries, there are a total of $2^{mn}$ possible
ways of choosing a random cut. However, due to the sparsity, there are only
$2^{p}$ distinct random cuts, where $p \ll m \cdot n$.

\section{Connections to Other LDPC Codes Based on
                Graph-Cover Constructions}
\label{sec:other:LDPC:code:constructions:1}

In this section we briefly discuss some other graph-cover-based LDPC code
constructions proposed in the literature, namely by Ivkovic \textit{et
  al.}~\cite{Ivkovic:Chilappagari:Vasic:08:1}, by Divsalar \textit{et
  al.}~\cite{Divsalar:Jones:Dolinar:Thorpe:05:GLOBECOM, CCSDS:07:1}, by
Lentmaier \textit{et al.}\cite{Lentmaier:Fettweis:Zigangirov:Costello:09:ITA,
  Lentmaier:Mitchell:Fettweis:Costello:10:ITA}, and by Kudekar \textit{et
  al.}~\cite{Kudekar:Richardson:Urbanke:10:online}.

\subsection{LDPC Code Construction by Ivkovic \textit{et al.}}

The LDPC code construction by Ivkovic \textit{et al.}\
in~\cite{Ivkovic:Chilappagari:Vasic:08:1} can be seen as an application of the
graph-cover construction in Figures~\ref{fig:graph:cover:constructions:2}(b)
and~\ref{fig:graph:cover:constructions:2}(d) in
Example~\ref{ex:graph:cover:construction:2}. Namely, in terms of our notation,
Ivkovic \textit{et al.}~\cite{Ivkovic:Chilappagari:Vasic:08:1} start with a
parity-check matrix $\matrH$, choose the set $\setL \defeq \{ 0, 1 \}$, a
collection of zero-one matrices $\{ \matrH_0, \matrH_1 \}$ such that $\matrH =
\matrH_0 + \matrH_1 \ (\text{in $\Z$})$, and the collection of permutation
matrices $\{ \matrP_0, \matrP_1 \}_{\ell \in \setL} \defeq \bigl\{
\bigl[ \begin{smallmatrix} 1 & 0 \\ 0 & 1 \end{smallmatrix} \bigr],
\bigl[ \begin{smallmatrix} 0 & 1 \\ 1 & 0 \end{smallmatrix} \bigr]
\bigr\}$. Most importantly, the decomposition of $\matrH$ into $\matrH_0$ and
$\matrH_1$ is done such that trapping sets that were present in the Tanner
graph of $\matrH$ are not present in the Tanner graph of the new parity-check
matrix. In addition, Ivkovic \textit{et al.} give guarantees on the
relationship between the minimum Hamming distances of the old and new
code.\footnote{See also the discussion of similar results
  in~\cite[Appendix~J]{Smarandache:Vontobel:09:1:subm}.}

\subsection{LDPC Code Construction by Divsalar \textit{et al.}}

One of the LDPC code constructions by Divsalar \textit{et al.}\
in~\cite{Divsalar:Jones:Dolinar:Thorpe:05:GLOBECOM, CCSDS:07:1} is the
so-called rate-$1/2$ AR4JA LDPC code construction, which was also considered
earlier in Example~\ref{example:divsalar:codes:derivatives:1}. A particularly
attractive, from an implementation perspective, version of this code
construction is obtained by an iterated graph-cover construction procedure,
where each graph-cover construction is based on a cyclic cover, as in the
application of \GCCone\ in Example~\ref{ex:graph:cover:construction:1}.
Although cyclic covers result in simplified encoding and decoding circuitry,
codes based on cyclic covers are known to have the disadvantage that the
minimum Hamming distance is upper bounded by a number that is a function of
the proto-graph structure~\cite{Smarandache:Vontobel:09:1:subm,
  Butler:Siegel:10:1}. However, because the cyclic cover of a cyclic cover of
the proto-graph is \emph{not necessarily} a cyclic cover of the proto-graph,
such disadvantages are avoided to a certain extent in the AR4JA LDPC code
construction. Nevertheless, ultimately the minimum Hamming distance of such
codes will also be upper bounded by some number; however, these bounds usually
become relevant only beyond the code length of interest.\footnote{For this
  statement we assume that the degree of the first cover is fixed.}

\subsection{LDPC Code Construction by Lentmaier \textit{et al.} and 
                    Kudekar \textit{et al.}}
\label{sec:ldpc:code:construction:Lentmaier:Kudekar:1}

The LDPC code constructions by Lentmaier \textit{et
  al.}~\cite{Lentmaier:Fettweis:Zigangirov:Costello:09:ITA,
  Lentmaier:Mitchell:Fettweis:Costello:10:ITA} and Kudekar \textit{et
  al.}~\cite{Kudekar:Richardson:Urbanke:10:online} can also be seen as
iterated graph-cover constructions. We now describe a specific instance of
this construction.
\begin{itemize}

\item It starts with a proto-matrix $\matrA \defeq \bigl[ \begin{smallmatrix}
    3 & 3 \end{smallmatrix} \bigr]$.

\item The first graph-cover construction is very similar to the bi-infinite
  graph-cover construction in
  Example~\ref{ex:graph:cover:construction:1:infinite:1} and
  Figure~\ref{fig:graph:cover:constructions:1:variation:2}. Namely, in terms
  of our notation, we define the set $\setL \defeq \{ 0, 1, 2, 3, 4, 5 \}$,
  the collection of matrices $\{ \matrA_{\ell} \}_{\ell \in \setL}$ with
  $\matrA_0 = \matrA_1 = \matrA_2 = \bigl[ \begin{smallmatrix} 1 &
    0 \end{smallmatrix} \bigr]$ and $\matrA_3 = \matrA_4 = \matrA_5 =
  \bigl[ \begin{smallmatrix} 0 & 1 \end{smallmatrix} \bigr]$, and the
  collection of permutation matrices $\{ \matrP_{\ell} \}_{\ell \in \setL}$
  with $\matrP_0 \defeq \matrT_0$, $\matrP_1 \defeq \matrT_1$, $\matrP_2
  \defeq \matrT_2$, $\matrP_3 \defeq \matrT_0$, $\matrP_4 \defeq \matrT_1$,
  $\matrP_5 \defeq \matrT_2$, where as before $\matrT_s$ is a bi-infinite
  Toeplitz matrix with zeros everywhere except for ones in the $s$-th diagonal
  below the main diagonal.

\item The second graph-cover construction is a random graph-cover construction
  of cover-degree $M$.

\item The code is shortened. Namely, for some positive integer $L$ all
  codeword indices corresponding to values outside the range $[-LM,LM]$ are
  shortened.\footnote{Although this code construction method could be
    presented such that the shortening is done between the two graph-cover
    construction steps, namely by shortening all codeword indices that
    correspond to values outside the range $[-L,L]$, we have opted to present
    the code construction such that the shortening is done after the two
    graph-cover construction steps. In this way, the structure of the code
    construction description matches better the description in
    Definition~\ref{def:iterated:gcc:LDPC:code:construction:1}.}

\end{itemize}
We now point out some differences between this code construction and the LDPC
convolutional code construction in
Definition~\ref{def:iterated:gcc:LDPC:code:construction:1}. Namely, the LDPC
code ensemble constructed above has the following properties.
\begin{itemize}

\item The first graph-cover construction is based on bi-infinite Toeplitz
  permutation matrices, and the second graph-cover construction is based on
  finite-size permutation matrices.

\item The analysis focuses on the case where $M$ and $L$ go to infinity (in
  that order), i.e., for a fixed $L$ the parameter $M$ tends to
  infinity. Afterwards, $L$ tends to infinity.

\item The number of check nodes with degree smaller than $6$ in the Tanner
  graph is proportional to $M$.

\item In~\cite{Kudekar:Richardson:Urbanke:10:online}, for the binary erasure
  channel, when $M$ and $L$ go to infinity (in that order), Kudekar \textit{et
    al.} prove that the sum-product algorithm decoding threshold for a slight
  variation of the above-mentioned ensemble of codes equals the maximum
  a-posteriori decoding threshold for the ensemble of $(3,6)$-regular LDPC
  codes. This is a very remarkable property!
  (In~\cite{Lentmaier:Sridharan:Costello:Zigangirov:10:IT:appear}, using
  density evolution methods, Lentmaier \textit{et al.} give numerical evidence
  that this statement might also hold for binary-input output-symmetric
  channels beyond the binary erasure channel.)

\end{itemize}
On the other hand, the codes constructed in
Definition~\ref{def:iterated:gcc:LDPC:code:construction:1} have the following
properties. (We assume that the underlying block code is a $(3,6)$-regular
LDPC code.)
\begin{itemize}

\item The first graph-cover construction is based on finite-size permutation
  matrices, and the second graph-cover construction is based on bi-infinite
  Toeplitz permutation matrices.

\item In a typical application of this construction, $r$ is fixed.

\item The number of check nodes with degree smaller than $6$ in the Tanner
  graph of the LDPC convolutional code is proportional to $r$.

\item For a binary-input output-symmetric channel, the performance of the
  unterminated LDPC convolutional code under the continuous sliding window
  sum-product algorithm decoding discussed in Section \ref{sec:implement}
  improves with increasing $r$ (see, e.g., Fig. \ref{fig:sims}), but the
  ultimate asymptotic threshold of such unterminated decoding is
  unknown.\footnote{Lentmaier \textit{et al.} have shown
    in~\cite{Lentmaier:Fettweis:Zigangirov:Costello:09:ITA}
    and~\cite{Lentmaier:Mitchell:Fettweis:Costello:10:ITA} that properly
    terminated LDPC convolutional codes become equivalent to the LDPC block
    codes constructed by Kudekar \textit{et al.}
    in~\cite{Kudekar:Richardson:Urbanke:10:online} and inherit their excellent
    asymptotic threshold properties, but whether this is true for unterminated
    LDPC convolutional codes is still an open question.}

\end{itemize}
The differences between these two code families come mainly from the fact that
the codes constructed by Lentmaier \textit{et al.} and Kudekar \textit{et al.}
are essentially block codes, although sophisticated ones, whereas the codes in
Definition~\ref{def:iterated:gcc:LDPC:code:construction:1} are convolutional
codes, along with their advantages and disadvantages. In particular, the way
the limits of the parameters are taken, there is a significant difference in
the fraction of check nodes with degree strictly smaller than $6$. Namely, in
the case of the codes by Lentmaier \textit{et al.} and Kudekar \textit{et al.}
this fraction is a fixed non-zero function of $L$ (here we assume fixed $L$
and $M \to \infty$), whereas in the case of the codes considered in this
paper, this fraction is zero (here we assume fixed $r$ and an unterminated
convolutional code).

We conclude this section with the following remarks. Namely, although the
convolutional codes in
Definition~\ref{def:iterated:gcc:LDPC:code:construction:1} may not enjoy the
same asymptotic thresholds as the block code constructions by Lentmaier
\textit{et al.} and by Kudekar \textit{et al.}, they lend themselves to a
continuous decoding architecture, as described in Section \ref{sec:implement},
which can be advantageous in certain applications, such as data streaming,
without a predetermined frame structure. More importantly, however, it is very
encouraging that the simulation results reported in this paper indicate that
sizable ``convolutional gains'' are already visible for very reasonable
constraint/code lengths. In the next section we discuss some possible reasons
for these gains. Finally, it is worth noting that, as the block lengths and
associated constraint lengths of the constructions presented in this section
become larger, the observed ``convolutional gains'' will become smaller since
the block code results will approach their respective thresholds.

\section{Analysis of Derived \\ LDPC Convolutional Codes}
\label{sec:analysis}

This section collects some analytical results about LDPC convolutional
codes. In particular, we compare the existence~/ non-existence of cycles in
LDPC block and LDPC convolutional codes, we present some properties of
pseudo-codewords, and we discuss the --- mostly moderate --- cost increase in
decoder complexity that is incurred by going from LDPC block to LDPC
convolutional codes.

\subsection{Graph-Cycle Analysis}
\label{sec:cycleanalysis}

It is well known that cycles in the Tanner graph representation of a sparse
code affect message-passing iterative decoding algorithms, with short cycles
generally pushing the performance further away from optimum. (Indeed, attempts
to investigate and minimize these effects have been made in
\cite{Tian:Jones:Villasenor:Wesel:04:TCOM} and
\cite{Ramamoorthy:Wesel:04:ISIT}, where the authors propose LDPC code
construction procedures to maximize the connectivity of short cycles to the
rest of the graph, thus also maximizing the independence of the messages
flowing through a cycle.) Hence it is common practice to design codes that do
not contain short cycles, so as to obtain independent messages in at least the
initial iterations of the decoding process.

Avoiding cycles in Tanner graphs also has the benefit of avoiding
pseudo-codewords.\footnote{Here and in the following, pseudo-codewords refer
  to pseudo-codewords as they appear in linear programming (LP)
  decoding~\cite{Feldman:03:diss, Feldman:Wainwright:Karger:05:IT} and in the
  graph-cover-based analysis of message-passing iterative decoding
  in~\cite{Koetter:Vontobel:03:ISTC, Vontobel:Koetter:05:IT:subm}. For other
  notions of pseudo-codewords, in particular computation tree
  pseudo-codewords, we refer to the discussion
  in~\cite{Axvig:Dreher:Morrison:Psota:Perez:Walker:09:IT}.} To see this, let
the active part of a pseudo-codeword be defined as the set of bit nodes
corresponding to the support of the pseudo-codeword, along with the adjacent
edges and check nodes. With this, it holds that the active part of any
pseudo-codeword contains at least one cycle and/or at least one bit node of
degree one. And so, given that the typical Tanner graph under consideration in
this paper does not contain bit nodes of degree one, the active part of a
pseudo-codeword must contain at least one cycle. Therefore, avoiding cycles
implicitly means avoiding pseudo-codewords.\footnote{Note that the support of
  any pseudo-codeword is a stopping set~\cite{Koetter:Vontobel:03:ISTC,
    Vontobel:Koetter:05:IT:subm, Kelley:Sridhara:07:1}.}

Let $\tmatrH$ and $\matrH$ be two parity-check matrices such that
$\graphT(\tmatrH)$ is a graph cover of $\graphT(\matrH)$. It is a well-known
result that any cycle in $\graphT(\tmatrH)$ can be mapped into a cycle in
$\graphT(\matrH)$. This has several consequences. In particular, the girth of
$\graphT(\tmatrH)$ is at least as large as the girth of $\graphT(\matrH)$, and
more generally, $\graphT(\tmatrH)$ contains fewer short cycles than
$\graphT(\matrH)$.\footnote{This observation has been used in many different
  contexts over the past ten years in the construction of LDPC and turbo
  codes; in particular, it was used
  in~\cite{Lentmaier:Truhachev:Zigangirov:01:Problems}, where the authors
  dealt with bounding the girth of the resulting LDPC convolutional codes.}
For the codes constructed in this paper, this means that the unwrapping
process (from block code to convolutional code) can ``break'' some cycles in
the Tanner graph of the block code.

We now revisit some codes that where discussed in earlier sections and analyze
their graph cycle structure using a brute-force search algorithm.\footnote{The
  search technique that we used is based on evaluating the diagonal entries of
  the powers of the matrix $\matrM$ defined
  in~\cite[Eq.~(3.1)]{Stark:Terras:96:1}. Note that this search technique
  works only for counting cycles of length smaller than twice the girth of the
  graph. For searching longer cycles, more sophisticated algorithms are
  needed.} Note that, in order to accurately compare the graph cycle
distributions of two codes with different block/constraint lengths, we compute
the total number of cycles of a given cycle length per block/constraint
length, and divide this number by the block/constraint length.\footnote{For
  LDPC convolutional codes, we have made use of the periodicity of the
  parity-check matrices in order to complete the search in a finite number of
  steps.}

\begin{table}
  \caption{Average (per bit node) number $\bar N_{\ell}$ of cycles 
    of length $\ell$ for the Tanner graphs of the block codes (BCs) of block
    length $n$ and convolutional codes (CCs) of constraint length $\nus$ 
    discussed in Example~\ref{ex:cyc1}. (All Tanner graphs have girth $8$.)}
  \label{table:girth:average:cycle:lengths:example:1}

  \begin{center}
    \begin{tabular}{|l|c|c|c|c|}
      \hline
             &            &              &               \\[-0.26cm]
        Code & $\bar N_8$ & $\bar N_{10}$ & $\bar N_{12}$ \\
      \hline
      \hline
        BC ($n = 155$)
          & $3.000$ & $24.000$ & $146.000$ \\
      \hline
        BC ($n = 240$)
          & $2.600$ & $14.000$ & $93.400$ \\
      \hline
        BC ($n = 400$)
          & $2.200$ & $12.400$ & $70.600$ \\
      \hline
      \hline
        Time-invariant CC ($\nus = 145$)
          & $2.200$ & $12.400$ & $70.200$ \\
      \hline
      \hline
        Time-varying CC ($\nus = 155$)
          & $0.910$ & $8.342$ & $44.813$ \\
      \hline
        Time-varying CC ($\nus = 240$)
          & $0.917$ & $5.338$ & $30.242$ \\
      \hline
        Time-varying CC ($\nus = 400$)
          & $0.675$ & $4.705$ & $24.585$ \\
      \hline
    \end{tabular}
  \end{center}
\end{table}

\begin{example}
  \label{ex:cyc1}

  Consider the LDPC block and convolutional codes that were constructed in
  Examples~\ref{ex:Tanner:QC:code:1}
  and~\ref{ex:iterated:gcc:LDPC:code:construction:1} and whose BER performance
  was plotted in
  Figure~\ref{fig:sims}. Table~\ref{table:girth:average:cycle:lengths:example:1}
  shows the average number of cycles of certain lengths for the Tanner graphs
  of the quasi-cyclic block codes, for the Tanner graph of the corresponding
  time-invariant convolutional code, and for the Tanner graph of the
  time-varying convolutional codes. \exend
\end{example}

\begin{table}
  \caption{Average (per bit node) number $\bar N_{\ell}$ of cycles 
    of length $\ell$ for the Tanner graphs of the block codes (BCs) of block 
    length $n$ and convolutional codes (CCs) of constraint length $\nus$
    discussed in Example~\ref{ex:cyc3}. (All Tanner graphs have girth $4$.)}
  \label{table:girth:average:cycle:lengths:example:2}

  \begin{center}
    \begin{tabular}{|l|c|c|}
      \hline
             &            &            \\[-0.26cm] % &           \\[-0.26cm]
        Code & $\bar N_4$ & $\bar N_6$ \\ % & $\bar N_8$ \\
      \hline
      \hline
        Rate-$1/2$ BC ($n = 2500$)
          & $0.013$ & $0.120$ \\ % & $0.969$ \\
      \hline
        Rate-$2/3$ BC ($n = 2520$)
 	  & $0.065$ & $0.839$ \\ % & $14.349$ \\
      \hline
        Rate-$3/4$ BC ($n = 2520$)
          & $0.136$ & $2.710$ \\ % & $72.573$ \\
      \hline
        Rate-$4/5$ BC ($n = 2508$)
 	  & $0.250$ & $6.544$ \\ % & $235.417$ \\
      \hline
      \hline
        Rate-$1/2$ time-varying CC ($\nus = 2500$)
          & $0.010$ & $0.064$ \\ % & $0.484$ \\
      \hline
        Rate-$2/3$ time-varying CC ($\nus = 2520$)
 	  & $0.044$ & $0.483$ \\ % & $6.957$ \\
      \hline
        Rate-$3/4$ time-varying CC ($\nus = 2520$)
          & $0.091$ & $1.465$ \\ % & $35.204$ \\
      \hline
        Rate-$4/5$ time-varying CC ($\nus = 2508$)
 	  & $0.173$ & $3.622$ \\ % & $112.713$ \\
      \hline
    \end{tabular}
  \end{center}
\end{table}

\begin{example}
  \label{ex:cyc3}

  Table~\ref{table:girth:average:cycle:lengths:example:2} shows the cycle
  analysis results for the rate-$1/2$ proto-graph-based codes that were
  discussed in Example~\ref{example:divsalar:codes:derivatives:1} and whose
  BER performance was plotted in Figure~\ref{fig:JPLsims}. \exend
\end{example}

From Examples \ref{ex:cyc1} and~\ref{ex:cyc3}, we see that many of the short
cycles in the Tanner graphs of the LDPC block codes are ``broken'' to yield
cycles of larger length in the Tanner graphs of the derived LDPC convolutional
codes.

\subsection{Pseudo-Codeword Analysis }
\label{sec:PCanalysis}

This section collects some comments concerning the pseudo-codewords of the
parity-check matrices under consideration in this paper.

We start by observing that many of the statements that were made
in~\cite{Smarandache:Pusane:Vontobel:Costello:09:IT} about pseudo-codewords
can be extended to the setup of this paper. In particular, if some
parity-check matrices $\tmatrH$ and $\matrH$ are such that $\graphT(\tmatrH)$
is a graph cover of $\graphT(\matrH)$, then a pseudo-codeword of $\tmatrH$ can
be ``wrapped'' to obtain a pseudo-codeword of $\matrH$, as is formalized in
the next lemma.

\begin{lemma}
  Let the parity-check matrices $\tmatrH$ and $\matrH$ be such that
  $\graphT(\tmatrH)$ is an $M$-fold graph cover of $\graphT(\matrH)$. More
  precisely, let $\tmatrH = \sum_{\ell \in \setL} \matrH_{\ell} \otimes
  \matrP_{\ell}$ for some set $\setL$, for some collection of parity-check
  matrices $\{ \matrH_{\ell} \}_{\ell \in \setL}$ such that $\matrH =
  \sum_{\ell \in \setL} \matrH_{\ell}$ (in $\Z$), and for some collection of
  $M \times M$ permutation matrices $\{ \matrP_{\ell} \}_{\ell \in
    \setL}$. Moreover, let $\setI$ be the set of column indices of $\matrH$
  and let $\setI \times \setM$ with $\setM \defeq \{ 0, 1, \ldots, M-1 \}$ be
  the set of column indices of $\tmatrH$. With this, if $\tvomega = (\tilde
  \omega_{(i,m)})_{(i,m) \in \setI \times \setM}$ is a pseudo-codeword of
  $\tmatrH$, then $\vomega = (\omega_i)_{i \in \setI}$ with
  \begin{align}
    \omega_i
      &\defeq
         \frac{1}{M}
           \sum_{m \in \setM}
             \tilde \omega_{(i,m)}
             \quad \text{(in $\R$)}
               \label{eq:pcw:projection:1}
  \end{align}
  is a pseudo-codeword of $\matrH$.
\end{lemma}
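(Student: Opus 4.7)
The plan is to verify that $\vomega$ defined by~\eqref{eq:pcw:projection:1} satisfies the defining inequalities of the fundamental cone of $\matrH$: namely $\vomega \geq \zero$ componentwise, and for every check node $j$ of $\matrH$ and every variable node $i$ adjacent to $j$, one has $\omega_i \leq \sum_{i' \in N(j) \setminus \{i\}} \omega_{i'}$. Non-negativity of $\vomega$ is immediate from non-negativity of $\tvomega$ and the averaging formula~\eqref{eq:pcw:projection:1}, so the substance of the proof is the check-node inequalities.

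To establish these, I would first unpack the cover structure explicitly. Index the variable nodes of $\graphT(\tmatrH)$ by $\setI \times \setM$ and the check nodes by $\set{J} \times \setM$, where $\set{J}$ is the set of check-node labels of $\graphT(\matrH)$. For an edge of $\graphT(\matrH)$ joining check $j$ to variable $i$ that comes from some $\ell \in \setL$ (i.e., $[\matrH_{\ell}]_{j,i} = 1$), the permutation matrix $\matrP_{\ell}$ encodes a bijection $\sigma_{\ell} \colon \setM \to \setM$ such that, in $\graphT(\tmatrH)$, the lifted edge connects $(j,m)$ to $(i,\sigma_{\ell}(m))$. With this description in hand, fix a check $j$ and a neighbor $i$ of $j$ in $\graphT(\matrH)$, write down the fundamental-cone inequality for $\tmatrH$ at the lifted check $(j,m)$ and lifted neighbor $(i,\sigma_{\ell}(m))$, and sum these inequalities over $m \in \setM$. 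The left-hand side becomes $\sum_m \tilde{\omega}_{(i,\sigma_{\ell}(m))} = \sum_{m'} \tilde{\omega}_{(i,m')} = M\,\omega_i$ because $\sigma_{\ell}$ is a bijection of $\setM$; each right-hand-side term $\sum_m \tilde{\omega}_{(i',\sigma_{\ell'}(m))}$ telescopes analogously to $M\,\omega_{i'}$. Dividing by $M$ yields the required inequality for $\matrH$.

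The main obstacle will be bookkeeping parallel edges in $\graphT(\matrH)$, i.e., the case where entries of $\matrH$ are strictly greater than one, since then several labels $\ell$ can contribute to the same pair $(j,i)$ and ``neighbor'' must be read in the half-edge sense, with each parallel contribution carried separately through the sum. A careful indexing by half-edges rather than vertices handles this cleanly. A more conceptual alternative that bypasses the bookkeeping altogether is to invoke the graph-cover characterization of pseudo-codewords of Koetter and Vontobel: $\tvomega$ is, up to a positive scaling, the image of an honest codeword in some finite $M'$-cover of $\graphT(\tmatrH)$; that cover is simultaneously an $M M'$-cover of $\graphT(\matrH)$, and the composite two-step projection coincides, again up to a uniform scaling, with the averaging map in~\eqref{eq:pcw:projection:1}. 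Hence $\vomega$ is a pseudo-codeword of $\matrH$ by definition, with no inequalities to verify by hand.
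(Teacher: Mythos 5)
Your proposal is correct and follows essentially the same path as the paper's own (sketch) proof: the paper also indicates two routes, namely verifying the fundamental-cone/polytope inequalities for $\vomega$ directly, and invoking the Koetter--Vontobel characterization of rational pseudo-codewords as projections of codewords in finite graph covers together with the observation that a cover of $\graphT(\tmatrH)$ is automatically a cover of $\graphT(\matrH)$. You supply somewhat more detail than the paper (which omits the details in both cases), including the correct caveat about parallel edges and half-edge bookkeeping in the cone-inequality argument.
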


\begin{proof} (Sketch.)  There are different ways to verify this
  statement. One approach is to show that, based on the fact that $\tvomega$
  satisfies the inequalities that define the fundamental polytope of
  $\tmatrH$~\cite{Koetter:Vontobel:03:ISTC, Vontobel:Koetter:05:IT:subm,
    Feldman:03:diss, Feldman:Wainwright:Karger:05:IT}, $\vomega$ satisfies the
  inequalities that define the fundamental polytope of $\matrH$. (We omit the
  details.)  Another approach is to use the fact that pseudo-codewords with
  rational entries are given by suitable projections of codewords in graph
  covers~\cite{Koetter:Vontobel:03:ISTC, Vontobel:Koetter:05:IT:subm}. So, for
  every pseudo-codeword $\tvomega$ of $\tmatrH$ with rational entries, there
  is some graph cover of $\graphT(\tmatrH)$ with a codeword in it, which, when
  projected down to $\graphT(\tmatrH)$, gives $\tvomega$. However, that graph
  cover of $\graphT(\tmatrH)$ is also a graph cover of $\graphT(\matrH)$, and
  so this codeword, when projected down to $\graphT(\matrH)$, gives $\vomega$
  as defined in~\eqref{eq:pcw:projection:1}. (We omit the details;
  see~\cite{Smarandache:Pusane:Vontobel:Costello:09:IT} for a similar, but
  less general, result.)
\end{proof}

One can then proceed as in~\cite{Smarandache:Pusane:Vontobel:Costello:09:IT}
and show that the AWGNC, the BSC, and the BEC
pseudo-weights~\cite{Wiberg:96:diss, Koetter:Vontobel:03:ISTC,
  Vontobel:Koetter:05:IT:subm, Feldman:03:diss,
  Feldman:Wainwright:Karger:05:IT,
  Forney:Koetter:Kschischang:Reznik:01:chapter} of $\tvomega$ will be at least
as large as the corresponding pseudo-weights of $\vomega$. As a corollary, the
minimum AWGNC, BSC, and BEC pseudo-weights of $\tmatrH$ are, respectively, at
least as large as the corresponding minimum pseudo-weights of
$\matrH$. Similar results can also be obtained for the minimum Hamming
distance.

Because the high-SNR behavior of linear programming decoding is dominated by
the minimum pseudo-weight of the relevant parity-check matrix, the high-SNR
behavior of linear programming decoding of the code defined by $\tmatrH$ is at
least as good as the high-SNR behavior of linear programming decoding of the
code defined by $\matrH$.\footnote{We neglect here the influence of the
  multiplicity of the minimum pseudo-weight pseudo-codewords.}

In general, because of the observations made in
Section~\ref{sec:cycleanalysis} about the ``breaking'' of cycles and the fact
that the active part of a pseudo-codeword must contain at least one cycle, it
follows that the unwrapping process is beneficial for the pseudo-codeword
properties of an unwrapped code, in the sense that many pseudo-codewords that
exist in the base code do not map to pseudo-codewords in the unwrapped
code. It is an intriguing challenge to better understand this process and its
influence on the low-to-medium SNR behavior of linear programming and
message-passing iterative decoders, in particular, to arrive at a better
analytical explanation of the significant gains that are visible in the
simulation plots that were shown in
Section~\ref{sec:variations:unwrapping:1}. To this end, the results
of~\cite{Lentmaier:Fettweis:Zigangirov:Costello:09:ITA}
and~\cite{Kudekar:Richardson:Urbanke:10:online} with respect to some related
code families (see the discussion in
Section~\ref{sec:other:LDPC:code:constructions:1}) will be very helpful, since
they indicate that some of the features of the fundamental polytope deserve
further analysis.

\subsection{Cost of the ``Convolutional Gain''}
\label{sec:cost}

In this subsection, we investigate the cost of the convolutional gain by
comparing several aspects of decoders for LDPC block and convolutional
codes. In particular, we consider the computational complexity, hardware
complexity, decoder memory requirements, and decoding delay. More details on
the various comparisons described in this section can be found in
\cite{Pusane:Jimenez:Sridharan:Lentmaier:Zigangirov:Costello:08:TCOM,
  Costello:Pusane:Bates:Zigangirov:06:ITA,
  Costello:Pusane:Jones:Divsalar:07:ITA}.

LDPC block code decoders and LDPC convolutional code decoders have the same
computational complexity per decoded bit and per iteration since LDPC
convolutional codes derived from LDPC block codes have the same node degrees
(row and column weights) in their Tanner graph representations, which
determines the number of computations required for message-passing decoding.

We adopt the notion of \emph{processor size} to characterize the hardware
complexity of implementing the decoder. A decoder's processor size is
proportional to the maximum number of variable nodes that can participate in a
common check equation. This is the block length $n$ for a block code, since
any two variable nodes in a block can participate in the same check
equation. For a convolutional code, this is the constraint length $\nus$,
since no two variable nodes that are more than $\nus$ positions apart can
participate in the same check equation. The constraint lengths of the LDPC
convolutional codes derived from LDPC block codes of length $n$ satisfy $\nus
\leq n$. Therefore, the convolutional codes have a processor size less than or
equal to that of the underlying block code.

On the other hand, the fully parallel pipeline decoding architecture penalizes
LDPC convolutional codes in terms of decoder memory requirements (and decoding
delay/latency) as a result of the $I$ iterations being multiplexed in space
rather than in time. The pipeline decoder architecture of Figure
\ref{fig:decoder} consists of $I$ identical processors of size $\nus$
performing $I$ decoding iterations simultaneously on independent sections of a
decoding window containing $I$ constraint lengths of received symbols. This
requires $I$ times more decoder memory elements than an LDPC block code
decoder that employs a single processor of size $n=\nus$ performing $I$
decoding iterations successively on the same block of received
symbols. Therefore, the decoder memory requirements and the decoding delay of
the pipeline decoder are proportional to $\nus \cdot I$, whereas the block
decoder's memory and delay requirements are only proportional to $n$. Another
way of comparing the two types of codes, preferred by some researchers, is to
equate the block length of a block code to the memory/delay requirements,
rather than the processor size, of a convolutional code, i.e., to set
$n=\nu_{\textrm{s}}\cdot I$. In this case the block code, now having a block
length many times larger than the constraint length of the convolutional code,
will typically (depending on $I$) outperform the convolutional code, but at a
cost of a much larger hardware processor. Finally, as noted in Section
\ref{sec:ldpc:convolutional:codes:1}, the parallel pipeline decoding
architecture for LDPC convolutional codes can be replaced by a serial looping
decoding architecture, resulting in fewer processors but a reduced throughput
along with the same memory and delay requirements.

In summary, the convolutional gain achieved by LDPC convolutional codes
derived from LDPC block codes comes at the expense of increased decoder memory
requirements and decoding delays. Although this does not cause problems for
some applications that are not delay-sensitive (e.g., deep-space
communication), for other applications that are delay-sensitive (e.g.,
real-time voice/video transmission), design specifications may be met by
deriving LDPC convolutional codes from shorter LDPC block codes, thus
sacrificing some coding gain, but reducing memory and delay requirements, or
by employing a reduced window size decoder, as suggested in the recent paper
by Papaleo \textit{et al.}~\cite{Papaleo:Iyengar:Siegel:Wolf:Corazza:10:ITW},
with a resulting reduction in the ``convolutional gain.''

\section{Conclusions}
\label{sec:discussion}

In this paper we showed that it is possible to connect two known techniques
for deriving LDPC convolutional codes from LDPC block codes, namely the
techniques due to Tanner and due to Jim\'enez-Feltstr\" om and Zigangirov.  This
connection was explained with the help of graph covers, which were also used
as a tool to present a general approach for constructing interesting classes
of LDPC convolutional codes. Because it is important to understand how the
presented code construction methods can be used --- and in particular combined
--- we then discussed a variety of LDPC convolutional code constructions,
along with their simulated performance results.

In the future, it will be worthwhile to extend the presented analytical
results, in particular to obtain a better quantitative understanding of the
low-to-medium SNR behavior of LDPC convolutional codes. In that respect, the
insights in the papers by Lentmaier \textit{et
  al.}~\cite{Lentmaier:Fettweis:Zigangirov:Costello:09:ITA,
  Lentmaier:Mitchell:Fettweis:Costello:10:ITA} and Kudekar \textit{et
  al.}~\cite{Kudekar:Richardson:Urbanke:10:online} on the behavior of related
code families will be valuable guidelines for further investigation.

\section*{Acknowledgments}

The authors would like to thank Chris Jones, Michael Lentmaier, David
Mitchell, Michael Tanner, and Kamil Zigangirov for their valuable discussions
and comments. We also gratefully acknowledge the constructive comments made by
the reviewers.

\end{document}